\pdfoutput = 1
\documentclass[11pt]{article}
\usepackage[utf8]{inputenc}
\usepackage{fullpage}
\usepackage{amssymb}
\usepackage{amsmath}
\usepackage{amsthm}
\usepackage[dvipsnames]{xcolor}
\usepackage{booktabs}

\newcommand{\abs}[1]{\left|#1\right|}
\newcommand{\set}[1]{\left\{#1\right\}}

\newcommand{\paren}[1]{\left(#1\right)}
\DeclareMathOperator{\E}{\mathbb{E}}
\newcommand{\cG}{\mathcal{G}}

\newcommand{\cD}{\mathcal{D}}

\newcommand{\cB}{\mathcal{B}}
\newcommand{\cA}{\mathcal{A}}

\newcommand{\cU}{\mathcal{U}}

\newcommand{\cS}{\mathcal{S}}
\newcommand{\sC}{{\mathcal C}}
\newcommand{\polylog}{\mathrm{polylog}}
\newcommand{\poly}{\mathrm{poly}}
\newcommand{\bR}{{\mathbb R}}
\newcommand{\bZ}{{\mathbb Z}}

\newcommand{\sps}[1]{^{(#1)}}
\newcommand{\one}{\mathbf{1}}
\newcommand{\st}{\textbf{ s.t. }}
\newcommand{\supp}{{\mathsf{supp}}}
\newcommand{\numcap}{{\mathsf{cap}}}
\newcommand{\lcap}{{\mathsf{Lcap}}}

\newcommand{\sign}{{\mathsf{sign}}}
\newcommand{\dkl}{{\mathsf{D_{KL}}}}
\newcommand{\ber}{{\mathsf{Ber}}}
\newcommand{\var}{{\mathsf{Var}}}

\newcommand{\acc}{\textup{``accept''}}
\newcommand{\rej}{\textup{``reject''}}

\usepackage[linktocpage,pagebackref]{hyperref}

\usepackage{cleveref}

\newtheorem{theorem}{Theorem}[section]
\newtheorem{definition}{Definition}[section]
\newtheorem{claim}[theorem]{Claim}
\newtheorem{lemma}[theorem]{Lemma}
\newtheorem{corollary}[theorem]{Corollary}
\newtheorem{remark}{Remark}[section]

\title{Generative Models of Huge Objects}
\date{}
\author{Lunjia Hu\thanks{Stanford University. Supported by the Simons Foundation Collaboration on the Theory of Algorithmic Fairness, Omer Reingold’s NSF Award IIS-1908774, and Moses Charikar's Simons Investigators award.} \and 
Inbal Livni-Navon\thanks{Stanford University. Supported by the Simons Foundation Collaboration on the Theory of Algorithmic Fairness, the Sloan Foundation Grant 2020-13941, and the Zuckerman STEM Leadership Program.} \and 
Omer Reingold\thanks{Stanford University. Supported by the Simons Foundation Collaboration on the Theory of Algorithmic Fairness and the Simons Foundation Investigators award 689988.}
}

\allowdisplaybreaks
\begin{document}
\maketitle
\thispagestyle{empty}
\begin{abstract}

This work initiates the systematic study of explicit distributions that are indistinguishable from a \emph{single} exponential-size combinatorial object. In this we extend the work of Goldreich, Goldwasser and Nussboim (SICOMP 2010) that focused on the implementation of huge objects that are indistinguishable from the uniform distribution, satisfying some global properties (which they coined truthfulness). Indistinguishability from a single object is motivated by the study of generative models in learning theory and regularity lemmas in graph theory. Problems that are well understood in the setting of pseudorandomness present significant challenges and at times are impossible when considering generative models of huge objects. 

We demonstrate the versatility of this study by providing a learning algorithm for huge indistinguishable objects in several natural settings including: dense functions and graphs with a truthfulness requirement on the number of ones in the function or edges in the graphs, and a version of the weak regularity lemma for {\em sparse} graphs that satisfy some global properties. These and other results generalize basic pseudorandom objects as well as notions introduced in algorithmic fairness.  The results rely on notions and techniques from a variety of areas including learning theory, complexity theory, cryptography, and game theory. 
\end{abstract}

\newpage
\tableofcontents
\newpage

\section{Introduction}

A pseudorandom distribution is indistinguishable from the uniform distribution to a set of computationally bounded distinguishers. Pseudorandomness is a cornerstone of many areas of computer science and mathematics. The variability of pseudorandom distributions stems from the different objects they can generate (bit strings, functions, permutations and more) and the different computational bounds that can be imposed on the distinguishers. In the area of cryptography, it is typical to consider powerful distinguishers that are at least polynomial time, giving rise to central notions such as pseudorandom generators~\cite{BlumMi84,Yao82}, pseudorandom functions~\cite{GoldreichGoMi86} and pseudorandom permutations~\cite{LubyR88}. More limited distinguishers give rise to other fundamental notions such as $k$-wise independent hashing and $\varepsilon$-biased distributions~(cf.~\cite{CarterWe81,NaorNa93}).  In the area of explicit combinatorial constructions, we typically try to emulate the uniform distribution by a single object, rather than with a distribution. A primer example is the fundamental notion of expander graphs (see \cite{HooryLiWi06,TCS-010} for surveys), with its multiple variants (including various notions of randomness extractors). These are graphs that are indistinguishable from a uniformly selected graph to a limited set of distinguishers (such as distinguishers that check if a random edge crosses a given cut). 

In these classic areas of pseudorandomness, a distribution, or even a single object, is constructed to emulate a distribution (typically, the uniform distribution). In this paper we ask for a distribution to emulate {\em a single object} (\Cref{table:1}). This reversal may seem absurd from the perspective of pseudorandomness but makes perfect sense from the perspective of generative models. An early exposure of the TOC community to generative models was with respect to the World Wide Web. These were models that produce distributions of graphs that imitate some properties of the Web, such as power law on the degrees of nodes (see~\cite{Mit03} for a survey). At any given point, the web is a single graph, but it is also a very large graph that does not have a simple description. Generative models gave a useful way to analyze, or estimate through experimentation, the expected performance of protocols on the Web. 

Other well studied generative models are the stochastic block model~\cite{HOLLAND1983109} and the more elaborate mixed membership stochastic block model~\cite{Airoldietal}. Consider a graph representing some connections between individuals, such as the connectivity of the social network. The stochastic block model partitions the vertices into disjoint communities and for every two communities assigns a probability of connection. This model represents a distribution over graphs where for each two vertices, an edge is placed independently with the probability assigned to the pair of communities of its end points. (In the mixed-membership model, each vertex is assigned a distribution over communities.) These models help identify useful substructures  within a social structure such as sub-communities or different social roles. But given a single social network, $B^*$, what is an appropriate model to capture it? After all, a model describes a distribution over networks rather than the single network we are trying to explain. A prevalent approach is to aim at the maximum likelihood model. Out of all models, the probability of sampling $B^*$ is maximized under the maximum likelihood model. Heuristics for estimating the maximum likelihood model have been playing a major roles in the generative-model literature and in its application in practice.  It should be noted that the probability that the model would produce $B^*$ is often very small. In this light, the meaningfulness of a maximum-likelihood models may be debated and may depend on a particular setting. 

From the perspective of indistinguishability, it may be more natural to seek a model that produce a distribution that is indistinguishable from $B^*$ to a meaningful set of distinguishers. For example, in the case of the stochastic block model, natural distinguishers are defined by two sets of vertices $U$ and $V$ and ask what is the probability that a random edge in the graph crosses from $U$ to $V$. A stochastic block model that fool all such distinguishers is exactly what is given by the Frieze-Kannan regularity lemma (also known as the weak regularity lemma)~\cite{Frieze}. The indistinguishability perspective on generative models and known connections between learning and pseudrandomness, which we will discuss shortly, are both a motivation as well as the starting point of this work. 

\begin{table}
\centering
\begin{tabular}{lll} 
 \toprule
& What do we imitate? & What do we construct?  \\ 
 \midrule
Pseudorandomness
 & distribution of objects & distribution of objects  \\ 
Explicit construction (e.g.\ expander graphs) & distribution of objects & single object  \\
 Our setup: generative models & single object & distribution of objects \\
  \bottomrule
\end{tabular}
\caption{Comparison between problem settings}
\label{table:1}
\end{table}

\subsection{Overall Goal: Indistinguishable Generative Models of Huge Objects} 
\label{sec:overall-goal}
In many of the applications of generative models, such as modeling the Web or a social network, the objects being modeled are huge. In this paper, we aim at a {\em systematic theory of efficiently learning and implementing huge generative models}. Our models will generate a distribution of objects satisfying some global properties that are indistinguishable from a fixed combinatorial object. Such a theory presents non-trivial challenges that do not manifest themselves neither when generating huge pseudorandom objects, nor in generative models of polynomial-size objects. 

Concretely, we assume that we have access to an object $B^*$ with exponential size. For example, $B^*$ could be a function with an exponentially large domain, or a graph with exponentially many vertices and edges. We are most interested in the case where the object $B^*$ is too large to read or process as a whole, and we have to access it by sampling: for example, when $B^*$ represents a function $f:X\to \{0,1\}$ with $|X|$ being exponentially large, we may access $B^*$ by asking for random pairs $(x,f(x))\in X\times \{0,1\}$ (sample access) or random inputs $x\in X$ conditioned on $f(x) = 1$ (support access). Given access to the huge object $B^*$, our goal is to create a generative model $M$ for $B^*$. Here, our model $M$ represents a distribution over objects, and we want to ensure that this distribution is indistinguishable from $B^*$ to all distinguishers $D$ in a class $\cD$. Specifically, if we use $D^B\in \{\acc,\rej\}$ to denote the output of distinguisher $D$ given sample/support access to object $B$, our indistinguishability requirement is that for every $D\in \cD$,
\[
|\Pr[D^{B^*}= \acc] - \E_{B\sim M}[\Pr[D^{B}= \acc]]|\le \varepsilon.
\]
We aim for building an \emph{efficient} learner $L$ that can output a model $M$ satisfying the indistinguishability requirement above when given sample/support access to the true object $B^*$. 
When $B^*$ is exponentially large (which is the case we are interested in), the output model $M$ also needs to generate exponentially large objects, and thus we cannot expect an efficient learner to directly output $M$.
Instead, we want our learner to output an \emph{efficient implementation} of $M$, which, roughly speaking, is a randomized algorithm that can efficiently provide sample/support access to objects drawn from $M$ (\Cref{def:ordinary-imp,def:random-imp}).

Our goal of learning a generative model $M$ indistinguishable from the true object $B^*$ is analogous to the problem addressed by Goldreich, Goldwasser and Nussboim \cite{goldreich2010} in the area of pseudorandomness.
They study the problem of efficiently implementing a distribution of huge objects, satisfying some global properties, that are indistinguishable from the uniform distribution of such objects. 
Follow-up works of \cite{goldreich2010} such as \cite{NaorNT05,NaorN07} study efficient implementations that are indistinguishable from certain distributions of huge random graphs.
All these works aim to achieve indistinguishability from a known distribution of objects, whereas in our problem of learning a generative model, we assume that
the true object $B^*$ is initially \emph{unknown}, and to collect information about $B^*$, we additionally need a learner $L$ that can use sample/support access to $B^*$ to \emph{efficiently} construct an implementation of an indistinguishable model.

Beyond indistinguishability, we also aim to achieve the notion of \emph{truthfulness} introduced in \cite{goldreich2010}. To demonstrate this notion, consider pseudorandom permutations $f_s:\{0,1\}^n\mapsto \{0,1\}^n$~\cite{LubyR88}. A distribution of permutations is pseudorandom if it is indistinguishable from the uniform distribution of permutations. It should be noted that a pseudorandom $\{0,1\}^n\mapsto \{0,1\}^n$ function~\cite{GoldreichGoMi86} is also indistinguishable from a random permutation over $\{0,1\}^n$ (as long as the number of queries are sufficiently smaller than $2^{n/2}$). Nevertheless, insisting that the pseudorandom objects satisfy the global condition of being a permutation is critical in the applications of pseudorandom permutations. This motivates the distinction of \cite{goldreich2010} between indistinguishability (that the pseudorandom objects are indistinguishable from a uniform object to a class of distinguishers) and {\em truthfulness} which is a global property that needs to hold exactly or approximately in a statistical sense. In our setup, a generative model $M$ is truthful if every object $B$ drawn from the distribution represented by $M$ satisfies a certain global property. For example, when the true object $B^*$ is a function $f^*:X\to \{0,1\}$ with support size $|\{x\in X:f^*(x) = 1\}|$ being $k$, a truthful requirement on a generative model $M$ for $B^*$ may restrict $M$ to always generate functions with support size $k$.

The study of implementing huge pseudorandom objects~\cite{10.1007/s00145-001-0008-5,goldreich2010,NaorNT05,NaorN07} has pseudorandom functions and permutations as vital building blocks.
Besides these building blocks, our techniques for learning generative models of huge objects also come from connections to the regularity lemma and especially the work of Trevisan, Tulsiani and Vadhan~\cite{TrevisanTV09}. In \cite{TrevisanTV09}, they construct an efficiently-implementable function $f:X\mapsto [0,1]$ which is indistinguishable from some $f^*:X\mapsto [0,1]$ to a family of distinguishers represented by functions $g:X\mapsto [0,1]$. Indistinguishability here means that $|\E[f(x)g(x)]-\E[f^*(x)g(x)]|$ is smaller than some error parameter $\varepsilon$.
After \cite{TrevisanTV09}, the problem of creating indistinguishable functions and its applications to cryptography are further studied in \cite{vadhan2013uniform,MR3183555,MR3591815,skorski2016subgradient,MR3655523,MR3794838}.
These works assume that the true function $f^*$ is known and they do not explicitly deal with the problem of learning $f^*$, but the corresponding learning task has been studied in the algorithmic fairness literature through the notions of multi-accuracy, multi-calibration, and outcome indistinguishability \cite{hebert2018multicalibration,kim2019multiaccuracy,dwork2021outcome,pmlr-v178-gopalan22a,dwork2023new}.
When applying techniques from these works to solve some problems in our setting, we need to deal with additional challenges such as the truthfulness requirement that we want our generative model to satisfy.

\subsection{Our Results}
The main conceptual contribution of this paper is in suggesting a new frontier for the study of indistinguishability, which is highly motivated and technically challenging. 
As we introduce in \Cref{sec:overall-goal},
the notion of indistinguishability from a single huge object combines at its core the areas of learning theory and pseudorandomness which, as recent research uncovered, have deep connections, providing a way to describe and address a rich landscape of natural problems. 
Below we summarize the main problems we address in this new framework.

\subsubsection*{Truthful Learning That Preserves Support Size}
Suppose we have sample access to a function $f^*:\{0,1\}^n \to \{0,1\}$ and we want to build an indistinguishable generative model for $f^*$. Here sample access allows us to observe pairs $(x,f^*(x))$ with $x$ drawn uniformly at random from $\{0,1\}^n$, and accordingly, we assume that every distinguisher also decides to accept or reject based on such a random pair $(x,f(x))$ from a function $f$ that may or may not be the true $f^*$. This task of learning a generative model for a binary function is closely related to the task of \emph{no-access outcome indistinguishability} studied in \cite{dwork2021outcome}, and it has been observed that the task can be reduced to multi-accuracy. Indeed, assuming that the distinguishers have bounded complexity and can be learned efficiently, using previous algorithms in \cite{hebert2018multicalibration,kim2019multiaccuracy,dwork2021outcome}, we can design an efficient learner that constructs a generative model indistinguishable from $f^*$ (\Cref{thm:sample-function}). The model constructed this way is specified using a predictor $p:\{0,1\}^n\to [0,1]$, and the model represents the distribution of functions $f:\{0,1\}^n\to \{0,1\}$ where the function value $f(x)$ is distributed independently for every $x\in \{0,1\}^n$ according to the Bernoulli distribution $\ber(p(x))$ with mean $p(x)$.

Learning generative models for binary functions becomes a more challenging task when we additionally enforce truthfulness requirements. A natural choice of truthfulness requirement is to preserve the support size of the function. Assuming that we know the support size $|\{x\in \{0,1\}^n:f^*(x) = 1\}|$ of $f^*$ is $k$, we would like our model to only generate functions that also have support size $k$. We show how to build an efficient learner that can output such a truthful model which is also indistinguishable from the true function $f^*$:

\begin{theorem}[Informal statement of \Cref{thm:fixed-weight}]
Let $\cB$ be the class of sample-access objects induced by binary functions $f:\{0,1\}^n\to\{0,1\}$ satisfying $|\supp(f)| = k$, where  $\supp(f):=\{x\in \{0,1\}^n:f(x) = 1\}$. Let $\cD$ be a class of distinguishers that is efficiently learnable. There exists an efficient $(\varepsilon,\delta)$-learner $L$ for $\cB$ w.r.t.\ $\cD$ and the learner always outputs an efficient implementation of a model $M$ that is truthful w.r.t.\ $\cB$.
\end{theorem}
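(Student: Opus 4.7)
The plan is to combine the multi-accuracy paradigm with a truthful sampling scheme built from pseudorandom permutations acting on level sets of the learned predictor. First, I would apply a standard multi-accuracy learner (as in \cite{hebert2018multicalibration,kim2019multiaccuracy,dwork2021outcome}) over the class of functions $g_D(x) := \Pr[D \text{ accepts } (x,1)] - \Pr[D \text{ accepts } (x,0)]$ derived from $\cD$, together with the constant function. Using sample access to $f^*$, this yields a predictor $p:\{0,1\}^n \to [0,1]$ that is $\varepsilon'$-multi-accurate with respect to this class; in particular $|\sum_x p(x) - k|$ is small. A post-processing shift-and-clip step then makes the sum exactly $k$ at the cost of a controlled additional error in the individual marginals.

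Next, I would discretize $p$ to $\tilde p$ taking values in $\{j/m : 0 \le j \le m\}$ for $m = \poly(n, 1/\varepsilon)$, producing a level-set partition $\{0,1\}^n = \bigsqcup_v S_v$ with $S_v = \{x : \tilde p(x) = v\}$. For each level $v$, I would pick an integer quota $k_v$ with $\sum_v k_v = k$ and $k_v/|S_v|$ as close to $v$ as possible. The model $M$ is then defined by independently sampling a pseudorandom permutation $\pi_v$ on each bucket $S_v$ and setting $f(x) = 1$ iff $\pi_v(x) \le k_v$, where $v = \tilde p(x)$. By construction $|\supp(f)| = \sum_v k_v = k$ deterministically, so $M$ is truthful; and the marginal $\Pr_\pi[f(x) = 1] = k_v/|S_v|$ is close to $p(x)$ up to discretization and rounding.

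Efficient implementation is then straightforward: sample access on $x$ computes $v = \tilde p(x)$ and $\pi_v(x)$ and outputs $(x, \one[\pi_v(x) \le k_v])$; support access picks a bucket $v$ with probability $k_v/k$, draws a random $i \in [k_v]$, and returns $\pi_v^{-1}(i)$. For indistinguishability, I would bound the per-$x$ marginal discrepancy $|\E_M[f(x)] - p(x)|$ (at most $O(1/m)$ plus quota-rounding error) and chain it with the multi-accuracy of $p$ to get the single-query bound $|\Pr[D^{f^*} = \acc] - \E_M[\Pr[D^f = \acc]]| \le \varepsilon$ for every $D \in \cD$. For multi-query distinguishers, I would invoke PRP security to absorb the within-bucket correlations, valid as long as each $|S_v|$ is much larger than the number of queries a distinguisher makes.

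The main obstacle is reconciling the exact truthfulness constraint $\sum_v k_v = k$ with multi-accuracy: the deficit $\sum_x p(x) - k$ must be corrected without violating multi-accuracy, and the integer rounding of the $k_v$'s must distribute errors across buckets without concentrating them against any fixed $g_D$. This likely requires augmenting the distinguisher class with a suitable ``size'' distinguisher during the multi-accuracy phase, and then carefully ensuring the correction step perturbs each marginal by at most $o(1/m)$ per bucket, so that the aggregate inner products $\E_x[g_D(x)(\E_M[f(x)] - p(x))]$ remain below $\varepsilon$ for every $D \in \cD$.
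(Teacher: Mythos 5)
Your first two paragraphs (multi-accuracy with the constant function added, then build a truthful sampler from the learned predictor) track the paper's high-level plan. The gap is in the truthful sampler: the level-set partition $\{S_v = \{x : \tilde p(x) = v\}\}$ of an exponentially large domain $X = \{0,1\}^n$ is not an object you can implement efficiently. Each $S_v$ is an implicitly defined subset whose exact cardinality $|S_v|$ cannot be computed in $\polylog |X|$ time (only estimated to polynomial relative accuracy), yet your quota assignment needs $|S_v|$ exactly to guarantee both $0 \le k_v \le |S_v|$ and $\sum_v k_v = k$; a sampling error in $|S_v|$ can make $k_v > |S_v|$, which breaks truthfulness outright. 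Worse, even given $|S_v|$, there is no efficient way to construct a pseudorandom permutation (or ranking) of an arbitrary implicitly defined subset $S_v \subseteq \{0,1\}^n$: standard PRPs act on structured domains like $\{0,1\}^m$ or $[N]$ for known $N$, and your implementation requires a bijection $S_v \leftrightarrow [|S_v|]$ to evaluate $\one[\pi_v(x) \le k_v]$ and to invert $\pi_v^{-1}(i)$ for support access. This is not a detail to be patched; it is the central obstacle the statement is about.

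The paper's proof is organized precisely to circumvent this. It first solves the problem for a \emph{polynomial-size} domain $[N]$ (\Cref{lm:fixed-weight-mw}, \Cref{lm:fixed-weight-mw-2}) --- there, a bucketing of $[0,1]$ into $\Theta(1/\varepsilon)$ sub-intervals, similar in spirit to your discretization, is legitimate because $|S_v|$ is exactly computable --- but the construction there relies on a multiplicative-weights / zero-sum-game argument (using \Cref{claim:fixed-weight-helper}) rather than PRPs, since one must produce a distribution over \emph{exact}-weight-$k$ functions whose marginals approximate $p$ coordinatewise. To lift this to $X = \{0,1\}^n$, the paper builds a binary tree on bit-string prefixes $z$, so every subdomain $X_z$ has size exactly $2^{n - |z|}$ (no cardinality estimation needed), propagates an integer support budget $k_z$ from root to leaves by \emph{estimating} the split $\sum_{x \in X_{z \| 0}} p(x)$ with polynomially many samples, bounds the accumulated $L^1$ error per level (\Cref{lm:alpha-beta}, \Cref{lm:fixed-weight-level}), and stops at depth $n'$ where groups have size $\Theta(1/\varepsilon)$, then invokes the polynomial-domain lemma on each group. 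This yields a $\poly(n, 1/\varepsilon)$-time random-oracle implementation because evaluating $f(x)$ only requires walking one root-to-leaf path. Your proposal is missing this tree-decomposition layer and the exact-weight construction it feeds into; without them the implementation is not efficient.
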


Note that the truthfulness requirement on the support size cannot be enforced simply using computationally-bounded distinguishers, because computing the support size of a function $f$ exactly requires reading the values $f(x)$ for all the exponentially many inputs $x\in \{0,1\}^n$. Also, this truthfulness requirement cannot be satisfied directly by a generative model specified by a predictor $p$, where the function value $f(x)$ is distributed according to $\ber(p(x))$ independently of the function values $f(x')$ of other individuals $x'\ne x$. To enforce a fixed support size, the function values of different individuals must coordinate in a global manner, requiring us to use new techniques. We create a binary tree with leaves corresponding to the function domain $\{0,1\}^n$, and following ideas in previous work such as \cite{goldreich2010}, we assign support size budgets from the root to the leaves. However, the true function $f^*$ is a single unknown object and is very different from the uniform distribution considered in \cite{goldreich2010}, so there are no closed-from distributions (such as the binomial distributions used in \cite{goldreich2010}) that can guide us to distribute the support size budget from a node two its two children. Instead, we need to \emph{estimate} how the budget should be divided, and this leads to accumulated error towards the leaves and forces us to stop before reaching the leaves. To efficiently propagate the budgets to the leaves, we solve a zero-sum game where player $C$ chooses the budgets for the leaves and player $D$ distinguishes them from the target. We show that if player $D$ uses the multiplicative weights algorithm to minimize regret, we can create an indistinguishable and truthful model from the empirical distribution over the optimal responses from player $C$.

\subsubsection*{Learning a Function with Support Access}
In the classic setting when learning a function  $f:\{0,1\}\rightarrow\{0,1\}$, the learner receives random samples of form $(x,f(x))$ for a uniform $x\in\{0,1\}$. In this work, we also consider a function object in which we receive a random \emph{positive entry}. That is, the learner receives random $x$'s such that $f(x)=1$.  This type of random access is natural is certain situations, for example when we have information on the individuals that graduated some program, but not on those that did not. 

\begin{theorem}[Informal statement of \Cref{thm:rand-one}]\label{thm:rand-one-informal}
    Let $\alpha>0$, and let $f:\{0,1\}\rightarrow\set{0,1}$ be a function such that $\Pr[f(x)=1]=\alpha$. Let $\cD$ be a collection of distinguishers, each $D\in \cD$ associated with a set $S_D\subseteq[N]$ and accepts $x$ if $x\in S_D$. If there exists a weak agnostic learner for $\cD$, then there exists a learning algorithm $L$ running in time $\poly(n)$, that receives random elements from the set $\set{x|f(x)=1}$ and outputs a model $M$ that is indistinguishable from $f$ to all $D\in\cD$.
\end{theorem}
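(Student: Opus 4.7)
The plan is to represent $M$ as an independent Bernoulli model with predictor $p:\{0,1\}^n\to[0,1]$, so that $B\sim M$ satisfies $B(x)\sim\ber(p(x))$ independently for all $x$, and to learn $p$ by a TTV-style multi-accuracy boosting loop that uses the assumed weak agnostic learner for $\cD$. Since $|\supp(B)|$ is a sum of $2^n$ independent Bernoullis with mean $\sum_y p(y)$, it concentrates tightly (with exponentially small relative deviation), so a uniform support sample from $B$ has marginal distribution essentially $q(x):=p(x)/\sum_y p(y)$. Writing $\mu(x):=f(x)/(\alpha 2^n)$ for the true support distribution, the indistinguishability requirement therefore reduces to $|q(S_D)-\mu(S_D)|\le\varepsilon$ for all $D\in\cD$. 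Normalizing $\sum_y p(y)=\alpha 2^n$, this becomes the multi-accuracy-style condition
\[
|\E_{x\sim\text{Unif}}[(f(x)-p(x))\mathbf{1}[x\in S_D]]|\le\varepsilon\alpha.
\]

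The main technical step is to feed the weak agnostic learner labeled examples that expose this gap, despite lacking value access to $f$. For this, draw $(x,y)$ as a mixture: with probability $1/2$, sample $x$ via support access to $f$ and set $y=\alpha$; with probability $1/2$, sample $x$ uniformly and set $y=-p_t(x)$. A direct calculation yields $\E[y\mathbf{1}[x\in S_D]]=\tfrac{\alpha}{2}(\mu(S_D)-q_t(S_D))$, so the weak agnostic learner returns a violating set $S_t\in\{S_D:D\in\cD\}$ whenever $p_t$ is $\varepsilon$-far from the target condition on some $D$; if the learner expects $\pm 1$ labels, replace $y$ by a $\pm 1$ random variable with the same mean. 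We then perform the clipped additive update $p_{t+1}(x)=\text{clip}_{[0,1]}(p_t(x)+\eta s_t\mathbf{1}[x\in S_t])$ with step size $\eta=\Theta(\varepsilon\alpha)$ and sign $s_t=\sign(\mu(S_t)-q_t(S_t))$ estimated from fresh samples.

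The standard $L_2$-potential analysis of Trevisan--Tulsiani--Vadhan shows that $\E_x[(f(x)-p_t(x))^2]$ decreases by $\Omega((\varepsilon\alpha)^2)$ per round, bounding the number of rounds by $\poly(1/\varepsilon,1/\alpha)$, and a Chernoff plus union bound over $\poly(n,1/\varepsilon,1/\alpha,\log(1/\delta))$ samples per round yields the $(\varepsilon,\delta)$ guarantee in total time $\poly(n)$. The final model is stored implicitly as the sequence $(S_t,s_t)_{t\le T}$; support access to $B\sim M$ is implemented by rejection sampling against the uniform distribution, accepting $x$ with probability $p_T(x)$ (success rate $\approx\alpha$), which runs in expected time $\poly(n,1/\alpha)$. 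The main obstacle is exactly the lack of value access to $f$, which blocks the standard multi-accuracy reduction; the mixture-labeling trick above overcomes this by realizing the multi-accuracy gap in expectation from a single positive support sample paired with a uniform sample labeled $-p_t(x)$. A secondary subtlety is that the normalization $\sum_y p_t(y)$ can drift under clipped additive updates, but this is absorbed into the analysis because $\mu$ and $q_t$ are probability distributions, making the required comparisons scale-invariant up to an $o(1)$ correction.
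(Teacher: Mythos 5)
Your overall architecture matches the paper's: represent the model by a predictor $p$, implement support access by rejection sampling against $p$, and learn $p$ by a TTV-style boosting loop driven by a weak agnostic learner, with an $L_2$ potential bounding the number of rounds. Your mixture-labeling trick (positive support sample labeled $\alpha$ versus uniform sample labeled $-p_t(x)$) is a clean alternative way to realize the auditor from support access only; the paper instead defines the auditor directly over the support-access distribution and separately shows (via a statistical-query reduction) that sample-access learners transfer to that distribution. That part of your proposal is fine and arguably more self-contained.

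However, there is a genuine gap in how you handle the normalization $\sum_x p_t(x)$, and this is precisely the point the paper identifies as the new difficulty of the support-access setting. Your identity $\E[y\,\one[x\in S]]=\tfrac{\alpha}{2}(\mu(S)-q_t(S))$ holds only when $\sum_x p_t(x)=\alpha 2^n$. If the mass drifts to $\alpha'2^n$, the mixture instead measures $\tfrac{1}{2}(\alpha\mu(S)-\alpha' q_t(S))$, which can vanish even when $|\mu(S)-q_t(S)|\ge\varepsilon$ (take $\mu(S)=(\alpha'/\alpha)q_t(S)$ with $\alpha'=\alpha(1+\varepsilon)$ and $q_t(S)=1/2$); so at termination the weak learner's failure to find a violation does not certify indistinguishability. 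The drift is not $o(1)$ "automatically": each update changes $\sum_x p_t(x)$ by up to $\eta|S_t|$, which over $T=\poly(1/\varepsilon,1/\alpha)$ rounds can accumulate to a constant multiple of $\alpha 2^n$ if the signs conspire. The required comparison is \emph{not} scale-invariant, because the uniform-sample half of your mixture is tied to the unnormalized values $p_t(x)$ while the support half is tied to the normalized $\mu$. The paper resolves this with an explicit re-centering step: after every auditor update it estimates $\E_x[p(x)]$ and, if it deviates from $\alpha$ by more than $\beta=\Theta(\gamma\alpha^2)$, adds a global correction term $\pm\beta$ to the predictor; it then proves a separate potential-decrease claim for these corrections so that they, too, terminate. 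You need either this explicit re-normalization or to include the set $S=X$ in the audited class and argue that controlling $|\alpha-\alpha'|\le\varepsilon'$ suffices to recover $|\mu(S)-q_t(S)|\lesssim\varepsilon'/\alpha$ for all $S$; as written, neither is present, and both the soundness of the stopping condition and the per-round progress bound (which in the paper is proved under the hypothesis $|\E_x[p(x)]-\alpha|\le 2\beta$) depend on it.
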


The theorem holds when there is a weak agnostic learner for the collection of distinguishers $\cD$ under the distribution of a random support element (i.e.\ random $x$ s.t.\ $f(x)=1$). We show in \Cref{sec:support-audit} that if a collection of distinguishers $\cD$ has a weak agnostic learner over the standard sample access distribution, and the learner is a statistical query algorithm, then there is a learner for $\cD$ also under the distribution of random support element. 

The proof of the theorem is similar to the classic boosting argument, with an additional step that the learner preforms of keeping the support size of the model approximately the same as support size of $f$. This step is necessary because under the distribution of a random support element, the boosting algorithm is only promised to work when the support sizes of $f$ and the model are approximately equal.

Learning an object under the distribution of random support element is potentially very useful in the case of \emph{sparse objects}. For a sparse function $f$, if we choose a uniform $x\in\{0,1\}$, then $f(x)=0$ with high probability, and a learner cannot hope to learn anything non-trivial with random samples of form $(x,f(x))$. Unfortunately, the above theorem does not hold for sparse functions, but in the next part we show how this theorem can be used to learn  different sparse objects -  sparse graphs.

\subsubsection*{Learning Sparse Graphs Without Dense Subgraphs}
Suppose $G=([N],E)$ is a graph represented by the $N^2$ length string of its adjacency matrix. In this representation, receiving a random edge from $G$ is equivalent to receiving a random support element from the function representing its adjacency matrix. Therefore, \Cref{thm:rand-one-informal} implies that we can learn a model for $G$ that is indistinguishable for a set of distinguishers $\cD$ that have a weak agnostic learner. The theorem only holds for functions $f$ with a constant fraction of $1$ entries, which corresponds to a dense graph. What about sparse graphs?

Learning a sparse graph, or a sparse object in general, is a very challenging task because of the huge domain. The weak regularity lemma \cite{Frieze} has error that is proportional to $N^2$, which is too much in the case of sparse graphs (an empty graph is indistinguishable from a sparse graph with this error). Therefore in the setting of a sparse graphs it is more natural to require an $\varepsilon$ error from the distinguisher under the distribution of receiving a random edge. Under this distribution, the error of the distinguishers scales with the number of edges.
We show a learner for a specific class of sparse graphs, those that have no dense subgraphs. We note that a random sparse graph has no dense subgraphs, so many graphs have this property. 

\begin{theorem}[Informal statement of \Cref{thm:sparse-graph}]\label{thm:sparse-graph-informal}
    Let $G=([N],E)$ be a sparse graph with no dense subgraphs.
    Let $\cD$ be a collection of distinguishers, each $D\in \cD$ associated with two sets $U_D,V_D\subseteq[N]$ and accepts an edge $(u,v)$ if $u\in U_D,v\in V_D$. If there exists a weak agnostic learner for $\cD$, then there exists a learning algorithm $L$ running in time $\polylog(N)$, that receives random edges from $G$ and outputs a model $M$ that is indistinguishable from $f$ to all $D\in\cD$.
\end{theorem}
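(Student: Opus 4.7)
The plan is to reduce this theorem to \Cref{thm:rand-one-informal} by identifying the graph $G$ with its adjacency function $f:[N]^2\to\{0,1\}$ where $f(u,v)=1$ iff $(u,v)\in E$. Under this identification, sampling a uniformly random edge of $G$ is exactly sampling a uniformly random positive entry of $f$, i.e., support access to $f$. The obstacle is that \Cref{thm:rand-one-informal} assumes $\alpha = \Pr[f(u,v)=1] = |E|/N^2$ is a constant, whereas here $\alpha$ is subconstant. My goal is to argue that ``no dense subgraphs'' is precisely the structural hypothesis required to push the boosting-with-support-access proof of \Cref{thm:rand-one-informal} through with only a $\polylog(N)$ overhead, giving a $\polylog(N)$-time learner.

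First I would set up a boosting procedure that mirrors the proof of \Cref{thm:rand-one-informal}. Initialize the model $M$ by a predictor $p:[N]^2\to [0,1]$ with $p\equiv \alpha$, so that $M$ generates each edge independently with probability $\alpha$; this makes the expected edge count equal to $|E|$. At each iteration, invoke the weak agnostic learner for $\cD$ (under the edge distribution, which is supported by \Cref{sec:support-audit} when the underlying learner is a statistical-query algorithm) to either certify $\varepsilon$-indistinguishability from $G$ or output a distinguisher $D=(U_D,V_D)$ on which $\Pr_{(u,v)\sim E}[(u,v)\in U_D\times V_D]$ and $\Pr_{(u,v)\sim E_M}[(u,v)\in U_D\times V_D]$ differ by more than $\varepsilon$. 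Then update $p$ multiplicatively on $U_D\times V_D$ in the direction of the discrepancy, and rescale $p$ globally so that $\sum_{u,v} p(u,v) = |E|$, just as in \Cref{thm:rand-one-informal}.

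The main obstacle, and the place where the no-dense-subgraphs hypothesis enters essentially, is controlling the per-step rescaling factor. In the dense setting of \Cref{thm:rand-one-informal} these factors are $O(1)$ because $\alpha$ is constant; for sparse $G$ the naive bound blows up as $1/\alpha$. The no-dense-subgraphs assumption says exactly that for every $U,V\subseteq [N]$ (of non-negligible size), $e(U,V)\le \polylog(N)\cdot \alpha |U||V|$, equivalently $\Pr_{(u,v)\sim E}[(u,v)\in U\times V] \le \polylog(N)\cdot |U||V|/N^2$. Consequently the ratio between the ``target'' mass and the ``uniform'' mass on any rectangle returned by the weak agnostic learner is at most $\polylog(N)$, so each multiplicative update and rescaling is bounded by $\polylog(N)$, and the invariant that $M$ itself has no dense subgraphs (up to the same $\polylog(N)$ factor) is maintained inductively.

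A standard KL-divergence potential argument with respect to the true edge distribution then gives convergence in $\poly(1/\varepsilon,\polylog(N))$ iterations, each calling the weak agnostic learner once and performing one rescaling, so the total runtime is $\polylog(N)$. Finally, since the learned predictor $p$ is a weighted combination of indicators of the rectangles $U_D\times V_D$ output by the weak agnostic learner across iterations, it admits an efficient implementation: one can evaluate $p(u,v)$ in $\polylog(N)$ time and sample an edge from $M$ either by hierarchical sampling along a binary tree over $[N]^2$ (as in the support-size-preserving construction of \Cref{thm:fixed-weight}) or by rejection sampling guided by $p$, yielding the required efficient implementation of $M$.
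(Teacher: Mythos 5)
Your approach diverges fundamentally from the paper's, and it has a gap at the implementation step that the paper's route is specifically designed to avoid. You propose to run the support-access boosting of \Cref{thm:rand-one} directly on the adjacency function of $G$, maintaining a \emph{sparse} predictor $p$ with $\sum_{u,v}p(u,v)=|E|$. The problem is that the learner must output an efficient implementation of the resulting model, i.e.\ an algorithm that samples an edge from the model in $\polylog(N)$ time, and a sparse $p$ does not admit one by the methods you cite. Rejection sampling ``guided by $p$'' (sample $(u,v)$ uniformly from $[N]^2$, accept with probability $p(u,v)$) takes expected time $N^2/|E|=1/\alpha$, which is super-polylogarithmic for any genuinely sparse graph. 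The hierarchical sampling of \Cref{lm:fixed-weight} does not help either: there the per-node estimates of $\sum_{x\in X_z}p(x)$ are obtained by sampling with additive error $\Theta(\varepsilon|X_z|/n)$, which is adequate when $\sum_{x\in X_z}p(x)=\Theta(|X_z|)$ but swamps the signal entirely when $\sum_{x\in X_z}p(x)\ll|X_z|$. The same $1/\alpha$ barrier also afflicts the weak agnostic learner itself, which needs to estimate $\Pr_{(u,v)\sim p}[(u,v)\in U_D\times V_D]$, a conditional probability whose denominator $\sum p$ is a vanishing fraction of $N^2$. This is precisely why the paper states that \Cref{thm:rand-one} ``only applies for functions $f$ with a constant expected value'' and that ``the model that the learner outputs is dense\ldots to allow us to use rejection sampling when training the model.''

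The paper sidesteps all of this by never maintaining a sparse model. It first proves (\Cref{claim:exists-dense}) via the Kohayakawa--R\"odl regularity lemma for sparse upper-uniform graphs that there \emph{exists} a dense graph $H$ with $\rho_H\approx 1/\gamma$ whose edge-conditional rectangle probabilities match those of $G$ up to $O(\delta)$; the no-dense-subgraph hypothesis is exactly what makes the rescaled densities $\rho_G(V_i,V_j)/(\gamma\rho_G)$ land in $[0,1]$. This step is purely existential --- $H$ is never computed. Then it observes that an auditor for $B^*$ (support access to $G$) can be forwarded verbatim as an auditor for $B^H$ (support access to $H$), because the indistinguishability of $G$ and $H$ means any violated rectangle for $G$ is an approximately violated rectangle for $H$. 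This lets it run \Cref{thm:rand-one} against the phantom dense target $H$ with constant density, getting a $\polylog(N)$-time learner and a dense, efficiently sampleable model. Your KL-divergence observation (that upper-uniformity bounds the initial KL divergence by $O(\log(1/\alpha))=O(\log N)$, giving $\polylog(N)$ boosting rounds) is a nice idea for the iteration count, but it does not repair the per-round and final-implementation costs, which is where the real obstacle is and what the dense-surrogate construction is for.
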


The model that the learner outputs is \emph{dense}, i.e. the model outputs graphs with $\Theta(N^2)$ many edges. This is done because of technical reasons - to allow us to use rejection sampling when training the model. This brings us to the question, is there a dense graph that is indistinguishable from our sparse graph $G$? The answer to this question depends on $G$, and in \Cref{claim:no-dense-graph} we show that if a sparse graph $G$ has a very dense subgraph, then there is no dense graph that is indistinguishable from $G$. 

The proof of the theorem has two parts, in the first part we show that for every sparse graph $G$ with no dense subgraphs, there exists a dense graph $H$ that is indistinguishable from $G$. In this part of the proof we apply the strong regularity lemma for sparse graphs \cite{KR03,Scott2011} on $G$, and use the resulting partition to build the dense indistinguishable graph $H$. This part of the proof is existential, and we do know how to find $H$ efficiently, as the strong regularity lemma does not have an efficient algorithm for finding the partition. It is not possible to use the weak regularity lemma or its variants \cite{Frieze}, because its error is too large.  
In the second part of the proof, we reduce the learning $G$ to learning  $H$, and show that the resulted model $M$ is indistinguishable from $G$ to all distinguishers $D\in\cD$.

\subsubsection*{Other Results on Learning Generative Models}
In this work we also show indistinguishable models in several other settings
\begin{itemize}
    \item Let $f:\set{0,1}^n\rightarrow\set{0,1}^n$ be a function. Learning such function is harder than learning a binary function because the large domain makes $f$ a sparse object (when viewed as a graph for example it is an out-degree one graph). For such functions, we show that there exists a learner that given samples from the distribution $(x,f(x))$, outputs a model that is indistinguishable against the following set of distinguishers $\cD = \set{(S_D,j_D)| S_D\subset\set{0,1}^n,j_D\in [n]}$ such that $D=1\iff x\in S_D, f(x)_j=1$. This appears on \Cref{sec:bit-string-func}.
    \item In~\Cref{sec:dense} we apply the theorems for functions on the adjacency matrix of a dense graph $G=([N],E)$. For a set of distinguishers $\cD$ that have a weak agnostic leaner, we have an efficient learner that outputs an indistinguishable model when $G$ is:
    \begin{enumerate}
        \item A dense graph when the learner receives random adjacency matrix entries.
        \item A dense graph with a fixed total number of edges $m=\Theta(N^2)$.
        \item A directed graph with a fixed out-degree $m=\Theta(N)$.
        \item A dense graph when the learner receives random edges.
    \end{enumerate}
    \item For a directed graph $G=(\{0,1\}^n,E)$ with constant out-degree $d$, we can treat each of the $d$ outgoing degrees as a function $f_i:\set{0,1}^n\rightarrow\set{0,1}^n$. For the same set of distinguishers that we can handle in the case of a length-preserving functions (the first item in this list), we provide an efficient learner. See~\Cref{sec:const-out}.
    \item In the case of a uniform degree \emph{undirected graph}, we provide in \Cref{sec:uniform} a learning algorithm for an indistinghuishable model, albeit for a somewhat limited set of distinguishers.
\end{itemize}

\subsubsection*{Impossibility Results}
As we discussed earlier, our goal of learning a generative model is closely related to the goal in \cite{goldreich2010} of implementing huge random objects, but a key difference is that we assume the groundtruth is a single unknown object $B^*$, whereas \cite{goldreich2010} considers a known uniform distribution of objects.
This means that we need an additional learning procedure to collect information about $B^*$, and we show in \Cref{sec:impossibility} that our task of efficiently learning a generative model is only possible when the distinguisher class is efficiently learnable.

Besides the requirement of learning, our setting is more challenging than the setting in \cite{goldreich2010} in many other ways. We demonstrate this by another two impossibility results on fooling entry-access distinguishers and fooling stronger distinguishers than the model.
When we consider a pseudorandom function, $f_s$, the function is indistinguishable from the uniform distribution to distinguishers that have entry access to the function (allowed to ask for an arbitrary string $x$ and get $f_s(x)$). Furthermore, while  $f_s$ is computable in a fixed polynomial time, the distinguishers can run in any polynomial time (and under reasonable assumptions, even exponential time). \cite{goldreich2010} and subsequent work inherit these two properties - indistinguishability to distinguishers that are computationally more complex than the models and have entry access to the model. In Section~\ref{sec:impossibility} we argue that neither of these properties is achievable in our setting. 

For the impossibility of fooling distinguishers with entry access, in \Cref{thm:entry-hard} we give the example of a class $\cD$ that contains a distinguisher $D_x$ for every input $x\in \{0,1\}^n$ which queries the function value $f(x)$ for a function $f$ and outputs $\acc$ if and only if $f(x)=1$. We argue that every model $M$ that is indistinguishable from the true $f^*$ for the set of distinguishers $\cD$ has to be very close to $f^*$. Since the size of $f^*$ is exponential and $f^*$ is unknown, no efficient learner can output a model that is close to $f^*$.
We also show an example, using an idea from \cite{TrevisanTV09}, of a distinguisher and a true function $f^*$, such that the distinguisher can tell apart $f^*$ from any model $M$ with a low complexity compared to the distinguisher (\Cref{thm:strong-model}). This highlights the fact that in our setting, the generative model and the learner constructing the model have to be computationally comparable or stronger than the distinguishers.
\subsection{Related Work}

As mentioned in \Cref{sec:overall-goal}, \cite{goldreich2010} introduced the problem of creating an indistinguishable implementation of a random object. 
\cite{goldreich2010} as well as follow-up works \cite{NaorNT05,NaorN07} also present a collection of positive results for dense and sparse graphs or functions with a variety of truthfulness conditions and access models of the distinguishers. 

The connection between generative models and indistinguishability has been manifested through the invention of generative adversarial networks (GANs) \cite{NIPS2014_5ca3e9b1,arora2017generalization}. Intuitively, a GAN is trained to imitate a distribution of objects (say images). The generator is trained in concert with a discriminator that could be interpreted as a distinguisher. Through a sequence of rounds, the generator is trained to fool the discriminator which is then trained to fail the generator. GANs highlight the connection between  generative models and indistinguishability \cite{IMP17}, but they do not naturally fall into our framework as they are more directly described in terms of indistinguishability of two distributions. 

The connection between indistinguishability and learning theory has been established in many previous works (e.g.\ \cite{TrevisanTV09} applies the boosting technique from learning theory).
More recently, in the context of algorithmic fairness, the relation between learning theory and indistinguishability has been dramatically expanded in the notions of multicalibration and outcome indistinguishability~\cite{hebert2018multicalibration,dwork2021outcome}, in applications to learning and statistical inference through the notions of omnipredictors and universal adapatability~\cite{GopalanKRSW22,doi:10.1073/pnas.2108097119,hu2022omnipredictors,gopalan_et_al:LIPIcs.ITCS.2023.60,kim_et_al:LIPIcs.ITCS.2023.79} and in the emergence of research uncovering intricate and exciting connections while studying the sample complexity of indistinguishability from a learning-theoretic perspective~\cite{pmlr-v167-hu22a,hu_et_al:LIPIcs.ITCS.2023.72}.

It is possible to view our learning setting as a $2$-players zero-sum game, between the learner and the distinguishers, in which the learner's goal is to output a model for an indistinguishable object and the distinguishers try to tell apart the input and the model. In this setting, there is a relation between min-max theorems and regularity-lemma theorems. Such theorems prove that it is possible to express a complex object $f$ by a function of a few simpler objects $g_1,\ldots g_t$ that, in our setting, represent the distinguishers \cite{TrevisanTV09,vadhan2013uniform}. There have been works improving the parameters and also using such theorems for applications in cryptography \cite{vadhan2013uniform,MR3183555,MR3591815,skorski2016subgradient,MR3655523,MR3794838}. In this work, our setting is slightly different, as we assume that the object $f$ is complex and unknown, and the learner has to learn it. The proof of \Cref{thm:sparse-graph-informal} has an intermediate step that is existential and has a similar structure to a weak regularity lemma theorems, but since the required error there is too small, we derive it from the sparse strong regularity lemma.

\section{Preliminaries}

Throughout the paper, we are interested in learning objects such as functions and graphs, and we are particularly interested when these objects have exponential sizes (e.g.\ functions with exponentially large domains and graphs with exponentially many vertices and edges). We typically use $B$ to denote an object, and use $\cB$ to denote a class of objects. 
We view an object $B$ as a function $B:Q\to \Delta_A$ that maps a query $q\in Q$ to a distribution $B(q)$ over answers in $A$. 

\paragraph{Functions.}
When the object is a function $f:X\rightarrow Y$, we consider three access types. For sample access, $B$ returns a random pair $(x,f(x))$. For support access, it returns a random $x$ such that $f(x)=1$. For entry access, upon querying $x$, $B$ returns $f(x)$. 
\begin{definition}[Function-induced sample-access object]
\label{def:sample-function}
Let $f:X\to Y$ be a function and let $B:Q \to \Delta_A$ be an object. We say $B$ is the \emph{sample-access} object \emph{induced} by $f$ if $Q = \{\bot\}, A = X\times Y$, and $B(\bot)$ is the distribution of $(x,f(x))\in A$ where $x$ is drawn uniformly from $X$. 
\end{definition}
\begin{definition}[Function-induced support-access object]
\label{def:support-function}
Let $f:X\to \{0,1\}$ be a binary function. We define the support of $f$ to be $\supp(f):= \{x\in X: f(x) = 1\}$. 
Let $B:Q\to \Delta_A$ be an object.
Assuming $\supp(f)\ne \emptyset$,
we say $B$ is
the \emph{support-access} object induced by $f$ if $Q = \{\bot\}, A= X$, and $B(\bot)$ is the uniform distribution over $\supp(f)\subseteq X$.
\end{definition}
\begin{definition}[Function-induced entry-access object]
\label{def:entry-function}
Let $f:X\to Y$ be a function and let $B:Q \to \Delta_A$ be an object. We say $B$ is the \emph{entry-access} object \emph{induced} by $f$ if $Q = X, A = Y$, and for every $q\in Q$, $B(q)$ is the singleton distribution such that $a\sim B(q)$ equals to $f(q)$ deterministically. 
\end{definition}
In this paper, we show positive results for learning generative models of functions with sample access and support access (\Cref{sec:func-learning}) whereas we show impossibility results for entry access (\Cref{sec:impossibility}). This separation is mainly because entry access makes the distinguishers stronger and thus makes indistinguishability harder to achieve (see \Cref{def:distinguisher,def:indistinguishability} below).
\paragraph{Graphs.}
For a graph $G=(V,E)$ where we assume $V$ has exponential size, we define two access types, sample-access which corresponds to a random adjacency matrix entry, and support-access which corresponds to a random edge in the graph.
\begin{definition}[Graph-induced sample-access object]
\label{def:sample-graph}
Let $G = (V,E)$ be a directed or undirected graph and let $B:Q\to \Delta_A$ be an object. We say $B$ is the \emph{sample-access} object induced by $G$ if $Q = \{\bot\},A = V\times V\times\{0,1\}$, and $B(\bot)$ is the distribution of $(u,v,y)\in A$ where $(u,v)$ is drawn uniformly from $V\times V$, $y = 1$ if $(u,v)\in E$ and $y = 0$ otherwise.
\end{definition}

\begin{definition}[Graph-induced support-access object]
\label{def:support-graph}
Let $G = (V,E)$ be a directed or undirected graph and let $B:Q\to \Delta_A$ be an object. Assuming $E\ne \emptyset$, we say $B$ is the \emph{support-access} object induced by $G$ if $Q = \{\bot\}, A = V\times V$, and $B(\bot)$ is the uniform distribution over $E\subseteq V\times V$.
\end{definition}

\paragraph{Indistinguishability.} Each learner we design in this paper has access to a ground-truth object $B^*$, and it aims to output an object $B$ that is \emph{indistinguishable} from $B^*$. In many cases, the learner does not just output a single object $B$, but a distribution over objects, and we refer to such distributions as \emph{models}. Below we formally define the notion of indistinguishability.

\begin{definition}[Distinguisher]
\label{def:distinguisher}
A \emph{distinguisher} $D$ is an algorithm that when given access to an object $B:Q\to \Delta_A$, outputs $\acc$ or $\rej$. That is, the distinguisher is allowed to make queries $q\in Q$ to the model, and for each query $q$ the distinguisher receives an answer $a\in A$ drawn independently from $B(q)\in \Delta_A$. We allow the distinguisher $D$ itself to be randomized, and we use random variable $D^B$ to denote the output of the distinguisher $D$ in $\{\acc, \rej\}$ when given access to $B$.
\end{definition}

\begin{definition}[Indistinguishability]
\label{def:indistinguishability}
Let $B^*:Q\to \Delta_A$ be an object, and let model $M$ be a distribution over objects $B:Q\to \Delta_A$. We say model $M$ is \emph{$\varepsilon$-indistinguishable} from object $B^*$ w.r.t.\ a distinguisher $D$ if
\begin{equation}
\label{eq:ind}
|\Pr[D^{B^*}= \acc] - \E_{B\sim M}[\Pr[D^{B}= \acc]]|\le \varepsilon.
\end{equation}
We say model $M$ is $\varepsilon$-indistinguishable from object $B^*$ w.r.t.\ a class $\cD$ of distinguishers if \eqref{eq:ind} holds for every $D\in \cD$.
\end{definition}

\paragraph{Truthfulness.} In addition to indistinguishability, another desirable property of a model is \emph{truthfulness} introduced in \cite{goldreich2010}. Truthfulness requires every object generated from the model to satisfy a certain (usually global) property which we formalize using an object class $\cB$:
\begin{definition}[Truthfulness]
We say a model $M$ is \emph{truthful} w.r.t.\ an object class $\cB$ if
\[\Pr\nolimits_{B\sim M}[B\in \cB] = 1.
\]
\end{definition}

\paragraph{Implementations.}
Our goal is to design efficient learners, and thus we cannot expect the learner to output a model $M$ explicitly, especially when the objects drawn from $M$ are huge. Instead, our learner outputs an efficient \emph{implementation} of a model, defined as follows.
\begin{definition}[Ordinary Implementation]
\label{def:ordinary-imp}
For $\ell\in \bZ_{\ge 0}$,
let $T$ be a randomized algorithm that takes $(r,q)\in \{0,1\}^\ell \times Q$ as input, and outputs $T(r,q)\in A$.
We say $T$ is an \emph{ordinary implementation} of a model $M$ with \emph{seed length} $\ell$ if for every \emph{seed} $r\in \{0,1\}^\ell$ there exists an object $B_r:Q\to\Delta_A$ such that
\begin{enumerate}
    \item for every $q\in Q$, $T(r,q)$ is distributed according to $B_r(q)$, where the randomness in $T(r,q)$ comes from the internal randomness in algorithm $T$;
    \item $B_r$ is distributed according to $M$ when $r$ is drawn uniformly from $\{0,1\}^\ell$.
\end{enumerate}
\end{definition}

While our goal is to output an ordinary implementation with a polynomial-length seed, following the approach in \cite{goldreich2010}, it is more convenient to first build implementations using a random oracle and then transform the implementation to an ordinary one using \Cref{lm:random-oracle}. 

\begin{definition}[Random-Oracle Implementation]
\label{def:random-imp}
Let $T$ be a randomized algorithm that takes a function $r:\{0,1\}^*\to\{0,1\}$ as an oracle. On an input $q\in Q$, the algorithm $T$ outputs $T^r(q)\in A$. We say $T$ is a \emph{random-oracle implementation} of a model $M$ if for every $r:\{0,1\}^*\to\{0,1\}$ there exists an object $B_r:Q\to \Delta_A$ such that
\begin{enumerate}
    \item for every $q\in Q$, $T^r(q)$ is distributed according to $B_r(q)$, where the randomness in $T^r(q)$ comes from the internal randomness in algorithm $T$;
    \item $B_r$ is distributed according to $M$ when $r$ is a uniformly random function from $\{0,1\}^*$ to $\{0,1\}$.
\end{enumerate}
\end{definition}

\begin{lemma}[Theorem 2.9 in \cite{goldreich2010}]
\label{lm:random-oracle}
Suppose that one-way functions exist.
There exists an algorithm $H$ with the following properties. 
Let $\cD$ be a class of distinguishers where each $D\in \cD$ is a circuit of size at most $W$ for some $W \ge 1$. Let $T$ be a random-oracle implementation of a model $M$ with circuit complexity at most $W$. Given $W,T$ and an arbitrary $\varepsilon\in (0,1)$ as input, the algorithm $H$ runs in time $\poly(W,1/\varepsilon)$ and outputs an ordinary implementation $T'$ of a model $M'$ where $T'$ has seed length and circuit complexity both being $\poly(W,1/\varepsilon)$, and
\[
|\E_{B'\sim M'}\Pr[D^{B'} = \acc] - \E_{B\sim M}\Pr[D^{B} = \acc]|\le \varepsilon \quad \text{for every }D\in \cD.
\]
Moreover, if $M$ is truthful w.r.t.\ an object class $\cB$, then $M'$ is also truthful w.r.t. $\cB$.
\end{lemma}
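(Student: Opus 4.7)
The plan is to instantiate the random oracle in $T$ with a pseudorandom function whose seed is short, and then verify that (a) composing a bounded distinguisher with the implementation yields a bounded circuit the PRF can fool, and (b) truthfulness is preserved pointwise so the substitution cannot break it.

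More concretely, since one-way functions exist, the GGM construction yields a family of pseudorandom functions $\{f_s : \{0,1\}^* \to \{0,1\}\}_{s\in\{0,1\}^\ell}$ of seed length $\ell = \poly(W,1/\varepsilon)$ that $(\varepsilon/2)$-fools every circuit of size $\poly(W,1/\varepsilon)$. (Technically GGM is stated for fixed-length domains, but a standard length-doubling trick or a domain-extension via a collision-resistant prefix handles the unbounded case; only polynomially many queries are ever made, so a sufficiently long fixed-length domain suffices.) I would then define $T'(s,q)$ to simulate $T^{f_s}(q)$, evaluating $f_s$ on demand whenever $T$'s internal execution would have queried the random oracle. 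The resulting $T'$ has circuit complexity $\poly(W,1/\varepsilon)$ and seed length $\ell$.

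For indistinguishability, fix any $D\in\cD$ of size $\le W$. Consider the single composed randomized circuit $C(\cdot)$ that, on oracle $r$, runs $D$ and every time $D$ would query its object, evaluates $T^r$ on $D$'s query. Since $D$ makes at most $W$ object queries and each invocation of $T$ has size at most $W$, the overall circuit $C$ has size $O(W^2) \le \poly(W,1/\varepsilon)$. By construction,
\[
\Pr_{r\text{ uniform}}[C^r=\acc] = \E_{B\sim M}\Pr[D^B=\acc],
\qquad
\Pr_{s\text{ uniform}}[C^{f_s}=\acc] = \E_{B'\sim M'}\Pr[D^{B'}=\acc].
\]
The PRF guarantee applied to $C$ then yields the required $\varepsilon$-closeness for every $D\in\cD$ (one can absorb constants by choosing the PRF to $(\varepsilon)$-fool circuits of the appropriate size). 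For truthfulness, observe that for each fixed seed $s$, the object $B'_s$ produced by $T'(s,\cdot)$ equals $B_{f_s}$, which lies in $\cB$ by truthfulness of $M$; hence $M'$ is truthful w.r.t.\ $\cB$.

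The main obstacle I expect is being careful that the simulation overhead when $D$ repeatedly invokes $T^r$ stays within the PRF's security bound; in particular one must amortize correctly over the at-most-$W$ oracle calls $D$ makes, and ensure that internal randomness of $T$ is handled as independent fresh randomness in $C$ (not as part of the oracle $r$), so that the only distinction between the two experiments is the source of $r$. Once the composition is set up cleanly, everything reduces to a standard hybrid argument against the PRF.
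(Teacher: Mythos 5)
Your approach is exactly the one the paper intends: the paper gives no detailed proof of this lemma beyond the one-line remark ``\Cref{lm:random-oracle} can be proved by using a pseudorandom function to emulate the random oracle,'' deferring to Theorem 2.9 of Goldreich--Goldwasser--Nussboim. Your fleshed-out argument---GGM PRF from one-way functions, compose the size-$W$ distinguisher with the size-$W$ implementation to get a $\poly(W)$-size adversary for the PRF, and observe that truthfulness holds pointwise in the seed because $B'_s = B_{f_s}$ lies in $\cB$---is a correct instantiation of that sketch.
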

\Cref{lm:random-oracle} can be proved by using a pseudorandom function to emulate the random oracle.

\paragraph{Learning.}
We describe the learners we aim to design in the definition below.
\begin{definition}[Learner]
Let $\cB$ be a class of objects $B:Q\to \Delta_A$ and $\cD$ be a class of distinguishers. An \emph{$(\varepsilon,\delta)$-learner} L for the class $\cB$ w.r.t.\ $\cD$ is an algorithm with the following properties. For any $B^*\in \cB$, given access to $B^*$, the learner outputs an implementation $T$ of a model $M$ such that with probability at least $1-\delta$, $M$ is $\varepsilon$-indistinguishable from $B^*$ w.r.t.\ $\cD$. 
\end{definition}

\paragraph{Other Notations.}
For $v\in \bR$, we define $\numcap(v)$ by capping its value into $[0,1]$, i.e.,
\[
    \numcap(v) = \begin{cases}
        v, \quad & \text{if }0\leq v\leq 1;\\
        1, \quad &\text{if }v>1;\\
        0, \quad &\text{if }v<0.
    \end{cases}
\]
Given a list of values $(v_1,\ldots,v_t)$ we define $\lcap(v_1,\ldots,v_t)$ by summing over the list and capping the value to $[0,1]$ in every iteration. We formally define it recursively: 
\begin{align}
\lcap(v_1) & = \numcap(v_1),\nonumber \\
\lcap(v_1,\ldots, v_t) & = \numcap(\lcap(v_1,\ldots,v_{t-1})+v_t). \label{eq:lcap}
\end{align}

\section{Learning Functions with Exponentially Large Domains}\label{sec:func-learning}
The goal of this section is to efficiently learn a generative model that is indistinguishable from a target function $f^*:X\to Y$ to a class $\cD$ of distinguishers. 
We allow the domain $X$ of the function to have exponential size $N:= |X|$, and require our learner to run in time $\polylog(N)$. 
This means that the learner cannot read the entire function $f^*$, and can only access it via random sample. 
Throughout the paper, our learners output an efficient random-oracle implementation $T$ of a model $M$, which can be turned in to an efficient ordinary implementation by \Cref{lm:random-oracle}.

\subsection{Learning Sample-Access Binary Functions}
\label{sec:sample-function}
We start by studying the case where the target object $B^*$ is the sample-access object induced by a binary function $f^*:X\to\{0,1\}$ (\Cref{def:sample-function}). 
We assume that every distinguisher $D\in \cD$ satisfies the following: when given access to a sample-access object $B$ induced by a function $f:X\to\{0,1\}$, the distinguisher asks a single query $\bot$, receives an answer $a = (x,y)\sim B(\bot)$, and outputs $D(x,y)\in \{\acc, \rej\}$. We allow the distinguisher itself to be randomized, and each distinguisher $D$ defines a function $g_D:X\to [-1,1]$ such that
\[
g_D(x) = \Pr[D(x,1) = \acc] - \Pr[D(x,0) = \acc] \quad \text{for every }x\in X.
\]
We use the following claim to relate a distinguisher $D\in \cD$ to the function $g_D:X\to[-1,1]$:
\begin{claim}
\label{claim:distinguisher-function}
For every distinguisher $D\in \cD$,
the following equation holds for every $x\in X$ and $y_1,y_2\in \{0,1\}$:
\[
\Pr[D(x,y_1) = \acc] - \Pr[D(x,y_2) = \acc] = y_1g_D(x) - y_2g_D(x).
\]
\end{claim}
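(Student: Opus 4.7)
The plan is to establish the simple identity $\Pr[D(x,y) = \acc] = \Pr[D(x,0) = \acc] + y\cdot g_D(x)$ for any $y \in \{0,1\}$, and then subtract two instances of this identity to obtain the claim. Since $y$ only takes two values, this identity is verified by case analysis: when $y = 0$, both sides equal $\Pr[D(x,0) = \acc]$; when $y = 1$, the right-hand side becomes $\Pr[D(x,0) = \acc] + g_D(x) = \Pr[D(x,0) = \acc] + \Pr[D(x,1) = \acc] - \Pr[D(x,0) = \acc] = \Pr[D(x,1) = \acc]$, matching the left-hand side by the definition of $g_D$.

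Having established the per-value identity, I would apply it with $y = y_1$ and with $y = y_2$ and subtract, noting that the $\Pr[D(x,0) = \acc]$ terms cancel, leaving
\[
\Pr[D(x,y_1) = \acc] - \Pr[D(x,y_2) = \acc] = y_1 g_D(x) - y_2 g_D(x),
\]
which is exactly the desired equation. No probabilistic machinery beyond the definition of $g_D$ is needed; the only facts used are the two-point structure of $\{0,1\}$ and linearity.

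There is no real obstacle here — the statement is essentially a restatement of the definition of $g_D$ after observing that a binary variable $y$ can be written as $y \cdot 1 + (1-y) \cdot 0$, so the acceptance probability is an affine function of $y$ whose slope is exactly $g_D(x)$. The main purpose of stating the claim explicitly is notational: it lets later arguments replace the distinguisher's acceptance probabilities on $(x, f(x))$ by the cleaner expression $f(x) g_D(x)$ up to an $x$-dependent additive term that cancels when comparing two functions $f, f'$ on the same input $x$.
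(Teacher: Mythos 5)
Your proof is correct and matches the paper's approach in substance: the paper simply notes the claim follows by checking the four choices of $(y_1,y_2)\in\{0,1\}^2$, and your per-value identity $\Pr[D(x,y)=\acc]=\Pr[D(x,0)=\acc]+y\,g_D(x)$ followed by subtraction is just a slightly tidier organization of the same case analysis.
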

The claim can be easily proved by considering the four possible choices of $(y_1,y_2)\in \{0,1\}\times\{0,1\}$.

As we show later in \Cref{sec:impossibility}, it is necessary to impose certain learnability assumptions on the distinguishers. To that end, we assume that there is an \emph{auditor} for the function class $\cG:=\{g_D:D\in \cD\}$, defined as follows:
\begin{definition}[Auditor]
\label{def:auditor-sample-function}
Let $\cD$ and $\cG$ be defined as above. 
We say an algorithm $\Lambda$ is an $(\varepsilon,\gamma,\delta)$-auditor for $\cG$ if it satisfies the following property. Given access to a sample-access object $B^*$ induced by a function $f^*:X\to\{0,1\}$ and taking a predictor $p:X\to[0,1]$ as an oracle, if there exists $g\in \cG$ such that
\[
|\E[f^*(x)g(x)] - \E[p(x)g(x)]| \ge \varepsilon,
\]
then $\Lambda$ outputs $\hat g:X\to[-1,1]$ satisfying the following with probability at least $1-\delta$:
\[
\E[f^*(x)\hat g(x)] - \E[p(x)\hat g(x)] \ge \gamma.
\]
\end{definition}

The auditor defined above can be viewed as a weak agnostic learner for the class $\cG$. When the domain $X$ is $\{0,1\}^n$ with size $N = 2^n$, many classes allow efficient auditors that run in time $\poly(n) = \polylog(N)$. Using an auditor for $\cG$, we prove the following theorem:
\begin{theorem}
\label{thm:sample-function}
Let the distinguisher class $\cD$ and the function class $\cG$ be defined above. Let $\varepsilon,\gamma,\delta,\delta'\in (0,1/2)$ be parameters satisfying $\gamma \le \varepsilon$ and $\delta' \le c\delta\gamma^2$ for a sufficiently small absolute constant $c > 0$. Let $\cB$ be the class of sample-access objects induced by binary functions $f:X\to\{0,1\}$. Let $\Lambda$ be an $(\varepsilon,\gamma,\delta')$-auditor for $\cG$ (\Cref{def:auditor-sample-function}). Then there exists an $(\varepsilon,\delta)$-learner $L$ for $\cB$ w.r.t.\ $\cD$.
Moreover, if the auditor $\Lambda$ runs in time at most $W_1$ and always outputs a function with circuit size at most $W_2$, then the learner $L$ runs in time $\poly(\gamma^{-1},\log(\delta^{-1}),W_1)$ and always outputs implementations with circuit complexity $\tilde O(\gamma^{-2}W_2)$.
\end{theorem}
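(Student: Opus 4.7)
The plan is to run a multi-accuracy style boosting procedure: maintain a predictor $p_t:X\to[0,1]$, initialized to $p_0\equiv 1/2$, and at each iteration $t$ feed $p_t$ to the auditor $\Lambda$ together with fresh sample access to $B^*$. If $\Lambda$ returns $\hat g_t$, I use fresh samples to estimate the advantage $\alpha_t := \E[f^*(x)\hat g_t(x)] - \E[p_t(x)\hat g_t(x)]$ up to accuracy $\gamma/4$. If $|\alpha_t| \le 3\gamma/4$, I terminate; otherwise I update
\[
p_{t+1}(x) = \numcap\bigl(p_t(x) + \eta\, \sign(\alpha_t)\,\hat g_t(x)\bigr)
\]
for a step size $\eta = \Theta(\gamma)$. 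This gives a final predictor of the form $p_T(x) = \lcap\bigl(1/2,\,\eta_1\hat g_1(x),\,\ldots,\,\eta_T\hat g_T(x)\bigr)$, which can be evaluated in circuit size $\tilde O(T\cdot W_2)$.

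The iteration count is bounded using the standard potential $\Phi_t := \E_x[(f^*(x)-p_t(x))^2]$. Opening the square and using that $\hat g_t$ has advantage at least $\gamma/2$ (after the sign correction and the estimation error) against $p_t$, each successful update decreases $\Phi_t$ by $\Omega(\gamma^2)$; because $0 \le \Phi_t \le 1$, the loop halts after $T = O(\gamma^{-2})$ rounds. When it halts, the contrapositive of the auditor guarantee implies that no $g \in \cG$ achieves $|\E[f^*(x)g(x)] - \E[p_T(x)g(x)]| \ge \varepsilon$, since otherwise $\Lambda$ would have returned a witness with advantage $\ge \gamma$ and the estimation step would have certified it.

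I then define the model $M$ to be the distribution over functions $f:X\to\{0,1\}$ obtained by sampling $f(x)\sim \ber(p_T(x))$ independently for every $x$; a random-oracle implementation simply uses the oracle at string $x$ to produce the $\ber(p_T(x))$ coin, with $p_T(x)$ evaluated on the fly from the stored $(\hat g_t)_{t\le T}$. To verify indistinguishability, fix $D\in\cD$. Using \Cref{claim:distinguisher-function} I write $\Pr[D(x,y)=\acc] = y\,g_D(x) + c_D(x)$ for $c_D(x) := \Pr[D(x,0)=\acc]$, which gives
\[
\Pr[D^{B^*}=\acc] - \E_{B\sim M}\Pr[D^B=\acc] = \E_x\bigl[(f^*(x)-p_T(x))\,g_D(x)\bigr],
\]
and this is bounded by $\varepsilon$ in absolute value by the termination condition together with $g_D\in\cG$.

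The main obstacle is bookkeeping the failure probability: the auditor can fail with probability $\delta'$ per call, and the sample-based estimation of $\alpha_t$ needs its own Chernoff bound per round. By taking $\delta' \le c\delta\gamma^2$ and drawing $\tilde O(\gamma^{-2}\log(1/\delta))$ samples per iteration, I can union-bound both sources of error over the $O(\gamma^{-2})$ rounds so that with probability $\ge 1-\delta$ every update is legitimate and the termination test is reliable, which is exactly where the quantitative relationship $\delta' \le c\delta\gamma^2$ in the hypothesis is consumed. The running time is then $\poly(\gamma^{-1},\log\delta^{-1},W_1)$ and the final circuit size $\tilde O(\gamma^{-2} W_2)$ as claimed; converting the random-oracle implementation into an ordinary one is handled by \Cref{lm:random-oracle}.
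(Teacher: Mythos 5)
Your proposal is correct and follows essentially the same route as the paper: the paper imports the multiaccuracy boosting you spell out as a black box (\Cref{lm:predictor}, cited from prior work) and then, exactly as you do, defines the model by drawing $f(x)\sim\ber(p(x))$ independently via a random oracle and verifies indistinguishability through \Cref{claim:distinguisher-function}. The only difference is that you re-derive the potential-function argument behind \Cref{lm:predictor} rather than citing it.
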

We apply the following result in the algorithmic fairness literature to prove \Cref{thm:sample-function}:
\begin{lemma}[Learning a multiaccurate predictor \cite{TrevisanTV09,hebert2018multicalibration,kim2019multiaccuracy}]
\label{lm:predictor}
Let the distinguisher class $\cD$ and the functions class $\cG$ be defined as above. 
Let $\varepsilon,\gamma,\delta,\delta'\in (0,1/2)$ be parameters satisfying $\gamma \le \varepsilon$ and $\delta' \le c\delta\gamma^2$ for a sufficiently small absolute constant $c > 0$.
Let $\Lambda$ be an $(\varepsilon,\gamma,\delta')$-auditor for $\cG$.
Given access to a sample-access object $B^*$ induced by a function $f^*:X\to\{0,1\}$, there is a learner $L$ that outputs a predictor $p:X\to[0,1]$ such that with probability at least $1-\delta$,
\begin{equation}
\label{eq:predictor}
|\E[f^*(x)g(x)] - \E[p(x)g(x)]| \le \varepsilon \quad \text{for every }g\in \cG.
\end{equation}
Moreover, if the auditor $\Lambda$ runs in time at most $W_1$ and always outputs a function with circuit size at most $W_2$, then the learner $L$ runs in time $\poly(\gamma^{-1},\log(\delta^{-1}),W_1)$ and always outputs implementations with circuit complexity $\tilde O(\gamma^{-2}W_2)$.
\end{lemma}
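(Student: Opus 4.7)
The plan is to prove \Cref{lm:predictor} by an iterative boosting-style algorithm that treats the auditor $\Lambda$ as a weak learner. The learner maintains a sequence of predictors $p_0,p_1,\ldots$, starting from the constant $p_0\equiv 1/2$. At round $t$, it invokes $\Lambda$ on $p_{t-1}$ to obtain a function $\hat g_t:X\to[-1,1]$, draws $m=\tilde O(\gamma^{-2}\log(T/\delta))$ fresh samples $(x,f^*(x))\sim B^*(\bot)$, and forms an empirical estimate $\hat\Delta_t$ of the correlation gap $\Delta_t := \E[f^*(x)\hat g_t(x)] - \E[p_{t-1}(x)\hat g_t(x)]$. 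If $\hat\Delta_t<3\gamma/4$ the learner halts and outputs (the circuit computing) $p_{t-1}$; otherwise it updates $p_t(x) = \numcap(p_{t-1}(x)+\eta \hat g_t(x))$ with step size $\eta=\gamma/2$ and continues. With this recipe $p_t = \lcap(1/2,\eta\hat g_1,\ldots,\eta\hat g_t)$, so the circuit representing $p_t$ grows linearly in $t$.

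The correctness proof hinges on the $\ell_2$ potential $\Phi(p):=\E[(f^*(x)-p(x))^2]\in[0,1]$. Because $f^*$ is $\{0,1\}$-valued and $\numcap$ is the projection onto $[0,1]$, clipping is non-expansive toward $f^*$, giving
\[
\Phi(p_t) \le \E\bigl[(f^*(x)-p_{t-1}(x)-\eta\hat g_t(x))^2\bigr] = \Phi(p_{t-1}) - 2\eta\Delta_t + \eta^2\E[\hat g_t(x)^2] \le \Phi(p_{t-1}) - 2\eta\Delta_t + \eta^2.
\]
Choosing $m$ so that by Hoeffding and a union bound all of the $T=\lceil 8/\gamma^2\rceil$ estimates are within $\gamma/4$ of their expectations with probability at least $1-\delta/2$, the non-halting condition $\hat\Delta_t\ge 3\gamma/4$ forces $\Delta_t\ge\gamma/2$, so each productive round reduces $\Phi$ by at least $\eta\gamma-\eta^2=\gamma^2/4$. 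Since $\Phi(p_0)\le 1$, the algorithm must halt within $T$ rounds.

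When it halts, $\hat\Delta_t<3\gamma/4$ together with the same concentration bound gives $\Delta_t<\gamma$ on the output predictor $p$. Union-bounding the $T$ calls to $\Lambda$, each failing with probability at most $\delta'\le c\delta\gamma^2$, the total auditor failure probability is at most $\delta/2$ by the hypothesis on $\delta'$; outside this failure event, the contrapositive of the auditor guarantee applied at the final step yields $|\E[f^*(x)g(x)]-\E[p(x)g(x)]|<\varepsilon$ for every $g\in\cG$, which is exactly \eqref{eq:predictor}. The two failure probabilities sum to $\delta$, and the complexity bounds follow: $T=O(\gamma^{-2})$ calls to $\Lambda$ (cost $W_1$ each) plus $T\cdot m$ sample evaluations, with the final circuit being $\lcap$ over $T$ weak learners of size $W_2$, giving total runtime $\poly(\gamma^{-1},\log\delta^{-1},W_1)$ and circuit complexity $\tilde O(\gamma^{-2}W_2)$. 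The main obstacle is a careful bookkeeping of constants so that the empirical slack $\gamma/4$ simultaneously (i)~certifies a $\Phi$-decrease of $\Omega(\gamma^2)$ on every productive round (requiring $2\eta\Delta_t>\eta^2$) and (ii)~at halting time leaves $\Delta_t$ strictly below $\gamma$ so the auditor's contrapositive produces the target $\varepsilon$-multiaccuracy rather than a weaker bound.
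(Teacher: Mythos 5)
The paper does not prove \Cref{lm:predictor} itself; it imports the result from the cited works \cite{TrevisanTV09,hebert2018multicalibration,kim2019multiaccuracy}, and your argument is essentially the standard proof from those sources: boosting against the auditor with the $\ell_2$ potential $\E[(f^*(x)-p(x))^2]$, non-expansiveness of clipping, an $\Omega(\gamma^2)$ per-round decrease yielding $O(\gamma^{-2})$ rounds, and a union bound over auditor failures that is exactly what the hypothesis $\delta'\le c\delta\gamma^2$ is for. The constants check out ($2\eta\Delta_t-\eta^2\ge\gamma^2/4$ with $\eta=\gamma/2$, $\Delta_t\ge\gamma/2$), and the halting logic correctly uses the contrapositive of the auditor guarantee; the only cosmetic omission is the case where the auditor declines to return any function, in which case the learner should simply halt, with the same justification.
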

\begin{proof}[Proof of \Cref{thm:sample-function}]
Let $B^*$ be the sample-access object induced by a binary function $f^*:X\to\{0,1\}$.
It suffices to turn the predictor $p$ from \Cref{lm:predictor} into an efficient random-oracle implementation $T$ of a model $M$ that is $\varepsilon$-indistinguishable from $B^*$ w.r.t.\ $\cD$. Let us consider a random function $f:X\to\{0,1\}$ such that for every $x\in X$, the function value $f(x)$ is drawn independently from $\ber(p(x))$. By \Cref{claim:distinguisher-function}, for every distinguisher $D\in \cD$, we have
\begin{align}
& |\E_x[\Pr[D(x,f^*(x)) = \acc]] - \E_{f,x}[\Pr[D(x,f(x)) = \acc]]|\notag\\
= {} & |\E_x[f^*(x)g_D(x)] - \E[p(x)g_D(x)]|.\label{eq:sample-function-1}
\end{align}
Therefore, if we choose the model $M$ to be the distribution over the sample-access object $B_f$ induced by the random function $f$, then \eqref{eq:predictor} and \eqref{eq:sample-function-1} imply that $M$ is $\varepsilon$-indistinguishable from $B^*$ w.r.t.\ $\cD$. To implement $M$ efficiently using a random oracle $r:\{0,1\}^*\to\{0,1\}$, for each $x\in X$, we designate a set $S_x\subseteq\{0,1\}^*$ so that $S_x\cap S_{x'} = \emptyset$ for $x\ne x'$.
We first generate a random $x\in X$ uniformly from $X$, and then draw $y\sim \ber(p(x))$ using the randomness from $S_x$. We return the answer $(x,y)$.
This is an implementation for $M$ for the following reasons. For every fixed $r$, there exists a function $f:X\to\{0,1\}$ such that the answer $(x,y)$ always satisfies $y = f(x)$ and the distribution of $x$ is uniform in $X$. Moreover, when $r$ is random, the conditional distribution of $y$ given $x$ is $\ber(p(x))$.
\end{proof}

\subsection{Truthful Learning That Preserves Support Size}
Some desirable properties of a generative model are global and cannot be enforced using computationally bounded distinguishers alone. 
This motivated \cite{goldreich2010} to introduce the notion of \emph{truthfulness} that ensures such global properties beyond indistinguishability.
Here we focus on a natural global property of a binary function $f:X\to \{0,1\}$: the size of its support $\supp(f):= \{x\in X:f(x) = 1\}$. 
The model $M$ we create in \Cref{sec:sample-function} using the multiaccurate predictor $p$ from \Cref{lm:predictor} may generate functions $f$ with support size different from the target function $f^*$. Indeed, even if $\sum_{x\in X}p(x) = |\supp(f^*)|$, a random function $f$ with each entry $f(x)$ independently drawn from $\ber(p(x))$ is not guaranteed to satisfy $|\supp(f)| = |\supp(f^*)|$. Now we show an efficient learner that outputs truthful models that preserve the support size of the generated functions.

\label{sec:fixed-weight}
\begin{theorem}
\label{thm:fixed-weight}
Let the distinguisher class $\cD$ and the function class $\cG$ be defined as in \Cref{sec:sample-function}. Let $\varepsilon,\gamma,\delta,\delta'\in (0,1/2)$ be parameters satisfying $\gamma \le \varepsilon$ and $\delta' \le c\delta\gamma^2$ for a sufficiently small absolute constant $c > 0$. Let $\cB$ be the class of sample-access objects induced by binary functions $f:X\to\{0,1\}$ satisfying $|\supp(f)| = k$, where $X = \{0,1\}^n$. Let $\Lambda$ be an $(\varepsilon,\gamma,\delta')$-auditor for $\cG$ (\Cref{def:auditor-sample-function}). Then there exists an $(\varepsilon,\delta)$-learner $L$ for $\cB$ w.r.t.\ $\cD$ and the learner always outputs a random-oracle implementation of a model $M$ that is truthful w.r.t.\ $\cB$.
Moreover, if the auditor $\Lambda$ runs in time at most $W_1$ and always outputs a function with circuit size at most $W_2$, then the learner $L$ runs in time $\poly(n,\gamma^{-1},\log(\delta^{-1}),W_1)$ and always outputs an implementation with circuit complexity $\poly(n,\gamma^{-1},W_2)$.
\end{theorem}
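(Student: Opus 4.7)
The plan is to combine a multiaccurate predictor coming from \Cref{lm:predictor} with a tree-based sampling scheme that enforces support size exactly, closing the integrality and coarsening gaps via a two-player zero-sum game solved by multiplicative weights. First I would use the random oracle to build a pseudorandom permutation $\pi:\{0,1\}^n\to\{0,1\}^n$ (e.g.\ Luby--Rackoff) and partition $X=\{0,1\}^n$ into $2^d$ blocks $X_1,\ldots,X_{2^d}$ of equal size $m:=2^{n-d}$, where $x\in X_\ell$ iff the leading $d$ bits of $\pi(x)$ encode $\ell$, for an appropriately chosen $d=\Theta(\log(n/\gamma))$. Pseudorandomness of $\pi$ ensures that from any polynomial-size distinguisher's view each $X_\ell$ behaves like a uniformly random size-$m$ subset of $X$, which will be crucial for taming the coarsening error below.

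Next I set up the zero-sum game. A pure strategy of the budget player $C$ is an integer vector $\kappa=(k_1,\ldots,k_{2^d})$ with $0\le k_\ell\le m$ and $\sum_\ell k_\ell=k$, inducing the piecewise-constant predictor $p_\kappa(x):=k_{\ell(x)}/m$. A pure strategy of the distinguisher player $D$ is a pair $(D,\sigma)\in\cD\times\{+1,-1\}$, with payoff $\sigma\cdot\E_x[(f^*(x)-p_\kappa(x))g_D(x)]$. Player $D$ runs multiplicative weights for $T=\Theta(\log(1/\delta)/\gamma^2)$ rounds, producing in each round $t$ an averaged function $g_t:X\to[-1,1]$. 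Player $C$ implements a near-best response by estimating, from $\poly(n,1/\gamma)$ sample-access queries to $B^*$, the block averages $\bar g_{t,\ell}:=\frac{1}{m}\sum_{x\in X_\ell}g_t(x)$ and the target correlations $s_\ell:=\sum_{x\in X_\ell}f^*(x)g_t(x)$, and then greedily choosing integer $k_\ell^{(t)}$ with $\sum_\ell k_\ell^{(t)}=k$ that approximately minimize $|\sum_\ell(s_\ell-k_\ell^{(t)}\bar g_{t,\ell})|$. This is a one-constraint integer program and is solvable up to additive $O(1)$ rounding error per block by a greedy exchange argument. The auditor $\Lambda$ of \Cref{def:auditor-sample-function} is used to verify progress: at any round, calling $\Lambda$ on the current $p_{\kappa^{(t)}}$ either certifies indistinguishability or returns a violating $g$, which is what drives the MW dynamics.

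A standard regret analysis of multiplicative weights combined with $C$'s near-optimal responses yields that the empirical distribution $\bar\kappa$ over $\kappa^{(1)},\ldots,\kappa^{(T)}$ satisfies $\sigma\E_x[(f^*-p_{\bar\kappa})g_D]\le\gamma$ for every $(D,\sigma)$, where $p_{\bar\kappa}(x):=\E_{\kappa\sim\bar\kappa}[p_\kappa(x)]$; optimizing over $\sigma\in\{\pm 1\}$ delivers $|\E_x[(f^*-p_{\bar\kappa})g_D]|\le\varepsilon$ for every $D\in\cD$. Pseudorandomness of $\pi$ enters here to control the coarsening error forced on us by making the predictor constant on each block: for any fixed $g\in\cG$, a concentration argument over the random choice of $\pi$ shows that within-block fluctuations of $g$ contribute at most $O(\gamma)$ to the gap, with probability at least $1-\delta/2$. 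Define the model $M$ by: sample $t\in[T]$ uniformly, then independently for each block $\ell$ choose a uniformly random size-$k_\ell^{(t)}$ subset of $X_\ell$ and set $f$ to $1$ on the union of these subsets and $0$ elsewhere. Every $f\sim M$ has $|\supp(f)|=k$, so $M$ is truthful w.r.t.\ $\cB$, and $\E_{f\sim M}[f(x)]=p_{\bar\kappa}(x)$, so \Cref{claim:distinguisher-function} yields the desired $\varepsilon$-indistinguishability. Uniform sampling of fixed-size random subsets inside each block can be implemented with $\poly(n)$ overhead per query using the random oracle via the tree-walking technique of \cite{goldreich2010} (descend the sub-binary-tree corresponding to $X_\ell$, splitting the remaining budget between children using oracle-derived randomness until a single $x$ is reached).

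The main obstacle is the interaction of three sources of error: (i) the finite-sample estimation error of $s_\ell$ and $\bar g_{t,\ell}$ in each round, (ii) the rounding to integer budgets summing exactly to $k$, and (iii) the block coarsening forced by the piecewise-constant predictor. All three are simultaneously controlled by the game — randomization over rounds restores fractional-like behaviour for (ii), the pseudorandom permutation neutralizes (iii) against polynomial-size distinguishers, and the MW regret bound absorbs (i) — but balancing $d$, $T$, and the sample sizes so that the final error is $\varepsilon$ while the circuit complexity stays $\poly(n,\gamma^{-1},W_2)$ is the most delicate accounting. The theorem then follows directly from this construction together with \Cref{lm:random-oracle} to present the result as a random-oracle implementation in the form required.
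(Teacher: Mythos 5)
There is a genuine gap, and it is fatal to the construction as written: a predictor that is constant on the blocks of a \emph{data-independent} (pseudorandom) partition cannot in general be indistinguishable from $f^*$. Your model satisfies $\E_{f\sim M}[f(x)]=p_{\bar\kappa}(x)$ with $p_{\bar\kappa}$ constant on each block $X_\ell$, so indistinguishability requires $|\E_x[(f^*(x)-p_{\bar\kappa}(x))g_D(x)]|\le\varepsilon$ for a block-constant $p_{\bar\kappa}$. Take $\cD$ containing the distinguisher with $g_D=\one_S$ for the set $S=\supp(f^*)$, $|S|=k=N/2$ (such classes certainly admit auditors). Then $\E_x[f^*(x)g_D(x)]=1/2$, but because the blocks are pseudorandom, $\frac1m\sum_{x\in X_\ell}g_D(x)\approx 1/2$ for every $\ell$, so $\E_x[p_{\bar\kappa}(x)g_D(x)]\approx\frac12\E_x[p_{\bar\kappa}(x)]=k/(2N)=1/4$ \emph{no matter how the budgets $k_\ell$ are chosen}. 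The same computation shows your player $C$ has no near-best response: you ask $C$ to make $\sum_\ell k_\ell\bar g_{t,\ell}$ track $\sum_x f^*(x)g_t(x)$, but when all $\bar g_{t,\ell}$ concentrate at the global mean of $g_t$ the left side is pinned at $k\cdot\E_x[g_t(x)]$ regardless of $\kappa$. So the pseudorandom permutation does not ``neutralize'' the coarsening error --- it is precisely what creates an irreducible $\Omega(1)$ error. (There is also a secondary structural problem: the MW player in your game ranges over $\cD$, which can be exponentially large, while the auditor is only a best-response oracle; but this is repairable, unlike the coarsening issue.)

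The paper avoids this by decoupling learning from truthfulness. It first runs the auditor-based boosting of \Cref{lm:predictor} (augmented with the constant function $g_1$) to obtain an \emph{unrestricted} multiaccurate predictor $p$ with $|\E[(f^*-p)g]|\le\varepsilon/4$ for all $g\in\cG$ and $|k-\sum_x p(x)|\le\varepsilon|X|/4$. The entire remaining problem is then to realize $p$ \emph{entrywise}: \Cref{lm:fixed-weight} produces a distribution over support-$k$ binary functions with $\sum_x|\E f_B(x)-p(x)|\le|k-\sum_x p(x)|+\varepsilon|X|$, which transfers to every distinguisher at once via \Cref{claim:distinguisher-function}. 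Randomization over fixed-size subsets happens only inside \emph{level sets of $p$} (where it is harmless because all points share nearly the same target value), the multiplicative-weights game is played against coordinate weights in $\Delta_{2N}$ to control $\|\frac1T\sum_i f\sps i-p\|_\infty$ (with the best-response player using the sorting argument of \Cref{claim:fixed-weight-helper}), and a depth-$n$ tree with sampled budget splits propagates the global constraint $k$ down to $\Theta(1/\varepsilon)$-size leaves. If you want to keep the spirit of your construction, the essential repair is to partition by the learned predictor's values rather than by a pseudorandom permutation, and to prove an $\ell_1$/$\ell_\infty$ pointwise-matching guarantee rather than a per-block correlation-matching one.
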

We prove \Cref{thm:fixed-weight} using \Cref{lm:predictor} and the following lemma:
\begin{lemma}[Fixed-weight implementation from a predictor]
\label{lm:fixed-weight}
Define $X = \{0,1\}^n$ for a positive integer $n$.
Let $k$ be an integer satisfying $0 \le k \le |X|$ and let $\varepsilon\in (0,1)$ be a parameter. There exists a $\poly(n,\varepsilon^{-1})$-time implementation $T$ such that $T$ takes an arbitrary predictor $p:X\to [0,1]$ as an oracle and $T$ implements a model $M$ such that each object $B\sim M$ is the sample-access object induced by a binary function $f_B:X\to\{0,1\}$. Moreover, for every predictor $p:X\to [0,1]$, the resulting model $M$ satisfies
\begin{enumerate}
    \item $\Pr_{B\sim M}[\sum_{x\in X}f_B(x) = k] = 1$;
    \item $\sum_{x\in X}|\E_{B\sim M}f_B(x) - p(x)| \le |k - \sum_{x\in X}p(x)| + \varepsilon|X|$.
\end{enumerate}
\end{lemma}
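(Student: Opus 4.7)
My plan is to adapt the binary-tree ``budget-distribution'' construction of Goldreich--Goldwasser--Nussboim~\cite{goldreich2010} to the case where the reference density is an arbitrary oracle predictor $p$ rather than the uniform distribution. Fix the complete binary tree of depth $n$ whose leaves are identified with $X=\{0,1\}^n$; for an internal node $v$ at depth $i$, let $B_v\subseteq X$ denote the $2^{n-i}$ leaves in $v$'s subtree and $s_v:=\sum_{x\in B_v}p(x)$. The implementation $T$ will assign, for each random oracle $r$, an integer budget $b_v\in\{0,\ldots,|B_v|\}$ at every node, with $b_{\text{root}}:=k$ and the invariant $b_v=b_{v_0}+b_{v_1}$ preserved at every split; setting $f(x):=b_x\in\{0,1\}$ at the leaves then yields $\sum_x f(x)=k$ by telescoping, giving item~(1) for free.

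The split at an internal node $v$ works as follows. Using $m=\poly(n,\varepsilon^{-1})$ oracle bits keyed by the identifier of $v$, I draw $m$ uniform samples from each of $B_{v_0}$ and $B_{v_1}$, query $p$ on them to form an estimator $\hat r_v$ of the ideal ratio $r_v^*:=s_{v_0}/s_v$, then set $\tilde b_{v_0}:=b_v\cdot\hat r_v$, cap it into the feasibility interval $[\max(0,b_v-|B_{v_1}|),\min(b_v,|B_{v_0}|)]$, and round to an integer via an unbiased coin $\ber(\tilde b_{v_0}-\lfloor\tilde b_{v_0}\rfloor)$ drawn from the oracle. Keying all node-level randomness by the node itself makes $b_v$ a deterministic function of the root-to-$v$ path and $r$, so the implementation is consistent across queries. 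On a single sample query, $T$ picks a uniform $x\in X$, walks just the root-to-$x$ path executing the $n$ splits it encounters, and returns $(x,b_x)$; the total cost is $n\cdot\poly(n,\varepsilon^{-1})=\poly(n,\varepsilon^{-1})$.

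For item~(2), I track $\Delta_v:=\E[b_v]-s_v$, which starts at $k-\sum_x p(x)$ at the root and ends at $\E[f(x)]-p(x)$ at a leaf $x$. Because the randomness at $v$ is independent of the randomness at its ancestors (which determine $b_v$), and because the rounding is unbiased conditional on $\tilde b_{v_0}$, I get $\E[b_{v_0}]=\E[b_v]\cdot\E[\hat r_v]$, yielding the recurrence $\Delta_{v_0}=r_v^*\Delta_v+\E[b_v]\,\delta_v$ where $\delta_v:=\E[\hat r_v]-r_v^*$ is the estimator's bias. Iterating this along a root-to-$x$ path and using the telescoping identity $\prod_i r_{v_i}^*=p(x)/\sum_x p(x)$, the noiseless $\Delta_{\text{root}}$ part contributes $\sum_x|\Delta_x|=|k-\sum p|$ on the nose, while the bias at a single node $v$ perturbs $\Delta$ oppositely across its two subtrees and aggregates to at most $2\E[b_v]|\delta_v|$ of $L_1$ error; summing over the nodes at a single level gives $\le 2k\max_v|\delta_v|\le 2|X|\max_v|\delta_v|$, and summing over $n$ levels gives $2n|X|\max_v|\delta_v|$, which drops below $\varepsilon|X|$ once $m$ is chosen polynomially large in $n$ and $1/\varepsilon$.

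The main obstacle is bounding $|\delta_v|$ uniformly across all nodes, because the naive sample-mean ratio estimator has a delta-method bias of order $1/(m\bar p_v)$, which can blow up at low-mass nodes where $\bar p_v:=s_v/|B_v|$ is tiny. The fix I plan to use is a two-regime split: at ``heavy'' nodes (say $s_v\ge\tau$ for a polynomially small threshold $\tau$), the estimator is sufficiently accurate that the telescoped bias stays within the $\varepsilon|X|$ budget; at ``light'' nodes, the algorithm falls back on a trivial random split and the entire subtree's contribution to $\sum_x|\Delta_x|$ is controlled using the fact that masses over a stopping antichain satisfy $\sum_{v\in\text{antichain}}s_v\le\sum p$, so the total leakage from the light regime can also be absorbed into the $\varepsilon|X|$ slack. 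Making this case analysis fit together cleanly while preserving both the $\poly(n,\varepsilon^{-1})$ running time and the oracle-implementable (hence query-consistent) structure of $T$ is the delicate technical point of the argument.
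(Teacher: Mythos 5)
Your construction takes a genuinely different route from the paper's, and the key place where it diverges is exactly where it runs into trouble. The paper does \emph{not} descend the tree all the way to singleton leaves, nor does it estimate a ratio. At each split it estimates the \emph{sum} $s_{z\|0}=\sum_{x\in X_{z\|0}}p(x)$ directly by $\ell_{z\|0}=\frac{|X_{z\|0}|}{m}\sum p(x_i)$, sets $\hat k_{z\|0}=\lfloor\ell_{z\|0}\rfloor$ and $\hat k_{z\|1}=k_z-\hat k_{z\|0}$ (so the child budget does not come from multiplying $k_z$ by a noisy fraction), and it truncates the tree at the largest level $n'$ with $2^{n-n'}\ge 8/\varepsilon$, handing each size-$\Theta(1/\varepsilon)$ sub-block $X_z$ to a multiplicative-weights subroutine (\Cref{lm:fixed-weight-mw-2}) applied to a shifted predictor $p'$ with $\sum_{x\in X_z}p'(x)=k_z$. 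The estimation error per level is \emph{additive} in the size of the block ($\pm|X_z|/\sqrt m$), so it sums to $\varepsilon|X|/O(n)$ per level and stays $O(\varepsilon|X|)$ overall; the $+2$ integer rounding error per split also telescopes to $\le 2^{n'+1}\le\varepsilon|X|/4$ precisely because the recursion stops at $n'$. No node is ever required to have nontrivial density for the analysis to go through.

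Your ratio estimator is the part that does not survive scrutiny, and the proposed two-regime patch does not clearly close the gap. First, you set the threshold on the \emph{mass} $s_v$ rather than the \emph{density} $\bar p_v=s_v/|B_v|$, but the delta-method bias you quote is $O(|B_v|/(m\,s_v))$; with a polynomially small threshold on $s_v$ and $|B_v|$ close to $|X|=2^n$, the bias at high-level ``heavy'' nodes is still exponentially large unless $m$ is exponential. Fixing this by thresholding $\bar p_v$ instead is plausible, but then the claim that the light regime is absorbed via $\sum_{v\in\text{antichain}}s_v\le\sum_x p(x)$ only bounds the $p$-mass leaking to light nodes by $\tau|X|$; the $L_1$ error from a light subtree rooted at $v$ is on the order of $b_v+s_v$, and controlling $\sum_v b_v$ over the light antichain requires knowing that the budgets $b_v$ already track $s_v$ closely, which is the very quantity whose accumulated error the whole argument is trying to bound — a circularity your sketch does not resolve. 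Finally, by descending to singleton leaves you guarantee that the last several levels contain nodes with $s_v=0$ or $s_v\approx 0$ (e.g.\ $B_v=\{x\}$ with $p(x)=0$), where $r_v^*$ is $0/0$; the unbiased-rounding step and the capping to the feasibility interval also introduce bias near the boundary that your $\Delta_{v_0}=r_v^*\Delta_v+\E[b_v]\delta_v$ recurrence ignores. The paper's truncation-plus-sum-estimator design avoids all three issues simultaneously, which is why it opts for that structure rather than the pure top-down ratio split you propose.
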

We first prove \Cref{thm:fixed-weight} using \Cref{lm:fixed-weight} and then prove \Cref{lm:fixed-weight}.
\begin{proof}[Proof of \Cref{thm:fixed-weight}]
We start with the observation that using $\Lambda$, we can build an auditor $\Lambda'$ for $\cG\cup \{g_1\}$, where $g_1$ is the constant $1$ function: $g_1(x) = 1$ for every $x\in X$. This is because we can accurately estimate $\E[f^*(x)g_1(x)] - \E[p(x)g_1(x)]$ using random examples $(x,f^*(x),p(x))$ for $x$ drawn uniformly from $X$.

Let $B^*$ be the sample-access object induced by a function $f^*:X\to\{0,1\}$ satisfying $\supp(f^*) = k$. By \Cref{lm:predictor}, we can learn a predictor $p:X\to\{0,1\}$ such that
\begin{equation}
\label{eq:fixed-weight-thm-0}
|\E_x[f^*(x)g(x)] - \E_x[p(x)g(x)]| \le \varepsilon/4 \quad \text{for every }g\in \cG \cup \{g_1\}.
\end{equation}
In particular, choosing $g = g_1$, we have
\[
\left|k - \sum_{x\in X}p(x)\right| = |X|\cdot |\E[f^*(x)g_1(x)] - \E[p(x)g_1(x)]| \le \varepsilon |X|/4.
\]

We apply \Cref{lm:fixed-weight} on $p$ to obtain an implementation $T$ of a model $M$ such that each object $B\sim M$ is the sample-access object induced by a binary function $f_B:X\to\{0,1\}$. \Cref{lm:fixed-weight} allows us to ensure that
\begin{align}
\Pr\nolimits_{B\sim M}[|\supp(f_B)| = k] & = 1,\label{eq:fixed-weight-thm-1}\\
\sum_{x\in X} |\E_{B\sim M}f_B(x) - p(x) | & \le \varepsilon |X|/2. \label{eq:fixed-weight-thm-2}
\end{align}

 It remains to prove that $M$ is $\varepsilon$-indistinguishable from $B^*$ w.r.t.\ $\cD$ and that $M$ is truthful w.r.t.\ $\cB$. The truthfulness follows immediately from \eqref{eq:fixed-weight-thm-1}. We prove indistinguishability below. For every $g\in \cG$, we have
\begin{align}
|\E_x[p(x)g(x)] - \E_{B\sim M}\E_x[f_B(x)g(x)]| & \le \Big|\E_x\Big[g(x)\E_{B\sim M}[f_B(x) - p(x)]\Big]\Big|\notag \\
& \le \E_x|\E_{B\sim M}f_B(x) - p(x)|\notag \\
& \le \varepsilon/2, \label{eq:fixed-weight-thm-3}
\end{align}
where the last inequality follows from \eqref{eq:fixed-weight-thm-2}.
Combining \eqref{eq:fixed-weight-thm-0} and \eqref{eq:fixed-weight-thm-3}, for every $g\in \cG$,
\[
|\E_x[f^*(x)g(x)] - \E_{B\sim M}\E_x[f_B(x)g(x)]| \le \varepsilon.
\]
By \Cref{claim:distinguisher-function}, this implies that $M$ is $\varepsilon$-indistinguishable from $B^*$ w.r.t.\ $\cD$.
\end{proof}
In the following sub-sections, we prove \Cref{lm:fixed-weight} starting with an inefficient implementation and building to more and more efficient ones, where we use $N = 2^n$ to denote the domain size $|X|$. For convenience, we sometimes take $[N]:= \{1,\ldots,N\}$ to be our domain in place of $X = \{0,1\}^n$, in which case we allow $N$ to be not necessarily a power of $2$.

\subsubsection{Step 1: Exact Model with \texorpdfstring{$2^{O(N)}$}{2\^{}O(N)} Time Implementation}
We prove the following lemma which can be viewed as a version of \Cref{lm:fixed-weight} with $\alpha = \varepsilon = 0$ and without efficiency requirement for the model:
\begin{lemma}
\label{lm:fixed-weight-exact}
Let $p:[N] \to [0,1]$ be a predictor satisfying $\sum_{x\in [N]}p(x) = k$ for some integer $k$. There exists a distribution $\mu$ over \emph{binary} functions $f:[N] \to \{0,1\}$ such that
\begin{enumerate}
\item $\Pr_{f\sim\mu}[\sum_{x\in [N]}f(x) = k] = 1$, and
\item for every $x\in [N]$, $\E_{f\sim\mu}[f(x)] = p(x)$.
\end{enumerate}
\end{lemma}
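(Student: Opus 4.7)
The plan is to identify the set $P = \{p \in [0,1]^N : \sum_{x\in[N]} p(x) = k\}$ with the convex hull of its $0/1$-valued vertices (i.e., indicator vectors of $k$-subsets of $[N]$), so that writing $p$ as a convex combination of such indicators directly yields the distribution $\mu$ with the desired properties. This is the hypersimplex, and the result is essentially the extreme-point characterization of that polytope.

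First I would verify that the vertices of $P$ are precisely the binary vectors $f \in \{0,1\}^N$ with $\sum_{x} f(x) = k$. If $v \in P$ had two coordinates $v_i, v_j \in (0,1)$, then for small $\eta > 0$ both $v \pm \eta(e_i - e_j)$ lie in $P$, so $v$ is not extreme; hence a vertex has at most one fractional coordinate. Since the sum of its coordinates must equal the integer $k$, that last coordinate is forced to $0$ or $1$. Conversely, every $0/1$ vector with coordinate sum $k$ lies in $P$ and is extreme (it saturates $N$ linearly independent facet inequalities).

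Next, since $P$ is a bounded polyhedron, the Minkowski--Weyl theorem (equivalently, Carath\'eodory applied to the vertex set) gives a representation $p = \sum_i \lambda_i f_i$ with $\lambda_i \ge 0$, $\sum_i \lambda_i = 1$, and each $f_i \in \{0,1\}^N$ satisfying $\sum_x f_i(x) = k$. Defining $\mu$ to place mass $\lambda_i$ on $f_i$ (and $0$ on all other binary functions) produces a distribution supported entirely on functions of support size $k$, and for each $x \in [N]$ we have $\E_{f\sim\mu}[f(x)] = \sum_i \lambda_i f_i(x) = p(x)$, matching both conclusions of the lemma.

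There is essentially no technical obstacle here: the lemma is a purely existential convex-geometric fact about the hypersimplex, and no efficiency or explicit description of $\mu$ is required at this stage. The only subtlety is that the proof is non-constructive and the support of $\mu$ may have size as large as the number of $k$-subsets of $[N]$, which is why the subsequent steps hinted at in the introduction (the binary tree construction and the zero-sum-game argument) are needed to turn this into the $\poly(n,\varepsilon^{-1})$-time implementation required by Lemma~\ref{lm:fixed-weight}.
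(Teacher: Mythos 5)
Your proof is correct, but it takes a genuinely different route from the paper's. You argue via a direct vertex characterization of the hypersimplex $P = \{p \in [0,1]^N : \sum_x p(x) = k\}$: any point with two strictly fractional coordinates can be perturbed along $e_i - e_j$ while staying in $P$, so vertices have at most one fractional coordinate, which integrality of $k$ then forces to be $0$ or $1$; hence every vertex lies in $F = \{f \in \{0,1\}^N : \sum_x f(x) = k\}$, and Minkowski--Weyl gives $P = \mathrm{conv}(F)$. (Strictly speaking you only need the forward direction -- that every vertex of $P$ is in $F$ -- to conclude $P \subseteq \mathrm{conv}(F)$; the converse check that every element of $F$ is extreme is harmless but not required.) The paper instead proves a greedy/rearrangement inequality (its Claim 3.11): for any weight vector $w$, the indicator of the top-$k$ coordinates of $w$ maximizes $\langle w, \cdot \rangle$ over $P$, beating $\langle w, p\rangle$ in particular. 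It then invokes the separating-hyperplane theorem: if $p \notin \mathrm{conv}(F)$ there would be a $w$ strictly separating $p$ from all of $F$, contradicting the greedy claim. Both arguments are clean and correct; what the paper's choice buys is that Claim 3.11 is exactly the best-response oracle plugged into the multiplicative-weights argument of the next lemma (Lemma 3.12), so proving it here amortizes the work, whereas your vertex characterization is a more self-contained and classical proof of this one existential statement but does not produce a reusable subroutine.
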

For two functions $p,p':[N] \to \bR$, we define their inner product to be $\langle p,p' \rangle := \sum_{x\in [N]}p(x)p'(x)$. 
Our proof of \Cref{lm:fixed-weight-exact} is based on the following claim:
\begin{claim}
\label{claim:fixed-weight-helper}
Let $N,k$ be integers satisfying $N > 0$ and $0 \le k \le N$. Let $p:[N]\to[0,1]$ be a function satisfying $\sum_{x\in [N]}p(x) = k$.
For any function $w:[N] \to \bR$ with function values sorted as $w(\sigma(1)) \ge \cdots \ge w(\sigma(N))$ for a bijection $\sigma:[N]\to [N]$, let $f:[N]\to\{0,1\}$ be the function satisfying $f(\sigma(1)) = \cdots = f(\sigma (k)) = 1$ and $f(\sigma(k+1)) = \cdots = f(\sigma(N)) = 0$. Then
\[
\langle w,f\rangle \ge \langle w, p\rangle.
\]
\end{claim}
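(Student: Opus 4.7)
The plan is to prove the inequality by an exchange/rearrangement argument, writing $f-p$ as a difference of two nonnegative perturbations supported on complementary index sets. Split $[N]$ into the top block $T = \{\sigma(1),\ldots,\sigma(k)\}$ and the bottom block $B = \{\sigma(k+1),\ldots,\sigma(N)\}$. On $T$ we have $f \equiv 1 \ge p$, so define $\Delta^+:[N]\to[0,1]$ by $\Delta^+(x) = 1 - p(x)$ for $x\in T$ and $\Delta^+(x) = 0$ otherwise. On $B$ we have $f \equiv 0 \le p$, so define $\Delta^-:[N]\to[0,1]$ by $\Delta^-(x) = p(x)$ for $x\in B$ and $\Delta^-(x) = 0$ otherwise. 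By construction $f - p = \Delta^+ - \Delta^-$, and the balance condition $\sum_x f(x) = k = \sum_x p(x)$ forces $\sum_x \Delta^+(x) = \sum_x \Delta^-(x)$; call this common value $s \ge 0$.

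Next I would exploit the sorting $w(\sigma(1)) \ge \cdots \ge w(\sigma(N))$ to pivot both terms through the threshold value $w(\sigma(k))$ (using the convention $w(\sigma(k)) := +\infty$ if $k=0$ and $w(\sigma(k+1)) := -\infty$ if $k=N$). For $x \in T$, $w(x) \ge w(\sigma(k))$ and $\Delta^+(x) \ge 0$, hence
\[
\langle w, \Delta^+ \rangle \ge w(\sigma(k)) \sum_{x\in T}\Delta^+(x) = w(\sigma(k))\, s.
\]
For $x \in B$, $w(x) \le w(\sigma(k+1)) \le w(\sigma(k))$ and $\Delta^-(x) \ge 0$, hence
\[
\langle w, \Delta^- \rangle \le w(\sigma(k)) \sum_{x\in B}\Delta^-(x) = w(\sigma(k))\, s.
\]
Subtracting gives $\langle w, f-p\rangle = \langle w, \Delta^+\rangle - \langle w,\Delta^-\rangle \ge 0$, which is the claim.

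I do not foresee a real obstacle here: this is the standard fact that the indicator of the top-$k$ coordinates maximizes a linear objective over the polytope $\{g:[N]\to[0,1]: \sum_x g(x) = k\}$. The only small bookkeeping points are the degenerate cases $k=0$ and $k=N$ (where one of $\Delta^+$ or $\Delta^-$ is forced to be identically zero, making both sides of the inequality equal), and making sure the pivot value $w(\sigma(k))$ is used on the correct side of the threshold so that both bounds point the right way; the sorting assumption handles this cleanly.
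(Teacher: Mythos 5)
Your proof is correct, and it takes a genuinely different route from the paper's. The paper uses an Abel summation (summation by parts) argument: after assuming WLOG $\sigma$ is the identity and setting $w(N+1)=0$, it rewrites $\langle w,p\rangle = \sum_{j\in[N]}(w(j)-w(j+1))\sum_{i\in[j]}p(i)$, observes that each difference $w(j)-w(j+1)\ge 0$, and compares prefix sums via $\sum_{i\in[j]}p(i)\le \min\{j,k\}=\sum_{i\in[j]}f(i)$, with equality of the full sums at $j=N$. Your argument instead decomposes $f-p = \Delta^+-\Delta^-$ with $\Delta^+$ supported on the top $k$ indices and $\Delta^-$ on the rest, notes both have the same mass $s$, and pivots through the single threshold value $w(\sigma(k))$ to show $\langle w,\Delta^+\rangle \ge w(\sigma(k))s \ge \langle w,\Delta^-\rangle$. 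Both proofs are clean and short; the paper's makes the majorization structure (prefix-sum domination) explicit, which generalizes naturally to non-indicator extremizers, while yours avoids summation by parts entirely and reads as a one-step exchange argument. One small remark on your degenerate-case note: when $k=0$ (resp.\ $k=N$), the constraint $\sum_x p(x)=k$ together with $p\in[0,1]$ forces $p\equiv 0$ (resp.\ $p\equiv 1$), so in fact \emph{both} $\Delta^+$ and $\Delta^-$ vanish and $s=0$; you don't actually need the $\pm\infty$ convention for the pivot since everything is multiplied by zero.
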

\begin{proof}
By rearranging the coordinates, it is w.l.o.g.\ to assume that $\sigma(i) = i$ for every $i\in [N]$. Defining $w(N+1) =0$, we have
\begin{align*}
\langle w,p\rangle&  = \sum_{i\in [N]} w(i)p(i)\\
& = \sum_{i\in [N]}\sum_{j = i}^N (w(j) - w(j+1))p(i)\\
& = \sum_{j\in [N]}\sum_{i\in [j]}(w(j) - w(j+1)p(i)\\
& = \sum_{j\in [N]}(w(j) - w(j+1))\sum_{i\in [j]}p(i)
\end{align*}
and similarly
\[
\langle w,f\rangle = \sum_{j\in [N]}(w(j) - w(j+1))\sum_{i\in [j]}f(i).
\]
It is clear that $\sum_{i\in [j]}p(i) \le \min\{j,k\} = \sum_{i\in [j]}f(i)$ for every $j = 1,\ldots,N$, and additionally $\sum_{i\in [N]}p(i) = k = \sum_{i\in [N]}f(i)$. Therefore,
\[
\langle w,p\rangle =\sum_{j\in [N]}(w(j) - w(j+1))\sum_{i\in [j]}p(i) \le \sum_{j\in [N]}(w(j) - w(j+1))\sum_{i\in [j]}f(i) = \langle w,f\rangle.
\qedhere
\]
\end{proof}
\begin{proof}[Proof of \Cref{lm:fixed-weight-exact}]
Let $F$ denote the set of functions $f:[N]\to\{0,1\}$ satisfying $|\supp(f)| = k$.
We first show that $p$ belongs to the convex hull of $F$. Assuming that this is not the case, by the hyperplane separation theorem, there exists $w:[N] \to \bR$ and $b\in \bR$ such that $\langle w,p\rangle - b > 0$ and $\langle w, f\rangle - b < 0$ for every $f\in F$. By \Cref{claim:fixed-weight-helper}, there exists $f\in F$ such that $\langle w,f\rangle \ge \langle w,p\rangle$, and thus $\langle w,f\rangle - b \ge \langle w,p\rangle - b > 0$, a contradiction.

We have proved that $p$ is in the convex hull of $F$. This means that there exists a probability distribution $\mu$ over $F$ such that $\E_{f\sim\mu}[f(x)] = p(x)$ for every $x\in [N]$, as desired.
\end{proof}

Clearly, the distribution $\mu$ can be explicitly computed using a linear program in time $2^{O(N)}$ where the variables are the probability mass on each $f:[N]\to \{0,1\}$.

\subsubsection{Step 2: Indistinguishable Model with \texorpdfstring{$\poly (N)$}{poly(N)} Time Implementation}
We prove the following variant of \Cref{lm:fixed-weight-exact} where we allow a small $\varepsilon$ error but require a $\poly(N)$ time algorithm to sample from the distribution $\mu$.
\begin{lemma}
\label{lm:fixed-weight-mw}
Let $N > 1$ be a positive integer, and let $p:[N] \to [0,1]$ be a predictor satisfying $\sum_{x\in [N]}p(x) = k$ for some integer $k$. For any $\varepsilon\in (0,1/2)$, there exists a distribution $\mu$ over \emph{binary} functions $f:[N] \to \{0,1\}$ such that
\begin{enumerate}
\item $\Pr_{f\sim\mu}[\sum_{x\in [N]}f(x) = k] = 1$, and
\item for every $x\in [N]$, $|\E_{f\sim\mu}[f(x)] - p(x)| \le \varepsilon$.
\end{enumerate}
Moreover, $\mu$ can be taken as the uniform distribution over $f\sps 1,\ldots,f\sps T$ for $T = O(\varepsilon^{-2}\log N)$, and the list $f\sps 1,\ldots,f\sps T$ can be computed in time $\tilde O(\varepsilon^{-2}N)$ given $p$ as input.
\end{lemma}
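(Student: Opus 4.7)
The plan is to recast the statement as a minimax claim and prove it via the multiplicative weights (MW) framework, exactly as hinted in the introduction. Let $F = \{f:[N]\to\{0,1\} : |\supp(f)| = k\}$ and consider the zero-sum game where player $C$ picks $f \in F$, player $D$ picks a test $(x,s) \in [N]\times\{-1,+1\}$, and the payoff (which $C$ minimizes and $D$ maximizes) is $A(f,(x,s)) = s\bigl(f(x) - p(x)\bigr) \in [-1,1]$. The goal is a uniform distribution $\mu$ over $T$ strategies $f\sps{1},\dots,f\sps{T} \in F$ with $\max_{(x,s)} \E_{f\sim\mu} A(f,(x,s)) \le \varepsilon$, which is precisely $|\E_\mu f(x) - p(x)| \le \varepsilon$ for every $x$.

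The key best-response fact is that for every mixed strategy $q$ of player $D$, which I identify with a signed weight function $\bar w(x) := q(x,+1) - q(x,-1) \in [-1,1]$, there exists $f\in F$ with $\E_q[A(f,\cdot)] = \langle \bar w, f - p\rangle \le 0$. This follows by applying \Cref{claim:fixed-weight-helper} to $-\bar w$: placing the $k$ ones of $f$ on the indices with the smallest values of $\bar w$ gives $\langle \bar w, f\rangle \le \langle \bar w, p\rangle$ (using $\sum_x p(x) = k$). Now run MW for player $D$ on the $2N$ pure strategies $(x,s)$: initialize $w\sps{1}(x,s) = 1$; at round $t$, normalize to get $q\sps{t}$, let $C$ pick a best response $f\sps{t}$ via the above greedy rule (so $\langle q\sps{t}, A(f\sps{t},\cdot)\rangle \le 0$), and update $w\sps{t+1}(x,s) = w\sps{t}(x,s)\cdot\exp\bigl(\eta A(f\sps{t},(x,s))\bigr)$ with step size $\eta = \Theta(\sqrt{\log(2N)/T})$. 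The standard MW regret bound gives, for every pure $(x^*,s^*)$,
\[
\tfrac{1}{T}\sum_{t=1}^T A(f\sps{t},(x^*,s^*)) \le \tfrac{1}{T}\sum_{t=1}^T \langle q\sps{t}, A(f\sps{t},\cdot)\rangle + O\bigl(\sqrt{\log(2N)/T}\bigr) \le O\bigl(\sqrt{\log(2N)/T}\bigr),
\]
so taking $T = \Theta(\varepsilon^{-2}\log N)$ makes this at most $\varepsilon$. Every $f\sps{t}\in F$, so the uniform distribution $\mu$ over $f\sps{1},\dots,f\sps{T}$ satisfies property (1), and the bound above is property (2).

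For efficiency, each round costs $O(N)$: we maintain the $2N$ exponential weights in aggregate form, reconstruct $\bar w\sps{t}$, pick the $k$ smallest values by linear-time selection, and update weights. Total runtime is $O(TN) = \tilde O(\varepsilon^{-2} N)$. The main conceptual care is in the coupling between player $D$'s $2N$-dimensional MW distribution and the signed weight $\bar w$ used for the greedy best response — one must verify that every signed $\bar w \in [-1,1]^N$ arises this way, so the $\le 0$ guarantee from the rearrangement argument applies at each round. Beyond this, the only fiddly point is bookkeeping the MW regret constant for payoffs in $[-1,1]$ rather than the textbook $[0,1]$; this is handled by either shifting to $\tfrac{1}{2}(A+1)\in[0,1]$ or invoking the signed-payoff version of the Freund--Schapire bound, both standard.
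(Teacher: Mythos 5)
Your proposal is correct and is essentially the paper's own proof: both run multiplicative weights over the $2N$ signed coordinates $(x,\pm 1)$, both obtain the exact best response $f\sps{t}\in F$ from \Cref{claim:fixed-weight-helper}, and both conclude $\|\frac1T\sum_t f\sps t - p\|_\infty \le O(\sqrt{\log(2N)/T})$ from the regret bound with the same $T$ and runtime. The only difference is a cosmetic sign convention (your maximizing weight player with best response $\le 0$ versus the paper's loss-minimization formulation via $r\diamond(-r)$ and \Cref{claim:infinity-norm}).
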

We prove \Cref{lm:fixed-weight-mw} using the multiplicative weights algorithm. 
We need the following notations. We define $\Delta_N$ to be the set of all functions $w:[N]\to\bR_{\ge 0}$ satisfying $\sum_{x\in [N]}w(x) = 1$. For two functions $w,w'\in \Delta_N$, assuming $w'(x) > 0$ for every $x\in [N]$, we define $\dkl(w\|w')= \sum_{x\in [N]}w(x)\ln(w(x)/w'(x))$ with the convention that $0\ln 0 = 0$. For a function $r:[N]\to \bR$, we define $\|r\|_\infty:= \max_{x\in [N]}|r(x)|$. For functions $r,r':[N]\to\bR$, we define $r\diamond r':[2N]\to\bR$ to be the function such that
\[
(r\diamond r')(x) = \begin{cases}
r(x), & \text{if }x\le N;\\
r'(x - N),  & \text{if }x > N.
\end{cases}
\]
The definition of $r\diamond r'$ allows us to state the following basic fact that will be useful:
\begin{claim}
\label{claim:infinity-norm}
For any function $r:[N]\to \bR$, there exists $w\in \Delta_{2N}$ such that $\langle w, r \diamond(-r)\rangle = -\|r\|_\infty$.
\end{claim}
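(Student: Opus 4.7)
The plan is to take $w$ to be a point mass concentrated on a single coordinate of $[2N]$, chosen so that $r\diamond(-r)$ attains its minimum value there. The key observation is that the range of values taken by $r\diamond(-r)$ over $[2N]$ is exactly the multiset $\{r(1),\ldots,r(N),-r(1),\ldots,-r(N)\}$, whose minimum equals $-\max_{x\in [N]}|r(x)| = -\|r\|_\infty$. A point mass on any coordinate achieving this minimum therefore gives the desired inner product.

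More concretely, first I would pick $x^* \in [N]$ with $|r(x^*)| = \|r\|_\infty$. Then I would split into two cases based on the sign of $r(x^*)$. If $r(x^*)\ge 0$, then I set $w$ to be the indicator of coordinate $x^* + N \in [2N]$ (normalized to lie in $\Delta_{2N}$, though since it is already a point mass this is automatic), so that by the definition of the $\diamond$ operator, $\langle w, r\diamond(-r)\rangle = (r\diamond(-r))(x^*+N) = -r(x^*) = -\|r\|_\infty$. If instead $r(x^*) < 0$, then $r(x^*) = -\|r\|_\infty$, and I set $w$ to be the indicator of coordinate $x^* \in [N] \subseteq [2N]$, which again yields $\langle w, r\diamond(-r)\rangle = r(x^*) = -\|r\|_\infty$.

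There is no real obstacle here: the claim is essentially the observation that the minimum entry of $r\diamond(-r)$ equals $-\|r\|_\infty$, since concatenating $r$ with its negation ensures that for every $x$, at least one of $r(x)$ or $-r(x)$ is nonpositive with magnitude $|r(x)|$. The only thing to be careful about is to choose the correct side of the $\diamond$ depending on whether the extreme value of $r$ is attained with a positive or negative sign, which the case split above handles cleanly.
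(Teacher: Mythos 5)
Your proof is correct, and it is exactly the intended argument: the paper states this claim as a basic fact without proof, and the point-mass construction on a coordinate where $r\diamond(-r)$ attains the value $-\|r\|_\infty$ (with the case split on the sign of $r(x^*)$) is the natural way to verify it.
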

The claim above allows us to express $-\|r\|_\infty$ as an inner product. Note that it is crucial to extend $r$ to $r\diamond (-r)$. The claim would not hold if we instead consider inner products of the form $\langle w,r\rangle$ for $w\in \Delta_N$ because of the restriction that $w(x) \ge 0$ for every $w\in \Delta_N$ and $x\in [N]$.

The following is a standard lemma for analyzing the multiplicative weights algorithm:
\begin{lemma}[Multiplicative Weights]
\label{lm:mw}
Let $w\sps 1\in \Delta_N$ and $r:[N]\to\bR$ be two functions. Assume $w\sps 1(x) > 0$ for every $x\in [N]$. Define $\hat w:[N]\to\bR$ and $w\sps 2\in \Delta_N$ such that for every $x\in [N]$,
\begin{align*}
\hat w(x) & = w\sps 1(x)e^{-r(x)},\\
w\sps 2(x) & = \hat w(x)/\sum_{x'\in [N]}\hat w(x').
\end{align*}
Then for every $w\in \Delta_N$
\[
\langle w\sps 1 - w, r\rangle \le \dkl(w\|w\sps 1) - \dkl(w\|w\sps 2) + \frac 12\|r\|_\infty^2.
\]
\end{lemma}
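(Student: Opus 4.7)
The plan is to first reduce the claimed inequality to a one-dimensional bound on the log-normalizer $\ln Z$, where $Z := \sum_{x\in [N]}\hat w(x)$, and then establish that bound via a standard variance estimate for the cumulant generating function.

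First I would compute the KL difference exactly. Because $w\sps 2(x) = w\sps 1(x)e^{-r(x)}/Z$, the log-ratio telescopes:
$$\ln\frac{w\sps 2(x)}{w\sps 1(x)} = -r(x) - \ln Z.$$
Summing against $w(x)$ and using $\sum_x w(x) = 1$ yields
$$\dkl(w\|w\sps 1) - \dkl(w\|w\sps 2) = \sum_{x\in [N]} w(x)\ln\frac{w\sps 2(x)}{w\sps 1(x)} = -\langle w,r\rangle - \ln Z.$$
Rearranging, the lemma becomes equivalent to the scalar inequality
$$\ln Z \le -\langle w\sps 1, r\rangle + \tfrac 12\|r\|_\infty^2.$$

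Next I would prove this inequality by treating $\phi(t) := \ln \E_{x\sim w\sps 1}[e^{-t r(x)}]$ as a cumulant generating function. We have $\phi(0)=0$ and $\phi'(0) = -\langle w\sps 1, r\rangle$, while $\phi''(t)$ is the variance of $-r(x)$ under the tilted distribution $w_t(x)\propto w\sps 1(x)e^{-tr(x)}$. Since $-r$ takes values in $[-\|r\|_\infty,\|r\|_\infty]$ under any distribution on $[N]$, Popoviciu's inequality on variances gives $\phi''(t) \le (2\|r\|_\infty)^2/4 = \|r\|_\infty^2$ uniformly in $t$. Taylor's theorem with integral remainder then yields
$$\ln Z = \phi(1) = \phi(0) + \phi'(0) + \int_0^1(1-t)\phi''(t)\,dt \le -\langle w\sps 1,r\rangle + \tfrac 12\|r\|_\infty^2,$$
which combined with the KL identity above gives the lemma.

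The main (and only) obstacle is obtaining the sharp constant $\tfrac 12$ in the variance term; a direct appeal to Hoeffding's lemma applied to the centred variable $-r(x)+\langle w\sps 1,r\rangle$ would yield the weaker constant $2$, so it is worth going through the tilted-variance route via Popoviciu to get the stated bound. Everything else is routine algebra.
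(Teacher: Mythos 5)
Your proof is correct. The paper states \Cref{lm:mw} as ``a standard lemma for analyzing the multiplicative weights algorithm'' and gives no proof of its own, so there is nothing to compare approaches against; your reduction to $\ln Z \le -\langle w\sps 1,r\rangle + \tfrac 12\|r\|_\infty^2$ via the exact KL telescoping identity, followed by the cumulant-generating-function argument with the tilted-variance bound, is a clean and valid derivation of the sharp constant.

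One small correction to your closing remark: Hoeffding's lemma does \emph{not} give the weaker constant $2$ here. In its usual form, for a centered random variable supported in an interval of length $b-a$, Hoeffding's lemma gives $\ln\E[e^{\lambda Y}] \le \lambda^2(b-a)^2/8$. Applied to $Y = -r(x) + \langle w\sps 1,r\rangle$ with $\lambda = 1$ and $b - a = 2\|r\|_\infty$, this yields $\ln Z + \langle w\sps 1, r\rangle \le (2\|r\|_\infty)^2/8 = \tfrac 12\|r\|_\infty^2$, exactly the bound you want. This is unsurprising since the standard proof of Hoeffding's lemma is precisely the Popoviciu-on-tilted-variance argument you wrote out; the two routes are the same argument. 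So the ``only obstacle'' you flag is not actually an obstacle, and a one-line citation of Hoeffding's lemma would have sufficed in place of re-deriving it.
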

\begin{proof}[Proof of \Cref{lm:fixed-weight-mw}]
We initialize $w\sps 1\in \Delta_{2N}$ such that $w\sps 1(x) = 1/(2N)$ for every $x\in [2N]$. In the $i$-th iteration, we perform the following computation. We break $w\sps i$ into two functions $w\sps{i,+},w\sps{i,-}:[N]\to [0,1]$ such that $w\sps i = w\sps{i,+} \diamond w\sps{i,-}$. Using \Cref{claim:fixed-weight-helper}, we can efficiently compute $f\sps i\in \{0,1\}^N$ such that $|\supp(f\sps i)| = k$ and 
\begin{equation}
\label{eq:fixed-weight-mw-1}
\langle w\sps{i,+} - w\sps{i,-}, f\sps i\rangle \ge \langle w\sps{i,+} - w\sps{i,-}, p\rangle.
\end{equation}
Define $r\sps i:[2N]\to\bR$ such that $r\sps i = (f\sps i - p)\diamond (p - f\sps i)$. We compute $w\sps {i+1}$ using the multiplicative weights algorithm with step size $\alpha$:
\begin{align*}
\hat w\sps{i+1}(x) & \gets w\sps i(x)e^{-\alpha r\sps i(x)},\\
w\sps{i+1}(x) & \gets \hat w\sps{i+1}(x) / \sum_{x'\in [N]}\hat w\sps{i+1}(x')
\end{align*}
By \Cref{lm:mw}, for every $w\in \Delta_{2N}$,
\[
\alpha \langle w\sps i - w, r\sps i\rangle \le \dkl(w\|w\sps i) - \dkl(w\|w\sps{i+1}) + \alpha^2\|r\sps i\|_\infty^2.
\]
Summing up over $i = 1,\ldots,T$ and noting that $\dkl(w\|w\sps 1) \le \ln (2N)$, $\dkl(w\|w\sps {T+1}) \ge 0$ and $\|r\sps i\|_\infty \le 1$, we have
\[
\sum_{i\in [T]}\alpha\langle w\sps i - w, r\sps i\rangle \le \ln (2N) + \alpha^2T,
\]
which implies
\begin{equation}
\label{eq:fixed-weight-mw-2}
\frac 1T\sum_{i\in [T]}\langle w\sps i, r\sps i\rangle \le \frac{\ln (2N)}{\alpha T} + \alpha + \frac 1T \left\langle w, \sum_{i\in [T]} r\sps i\right\rangle.
\end{equation}
By the definition of $w\sps{i,+},w\sps{i,-}$ and $r\sps i$, 
\[
\langle w\sps i, r\sps i\rangle = \langle w\sps{i,+}, f\sps i - p\rangle + \langle w\sps{i,-}, f - q\sps i\rangle = \langle w\sps {i,+} - w\sps{i,-}, f\sps i - p\rangle \ge 0,
\]
where the last inequality follows from \eqref{eq:fixed-weight-mw-1}. Plugging this into \eqref{eq:fixed-weight-mw-2}, for every $w\in \Delta_{2N}$,
\begin{equation}
\label{eq:fixed-weight-mw-3}
0 \le \frac{\ln (2N)}{\alpha T} + \alpha + \frac 1T\left\langle w, \sum_{i\in [T]}r\sps i\right\rangle.
\end{equation}
By our definition $r\sps i = (f\sps i - p)\diamond (p - f\sps i)$ and \Cref{claim:infinity-norm}, there exists $w\in \Delta_{2N}$ such that $\langle w, \sum_{i\in [T]}r\sps i\rangle = - \|\sum_{i\in [T]}(f\sps i - p)\|_\infty$, so \eqref{eq:fixed-weight-mw-3} implies
\[
\left\|\frac 1T\sum_{i\in [T]}f\sps i - p\right\|_\infty \le \frac{\ln (2N)}{\alpha T} + \alpha.
\]
Setting $T = \lceil \frac{4\ln (2N)}{\varepsilon^2}\rceil$ and $\alpha = \sqrt{(\ln (2N))/T}$, we get $\|\frac 1T\sum_{i=1}^T f\sps i - p\|_\infty \le 2\sqrt{(\ln(2N))/T} \le \varepsilon$. The proof is completed by noting that the multiplicative weights algorithm allows us to compute $f\sps 1,\ldots,f\sps T$ in time $\tilde O(NT) = \tilde O(\varepsilon^{-2}N)$.
\end{proof}
The running time in \Cref{lm:fixed-weight-mw} can be further improved in the case where $N \gg 1/\varepsilon$:
\begin{lemma}
\label{lm:fixed-weight-mw-2}
Let $N > 1$ be a positive integer, and
let $p:[N] \to [0,1]$ be a predictor satisfying $\sum_{x\in [N]}p(x) = k$ for some integer $k$. For any $\varepsilon\in (0,1/2)$, there exists a distribution $\mu$ over \emph{binary} functions $f:[N] \to \{0,1\}$ such that
\begin{enumerate}
\item $\Pr_{M\sim\mu}[\sum_{x\in [N]}f(x) = k] = 1$, and
\item for every $x\in [N]$, $|\E_{f\sim\mu}[f(x)] - p(x)| \le \varepsilon$.
\end{enumerate}
Moreover, given $p$ as input, we can sample from the distribution $\mu$ in time $\tilde O(\varepsilon^{-2}r + N)$ where $r = \min\{1/\varepsilon,N\}$.
\end{lemma}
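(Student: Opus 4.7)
The plan is to bypass the multiplicative-weights subroutine of \Cref{lm:fixed-weight-mw} in favor of a bucket-and-round scheme: partition $[N]$ into $r' = \lceil 1/\varepsilon\rceil$ buckets by $p$-value, assign each bucket a random integer budget via dependent rounding so that the budgets sum to $k$ exactly while matching $k_j := \sum_{x\in B_j}p(x)$ in expectation, then populate each bucket with a uniformly random subset of that size. The key observation is that once a bucket's $p$-values lie in an interval of width at most $\varepsilon$, bucket-level control over $\E[c_j]/n_j$ automatically yields the pointwise closeness $|\E[f(x)]-p(x)|\le\varepsilon$, so no further iterative refinement is needed.

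In detail, form $B_j = \{x\in[N]:p(x)\in I_j\}$ with $I_1,\ldots,I_{r'}$ a width-at-most-$\varepsilon$ partition of $[0,1]$, and compute $n_j=|B_j|$ and $k_j$ in $O(N)$ time. Since $\sum_j k_j = k\in\bZ$, the fractional total $\Delta := \sum_j\phi_j$ (where $\phi_i:=k_i-\lfloor k_i\rfloor$) is the nonnegative integer $k-\sum_j\lfloor k_j\rfloor$. Draw $u\sim\mathrm{Unif}[0,1)$ and set $c_j=\lceil k_j\rceil$ iff the interval $(u+\sum_{i<j}\phi_i,\; u+\sum_{i\le j}\phi_i]$ contains an integer, otherwise $c_j=\lfloor k_j\rfloor$; this ``systematic sampling'' runs in $O(r')$ time and enjoys three key properties: (i) $c_j\in\{\lfloor k_j\rfloor,\lceil k_j\rceil\}$ since $\phi_j\le 1$ places at most one integer in the interval; (ii) $\E[c_j]=k_j$ because $u+\sum_{i<j}\phi_i\bmod 1$ is uniform in $[0,1)$, making the rounding probability exactly $\phi_j$; (iii) $\sum_j(c_j-\lfloor k_j\rfloor)=\lfloor u+\Delta\rfloor-\lfloor u\rfloor=\Delta$ almost surely by telescoping (using $\Delta\in\bZ$ and $u\in[0,1)$), so $\sum_j c_j=k$ with probability $1$.

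Finally, for each $j$, draw a uniformly random size-$c_j$ subset $S_j\subseteq B_j$ via Fisher--Yates (valid since $c_j\le\lceil k_j\rceil\le n_j$), and let $f(x)=\one[x\in\bigcup_j S_j]$; this costs $O(N)$. Condition~(1) is immediate from $\sum_x f(x)=\sum_j c_j=k$ almost surely. For condition~(2), any $x\in B_j$ has $\E[f(x)\mid c_j]=c_j/n_j$ by symmetry, so $\E[f(x)]=\E[c_j]/n_j=k_j/n_j$, and because both this average and $p(x)$ lie in $I_j$, $|\E[f(x)]-p(x)|\le|I_j|\le\varepsilon$. Total runtime is $O(N+r')=O(N)$, well within the claimed $\tilde O(\varepsilon^{-2}r+N)$ bound. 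The one delicate ingredient is the dependent-rounding step: the integrality of $\Delta$, inherited from $\sum_x p(x)=k\in\bZ$, is precisely what lets a single uniform shift $u$ enforce $\sum_j c_j=k$ deterministically without biasing any marginal.
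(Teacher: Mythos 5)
Your proof is correct, and while it shares the paper's overall skeleton (partition $[N]$ into $O(1/\varepsilon)$ buckets by $p$-value, round the bucket totals to integers summing to $k$, then place a uniformly random subset of the prescribed size in each bucket), it replaces the paper's rounding subroutine with a genuinely different and in fact stronger one. The paper rounds the fractional parts $t(i)=s(i)-\lfloor s(i)\rfloor$ by recursively invoking the multiplicative-weights construction of \Cref{lm:fixed-weight-mw} on the reduced problem of size $u=O(1/\varepsilon)$; this costs $\tilde O(\varepsilon^{-2}r)$ time and only guarantees $|\E[\hat t(i)]-t(i)|\le\varepsilon/2$, which is why the paper must take bucket width $\varepsilon/2$ and split the error budget between the rounding error and the interval width. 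Your systematic-sampling rounding achieves \emph{exact} marginals $\E[c_j]=k_j$ together with $\sum_j c_j=k$ almost surely, in $O(1/\varepsilon)$ time, so buckets of width $\varepsilon$ suffice and the total runtime drops to $O(N)$, strictly inside the claimed $\tilde O(\varepsilon^{-2}r+N)$. The three properties you verify for the rounding (each $c_j\in\{\lfloor k_j\rfloor,\lceil k_j\rceil\}$ because $\phi_j<1$, exact marginals because the shifted left endpoint is uniform mod $1$, and the telescoping integer count forcing $\sum_j c_j=k$ since $\Delta\in\bZ$) are all sound, as are the checks $c_j\le n_j$ and the convexity argument placing $k_j/n_j$ in $I_j$. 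The one thing your write-up elides, harmlessly, is the degenerate case of empty buckets, where $k_j=c_j=0$ and there is nothing to prove.
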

\begin{proof}[Proof]
We can assume w.l.o.g.\ that $N \ge 1/\varepsilon$ because of \Cref{lm:fixed-weight-mw}.
We partition $[0,1]$ into $u = \lceil 2/\varepsilon\rceil$ sub-intervals $\Gamma_1,\ldots,\Gamma_u$ with the width of each interval being at most $\varepsilon/2$. These sub-intervals partition $X:=[N]$ into $X_1,\ldots,X_u$ where $X_i:= \{x\in X:p(x)\in \Gamma_i\}$. Define $s(i):= \sum_{x\in X_i}p(x)$ and define $t(i):= s(i) - \lfloor s(i) \rfloor$ for every $i\in [u]$. We have $t(i)\in [0,1]$ and $\sum_i t(i)$ is an integer $k' = k - \sum_i\lfloor s(i) \rfloor$. Define $I_0:=\{i\in [u]:t(i) = 0\}$. By \Cref{lm:fixed-weight-mw}, we can find a distribution over $\hat t:[u]\setminus I_0\to\{0,1\}$ such that $|\E[\hat t(i)] - t(i)| \le \varepsilon / 2$ for every $i\in [u]\setminus I_0$ and $\sum_{i\in [u]\setminus I_0} \hat t(i) = k'$. 
Moreover, a random $\hat t$ can be sampled in time $\tilde O(\varepsilon^{-2}r)$ given $t$.
We define $\hat t(i) = 0$ for every $i\in I_0$. For every $i\in [u]$, we define $\hat s(i):= \lfloor s(i)\rfloor + \hat t(i)$. It is clear that $\hat s(i) \le \lceil s(i)\rceil \le |X_i|$ and
\begin{align}
{\sum}_{i\in [u]}\hat s(i) & = k, \quad \text{and }\label{eq:fixed-weight-mw-2-1}\\
|\E[\hat s(i)] - s(i)| & \le \varepsilon / 2. \label{eq:fixed-weight-mw-2-2}
\end{align}
Given $\hat t$, we compute $\hat s$ and
for each $i\in [u]$, we randomly pick $\hat s(i)$ individuals $x$ in $X_i$ and assign $f(x) = 1$, and we assign $f(x) = 0$ otherwise. This allows us to sample $f$ in time $\tilde O(N)$ given $\hat t$. By \eqref{eq:fixed-weight-mw-2-1} we have $\sum_xf(x) = k$. For each individual $x\in X_i$,
\[
\E_{f}[f(x)] = \E[\hat s(i)]/|X_i|.
\]
By \eqref{eq:fixed-weight-mw-2-2}, we have
\[
|\E_f[f(x)] - s(i)/|X_i|| \le \varepsilon / 2.
\]
Since the width of $\Gamma_i$ is at most $\varepsilon/2$, by the definition of $s(i)$ and $X_i$, we have $|s(i)/|X_i| - p(x)| \le \varepsilon/2$. Combining this with the inequality above, we have $|\E_f[f(x)] - p(x)| \le \varepsilon$, as desired.
\end{proof}
\subsubsection{Step 3: Indistinguishable Model with \texorpdfstring{$\polylog (N)$}{polylog(N)} Time Implementation}
Now we prove \Cref{lm:fixed-weight}, where $N = 2^n$ is exponentially large and we want our implementation to run in time $\poly(n,\varepsilon^{-1})$. Because of this efficiency requirement, we cannot directly apply \Cref{lm:fixed-weight-mw-2} which would only give us a $\poly(N,\varepsilon^{-1})$ time implementation. Our idea is to divide the domain $X = \{0,1\}^n$ into small groups each with size roughly $\Theta(\varepsilon^{-1})$ and apply \Cref{lm:fixed-weight-mw-2} separately on each group. This requires us to set a target support size for every group so that these sizes sum up to the total support size $k$ for the entire domain. We achieve this by creating a binary tree whose root corresponds to the entire domain and whose leaves correspond to the groups. We assign a target support size for every node of the tree from the root to the leaves. For every non-leaf node, we determine how to split its support size to its two children by estimating the sum of $p(x)$ in the sub-domains corresponding to the children. We make sure that the depth of the tree is at most $n$ so that we can efficiently reach any leaf from the root, which allows us to construct an efficient implementation.

Because of \Cref{lm:fixed-weight-mw-2}, it is w.l.o.g.\ to assume $2^n \ge 8/\varepsilon$ in \Cref{lm:fixed-weight}.
We start by describing an inefficient randomized algorithm that produces a function $f:X\to\{0,1\}$ given the predictor $p$. We will then choose the model $M$ as the distribution of the sample-access model $B$ induced by the random function $f$ and show that $M$ can be implemented efficiently.

The randomized algorithm producing the function $f$ operates as follows.
For every bit string $z\in \bigcup_{j=0}^n\{0,1\}^j$, we can view $z$ as a node in a binary tree with root being the empty string. 
Starting from the root, we assign an integer $k_z$ to certain nodes $z$ in the tree as follows. For the root node $z$, we set $k_z:= k$. Suppose we have assigned an integer $k_z$ to a node $z$ of the tree, we assign integers $k_{z\| 0}$ and $k_{z\|1}$ to the two children $z\|0$ and $z\|1$ of $z$ as follows. Define $z':= z\|0$ and let $X_{z'}$ be the set consisting of $x\in X$ such that $z'$ is a prefix of $x$. We draw $x_1,\ldots,x_m$ independently and uniformly from $X_{z'}$ for $m = \lceil 16n^2/\varepsilon^2\rceil$, and define $\ell_{z'}:= \frac{|X_{z'}|}{m}\sum_{i=1}^m p(x_i)$. We set $\hat k_{z\|0}:= \lfloor \ell_{z'}\rfloor\in [0,|X_{z'}|]$ and $\hat k_{z\|1}:= k_z - \hat k_{z\|0}$. We then define
\[
(k_{z\|0},k_{z\|1}) = \begin{cases}
(\hat k_{z\|0}, \hat k_{z\|1}), & \text{if } 0 \le \hat k_{z\|1} \le |X_{z\|1}|,\\
(k_z, 0), & \text{if } \hat k_{z\|1} < 0,\\
(k_z - |X_{z\|1}|, |X_{z\|1}|),& \text{if } \hat k_{z\|1} > |X_{z\|1}|.
\end{cases}
\]
The above procedure allows us to assign an integer $k_z$ to every node $z$, and we use these integers to construct a function $f:X\to\{0,1\}$. Specifically, choosing $n'$ to be the largest integer satisfying $2^{n - n'} \ge 8/\varepsilon$, for every $z\in \{0,1\}^{n'}$, we construct a function $f_z:X_z \to \{0,1\}$, and the function $f$ is then constructed by combining $f_z$ for all $z\in \{0,1\}^{n'}$.

To construct each $f_z$, we first construct a predictor $p':X_z\to[0,1]$ as follows. If $k_z \ge \sum_{x\in X_z}p(x)$, we increase each $p(x)$ for $x\in X_z$ to $p'(x)$ so that $\sum_{x\in X_z}p'(x) = k_z$. Similarly, we decrease $p(x)$ to $p'(x)$ if $k_z < \sum_{x\in X_z}p(x)$. Using \Cref{lm:fixed-weight-mw-2}, in time $\poly(|X_z|,\varepsilon^{-1}) = \poly(\varepsilon^{-1})$ we can randomly construct a function $f_z:X_z\to \{0,1\}$ such that for every $x\in X_z$,
\begin{equation}
\label{eq:fixed-weight-proof-2}
|\E f_z(x) - p'(x)| \le \varepsilon /2.
\end{equation}
where the expectation is in the randomness of $f_z$.

We analyze the above procedure of generating the function $f$ in the following lemma:
\begin{lemma}
\label{lm:fixed-weight-model}
In the procedure above, if we choose
$m \ge Cn^2/\varepsilon^2$ for a sufficiently large absolute constant $C > 0$, then
\begin{enumerate}
\item $\Pr[\sum_{x\in X}f(x) = k] = 1$;
\item $\sum_{x\in X}|\E[f(x)] - p(x)| \le |k - \sum_{x\in X}p(x)| + \varepsilon|X|$.
\end{enumerate}
Note that the probability and expectation are over the randomness in $f$.
\end{lemma}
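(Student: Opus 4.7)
The plan is to verify the two claimed properties of the random function $f$ produced by the tree-based procedure. Property 1 (the support size is exactly $k$ with probability $1$) follows by a straightforward induction on the depth of the tree: inspecting the three cases in the definition of $(k_{z\|0},k_{z\|1})$ shows that $k_{z\|0}+k_{z\|1}=k_z$ always and both $k_{z\|b}$ lie in the valid range $[0,|X_{z\|b}|]$, so $\sum_{z\in\{0,1\}^{n'}}k_z=k$. Combined with the exact-support guarantee $\sum_{x\in X_z}f_z(x)=k_z$ of \Cref{lm:fixed-weight-mw-2}, this gives $\sum_{x\in X}f(x)=k$ deterministically.

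For property 2 I would first decompose the error per leaf. Write $s_z:=\sum_{x\in X_z}p(x)$ and $E_z:=k_z-s_z$. The modification $p\mapsto p'$ inside each $X_z$ only increases (respectively only decreases) entries to enforce $\sum_{x\in X_z}p'(x)=k_z$, so $\sum_{x\in X_z}|p'(x)-p(x)|=|E_z|$. Combining this with the per-entry bound $|\E[f_z(x)\mid k_z]-p'(x)|\le\varepsilon/2$ from \eqref{eq:fixed-weight-proof-2} via the triangle inequality (and Jensen's inequality to move $\E$ inside $|\cdot|$) and summing over leaves gives
\[
\sum_{x\in X}|\E[f(x)]-p(x)|\;\le\;\tfrac{\varepsilon}{2}|X|+\sum_{z\in\{0,1\}^{n'}}\E|E_z|.
\]

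The main obstacle is to control the aggregated ``budget mismatch'' $\sum_z\E|E_z|$. Set $\eta_{z\|0}:=\lfloor\ell_{z\|0}\rfloor-s_{z\|0}$, the ``fresh sample error'' introduced at node $z\|0$. The identity $k_{z\|0}+k_{z\|1}=k_z$ forces $E_{z\|0}+E_{z\|1}=E_z$ at every internal node. In the unclipped case one has $E_{z\|0}=\eta_{z\|0}$ and $E_{z\|1}=E_z-\eta_{z\|0}$, so the triangle inequality gives $|E_{z\|0}|+|E_{z\|1}|\le|E_z|+2|\eta_{z\|0}|$. In each of the two clipped cases, a short sign analysis based on when the clipping is triggered shows that $|E_{z\|0}|$ is controlled by $|\eta_{z\|0}|$, yielding $|E_{z\|0}|+|E_{z\|1}|\le|E_z|+c|\eta_{z\|0}|$ for an absolute constant $c$. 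Propagating this inequality from the root $z=\emptyset$ down to level $n'$ telescopes to
\[
\sum_{z\in\{0,1\}^{n'}}|E_z|\;\le\;|E_\emptyset|+c\sum_{y}|\eta_y|\;=\;\Bigl|k-\sum_{x\in X}p(x)\Bigr|+c\sum_y|\eta_y|,
\]
where the inner sum ranges over all $0$-children $y$ at depths $1,\ldots,n'$.

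To finish, I would take expectations and apply Hoeffding's inequality. Since $\ell_y/|X_y|$ is an average of $m$ i.i.d.\ $[0,1]$-valued variables with mean $s_y/|X_y|$, one has $\E|\ell_y-s_y|\le|X_y|/(2\sqrt m)\le|X_y|\varepsilon/(2\sqrt C\,n)$ when $m\ge Cn^2/\varepsilon^2$; the floor adds at most $1$, so $\E|\eta_y|\le|X_y|\varepsilon/(2\sqrt C\,n)+1$. Summing over the $2^{j-1}$ zero-children at depth $j$ (each with $|X_y|=N/2^j$) and over the $n'\le n$ depths, the first term contributes at most $N\varepsilon/(4\sqrt C)$ while the ``$+1$'' terms contribute at most $2^{n'}\le N\varepsilon/8$, using the choice of $n'$ as the largest integer with $2^{n-n'}\ge 8/\varepsilon$. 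Taking $C$ large enough makes $c\sum_y\E|\eta_y|\le\varepsilon N/2$, and combined with the leaf-level contribution this yields $\sum_x|\E f(x)-p(x)|\le|k-\sum_x p(x)|+\varepsilon|X|$, as required.
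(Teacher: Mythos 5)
Your proof is correct and follows essentially the same approach as the paper: a leaf-level decomposition via the triangle inequality and the $\varepsilon/2$ guarantee of \Cref{lm:fixed-weight-mw-2}, a recursive bound showing that the budget mismatches satisfy $|E_{z\|0}|+|E_{z\|1}|\le|E_z|+O(|\eta_{z\|0}|)$ (the paper's \Cref{lm:alpha-beta}), telescoping over depths (the paper's \Cref{lm:fixed-weight-level}), and a second-moment concentration bound for the sampling errors combined with the choice of $n'$. One small imprecision: in the clipped cases the claim that ``$|E_{z\|0}|$ is controlled by $|\eta_{z\|0}|$'' is not literally true coordinate-wise (clipping can increase $|E_{z\|0}|$ alone); what holds, and what the paper argues directly, is that clipping cannot increase the \emph{sum} $|E_{z\|0}|+|E_{z\|1}|$, which is what your displayed inequality needs.
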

We first prove two helper lemmas below before we prove \Cref{lm:fixed-weight-model}.

\begin{lemma}
\label{lm:alpha-beta}
Assume that $|k_z - \sum_{x\in X_z}p(x)| \le \alpha$ and $|\ell_{z\| 0} - \sum_{x\in X_{z\| 0}}p(x)| \le \beta$. Then
\[
\left|k_{z\| 0} - \sum_{x\in X_{z\| 0}}p(x)\right| + \left|k_{z\| 1} - \sum_{x\in X_{z\| 1}}p(x)\right| \le \alpha + 2\beta + 2.
\] 
\end{lemma}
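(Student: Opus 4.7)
The plan is to introduce shorthand $P_0 := \sum_{x \in X_{z\| 0}} p(x)$ and $P_1 := \sum_{x \in X_{z\| 1}} p(x)$, so that $P_0 + P_1 = \sum_{x \in X_z} p(x)$ and the first hypothesis reads $|k_z - P_0 - P_1| \le \alpha$. First I would derive two basic bounds: since $\hat k_{z\| 0} = \lfloor \ell_{z\| 0} \rfloor$ differs from $\ell_{z\| 0}$ by at most $1$, combining with the second hypothesis yields $|\hat k_{z\| 0} - P_0| \le \beta + 1$; and from $\hat k_{z\| 1} - P_1 = (k_z - P_0 - P_1) - (\hat k_{z\| 0} - P_0)$ I obtain $|\hat k_{z\| 1} - P_1| \le \alpha + \beta + 1$. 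In the uncapped regime $0 \le \hat k_{z\| 1} \le |X_{z\| 1}|$, we have $(k_{z\| 0}, k_{z\| 1}) = (\hat k_{z\| 0}, \hat k_{z\| 1})$, so summing the two bounds directly gives the desired $\alpha + 2\beta + 2$.

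Next I would handle the capped branch $\hat k_{z\| 1} < 0$, where $(k_{z\| 0}, k_{z\| 1}) = (k_z, 0)$, by splitting on whether $P_0 \le k_z$. If $P_0 \le k_z$, I use that $\hat k_{z\| 0} = k_z - \hat k_{z\| 1} > k_z \ge P_0$ to squeeze $|k_z - P_0| = k_z - P_0 \le \hat k_{z\| 0} - P_0 \le \beta + 1$, while $|0 - P_1| = P_1 \le P_1 + |\hat k_{z\| 1}| = |\hat k_{z\| 1} - P_1| \le \alpha + \beta + 1$, summing to $\alpha + 2\beta + 2$. If instead $P_0 > k_z$, then both $k_z - P_0$ and $-P_1$ are non-positive, so the error sum telescopes as $(P_0 - k_z) + P_1 = (P_0 + P_1) - k_z \le \alpha$ by the first hypothesis (using that $P_0 + P_1 > k_z$ in this sub-case). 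The mirror-image capped branch $\hat k_{z\| 1} > |X_{z\| 1}|$, where $(k_{z\| 0}, k_{z\| 1}) = (k_z - |X_{z\| 1}|, |X_{z\| 1}|)$, is handled by the symmetric split on whether $P_0 \ge k_z - |X_{z\| 1}|$ (using $P_1 \le |X_{z\| 1}|$) and yields the same bound by the same two mechanisms, so no new ideas are needed.

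The main subtlety I anticipate is precisely this sub-case analysis in the capped branches: a naive triangle inequality on top of the two basic bounds would only give $2\alpha + 2\beta + 2$, losing an extra $\alpha$. The saving comes from two observations that must be packaged carefully: in the monotone sub-case, capping $\hat k_{z\| 0}$ down to $k_z$ (resp.\ up to $k_z - |X_{z\| 1}|$) actually moves it toward $P_0$, and symmetrically for the $P_1$ coordinate, so the capped error is no larger than the uncapped one; in the other sub-case, the two signed errors have matching sign, allowing the sum of absolute values to collapse to $|P_0 + P_1 - k_z|$, which the first hypothesis bounds by $\alpha$ alone. Everything else is bookkeeping.
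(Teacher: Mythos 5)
Your proposal is correct and follows essentially the same route as the paper: derive the bounds $|\hat k_{z\|0} - P_0| \le \beta+1$ and $|\hat k_{z\|1} - P_1| \le \alpha+\beta+1$, then observe that replacing $(\hat k_{z\|0},\hat k_{z\|1})$ by $(k_{z\|0},k_{z\|1})$ does not increase $|k_{z\|0} - P_0| + |k_{z\|1} - P_1|$. The paper states this monotonicity of the capping step as a one-line observation, while you spell it out via the sub-case analysis (monotone projection toward $P_0$ versus matching signs collapsing to $|k_z - P_0 - P_1|\le\alpha$), which is a correct and more explicit justification of the same fact.
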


\begin{proof}
By our choice of $\hat k_{z\| 0} = \lfloor \ell_{z\| 0}\rfloor$ and $\hat k_{z\|1} = k_z - \hat k_{z\| 0}$, we have
\[
\left|\hat k_{z\circ 0} - \sum_{x\in X_{z\| 0}}p(x)\right| \le \beta + 1,
\]
and
\[
\left|\hat k_{z\circ 1} - \sum_{x\in X_{z\| 1}}p(x)\right| \le \left|k_{z} - \sum_{x\in X_{z}}p(x)\right| + \left|\hat k_{z\circ 0} - \sum_{x\in X_{z\| 0}}p(x)\right| \le \alpha + \beta + 1.
\]
Summing up,
\[
\left|\hat k_{z\circ 0} - \sum_{x\in X_{z\| 0}}p(x)\right| + \left|\hat k_{z\circ 1} - \sum_{x\in X_{z\| 1}}p(x)\right| \le \alpha + 2\beta + 2.
\]
The lemma is proved by the observation that replacing $(\hat k_{z\|0},\hat k_{z\|1})$ with $(k_{z\|0},k_{z\|1})$  does not increase the LHS of the inequality above.
\end{proof}
\begin{lemma}
\label{lm:fixed-weight-level}
Let $n'\in\bZ_{\ge 0}$ satisfy $2^{n - n'} \ge \frac 8\varepsilon$. Then
\[
\sum_{z\in\{0,1\}^{n'}}\E\left|k_z - \sum_{x\in X_z}p(x)\right| \le \left|k - \sum_{x\in X}p(x)\right| + \varepsilon|X|/2.
\]
\end{lemma}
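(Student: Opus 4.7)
The plan is to telescope the pointwise inequality from \Cref{lm:alpha-beta} down the binary tree, tracking how the two sources of error --- the floor operation (contributing an additive $+2$ per internal node) and the sampling error (contributing $2\beta$ per internal node) --- accumulate across the $n'$ levels. For each $j \in \{0,1,\ldots,n'\}$, I would define
\[
A_j := \sum_{z\in\{0,1\}^j}\E\left|k_z - \sum_{x\in X_z}p(x)\right|.
\]
The base case $A_0 = |k - \sum_{x\in X}p(x)|$ holds by the definition $k_\emptyset = k$, and the target bound is exactly $A_{n'} \le A_0 + \varepsilon|X|/2$. The inductive claim I would aim for is
\[
A_{j+1} \le A_j + \frac{|X|}{2\sqrt{m}} + 2^{j+1}.
\]

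For the inductive step, I would fix $z\in\{0,1\}^j$, set $\alpha := |k_z - \sum_{x\in X_z}p(x)|$ and $\beta_z := |\ell_{z\|0} - \sum_{x\in X_{z\|0}}p(x)|$, and apply \Cref{lm:alpha-beta} pointwise to get $|k_{z\|0}-\sum p| + |k_{z\|1}-\sum p| \le \alpha + 2\beta_z + 2$. Taking expectations and summing over the $2^j$ nodes at level $j$ yields $A_{j+1} \le A_j + 2\sum_{z\in\{0,1\}^j}\E[\beta_z] + 2\cdot 2^j$. To control $\sum_z\E[\beta_z]$, I would exploit that $\ell_{z\|0}$ is $|X_{z\|0}|$ times the mean of $m$ i.i.d.\ samples $p(x_i)\in[0,1]$ with true mean $\frac{1}{|X_{z\|0}|}\sum_{x\in X_{z\|0}}p(x)$; the variance bound together with Jensen's inequality then gives $\E[\beta_z] \le |X_{z\|0}|/(2\sqrt m)$. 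Summing $|X_{z\|0}|$ over $z\in\{0,1\}^j$ gives $|X|/2$, so $\sum_z\E[\beta_z] \le |X|/(4\sqrt m)$, which establishes the inductive step.

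Telescoping across $j=0,\ldots,n'-1$ produces
\[
A_{n'} \le A_0 + \frac{n'|X|}{2\sqrt m} + \sum_{j=0}^{n'-1}2^{j+1} \le A_0 + \frac{n'|X|}{2\sqrt m} + 2^{n'+1}.
\]
I would then plug in the two hypotheses: with $m = \lceil 16n^2/\varepsilon^2\rceil$ and $n'\le n$ the sampling term is at most $\varepsilon|X|/4$, and $2^{n-n'}\ge 8/\varepsilon$ gives $2^{n'+1}\le \varepsilon|X|/4$. Adding yields $A_{n'} \le A_0 + \varepsilon|X|/2$, as required.

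The main obstacle is balancing the two error contributions in the choice of parameters: the floor-induced $+2$ doubles per level and forces us to stop at some $n'$ with $2^{n'}\ll \varepsilon|X|$, while the sampling error grows linearly in $n'$ and forces $m = \Omega(n^2/\varepsilon^2)$. Nothing subtle is hidden in the probability argument --- the samples producing each $\ell_{z\|0}$ are drawn fresh and are independent of the history determining $k_z$, so the expectation of $\beta_z$ can be bounded by Hoeffding/variance uniformly over the history, and taking expectations through the pointwise inequality is straightforward.
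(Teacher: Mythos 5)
Your proposal is correct and follows essentially the same route as the paper's proof: apply \Cref{lm:alpha-beta} node by node, bound $\E|\ell_{z\|0}-\sum_{x\in X_{z\|0}}p(x)|$ by variance plus Jensen, sum over each level, and induct, with the final parameter check splitting the error budget between the sampling term and the geometric floor term. The only differences are cosmetic (you carry a slightly tighter variance constant and state the induction hypothesis as a per-level increment rather than the paper's closed form), so there is nothing to change.
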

\begin{proof}
Consider an arbitrary $z\in \bigcup_{j=0}^{n - 1}\{0,1\}^j$ and define $z' = z\|0$.
By our definition of $\ell_{z'}:= \frac{|X_{z'}|}{m}\sum_{i=1}^mp(x_i)$ for $x_1,\ldots,x_m$ drawn i.i.d.\ from the uniform distribution over $X_{z'}$, we have $\E[\ell_{z'}] = \sum_{x\in X_{z'}}p(x)$ and $\var(\ell_{z'}) \le \frac{|X_{z'}|^2}{m}$. Therefore,
\[
\E\left|\ell_{z'} - \sum_{x\in X_{z'}}p(x)\right|\le \E\left[\left(\ell_{z'} - \sum_{x\in X_{z'}}p(x)\right)\right]^{1/2} = \var[\ell_{z'}]^{1/2} \le \frac{|X_{z'}|}{\sqrt m}.
\]
By our choice of $m \ge 16n^2/\varepsilon^2$, we have
\begin{equation}
\label{eq:fixed-weight-level-1}
\E\left|\ell_{z'} - \sum_{x\in X_{z'}}p(x)\right| \le \frac{\varepsilon |X_{z'}|}{4n} = \frac{\varepsilon |X_z|}{8n}.
\end{equation}
We prove the following inequality by induction on $j = 0,\ldots,n$:
\begin{equation}
\label{eq:fixed-weight-level-2}
\sum_{z\in \{0,1\}^j}\E\left|k_z - \sum_{x\in X_z}p(x)\right| \le \left|k - \sum_{x\in X}p(x)\right| + \left(\frac{\varepsilon j}{4n}+ 2^{-(n - j - 1)}\right) |X|.
\end{equation}
The inequality holds trivially for $j = 0$ because $k_z = k$ and $X_z = X$ when $z$ is the empty string. Now suppose the inequality holds for some $j = 0,\ldots,n - 1$, and we show that it also holds with $j$ replaced by $j+1$. For every $z\in \{0,1\}^j$, defining $z':= z\|0$, by \Cref{lm:alpha-beta} we have
\[
\left|k_{z\| 0} - \sum_{x\in X_{z\| 0}}p(x)\right| + \left|k_{z\| 1} - \sum_{x\in X_{z\| 1}}p(x)\right| \le \left|k_{z} - \sum_{x\in X_{z}}p(x)\right|  + 2\left|\ell_{z'} - \sum_{x\in X_{z'}}p(x)\right| + 2
\]
Summing up over $z\in \{0,1\}^j$ and taking expectation, we have
\begin{align*}
& \sum_{z\in \{0,1\}^{j+1}}\E\left|k_z - \sum_{x\in X_z}p(x)\right| \\
\le {} & \sum_{z\in \{0,1\}^j}\left(\E\left|k_z - \sum_{x\in X_z}p(x)\right|\right) + 2\sum_{z\in \{0,1\}^j}\E\left|\ell_{z'} - \sum_{x\in X_{z'}}p(x)\right| + 2^{j + 1}\\
\le {} & \left|k - \sum_{x\in X}p(x)\right| + \left(\frac{\varepsilon j}{4n}+ 2^{-(n - j - 1)}\right) |X| + \frac{\varepsilon}{4n}|X| + 2^{-(n - j - 1)}|X|\tag{by induction hypothesis and \eqref{eq:fixed-weight-level-1}}\\
\le {} & \left|k - \sum_{x\in X}p(x)\right| + \left(\frac{\varepsilon(j+1)}{4n} + 2^{-(n - j - 2)}\right)|X|.
\end{align*}
The lemma is proved by setting $j = n'$ in \eqref{eq:fixed-weight-level-2} and observing that $\frac{\varepsilon n'}{4n} \le \frac \varepsilon 4$ and $2^{-(n - n' - 1)} \le \frac \varepsilon 4$.
\end{proof}

\begin{proof}[Proof of \Cref{lm:fixed-weight-model}]
For every $z\in \{0,1\}^{n'}$, by the definition of $p'(x)$,
\begin{equation}
\label{eq:fixed-weight-proof-1}
\sum_{x\in X_z}\left|p'(x) - p(x)\right| = \left|k_z - \sum_{x\in X_z}p'(x)\right|.
\end{equation}
Combining \eqref{eq:fixed-weight-proof-1} and \eqref{eq:fixed-weight-proof-2},
\begin{align*}
\sum_{x\in X_z}|\E[f_z(x)|k_z] - p(x)| & \le \sum_{x\in X_z}|\E[f_z(x)|k_z] - p'(x)| + \sum_{x\in X_z}|p'(x) - p(x)|\\
& \le \varepsilon |X_z|/2 + \left|k_z - \sum_{x\in X_z}p'(x)\right|.
\end{align*}
Taking expectation over $k_z$,
\begin{align*}
\sum_{x\in X_z}|\E[f(x)] - p(x)| & = \sum_{x\in X_z}|\E[f_z(x)] - p(x)|\\
& = \sum_{x\in X_z}|\E_{k_z}\E[f_z(x)|k_z] - p(x)|\\
& \le \sum_{x\in X_z}\E_{k_z}|\E[f_z(x)|k_z] - p(x)|\\
& \le \varepsilon |X_z|/2 + \E\left|k_z - \sum_{x\in X_z}p(x)\right|.
\end{align*}
Summing up over $z\in \{0,1\}^{n'}$ and applying \Cref{lm:fixed-weight-level},
\[
\sum_{x\in X}|\E[f(x)] - p(x)| \le \varepsilon |X|/2 + \left|k - \sum_{x\in X}p(x)\right| + \varepsilon |X|/2 = \left|k - \sum_{x\in X}p(x)\right| + \varepsilon |X|.
\qedhere
\]
\end{proof}
\begin{proof}[Proof of \Cref{lm:fixed-weight}]
It suffices to prove that the model $M$ defined as the distribution of the sample-access object $B$ induced by the random function $f$ in \Cref{lm:fixed-weight-model} can be implemented efficiently, and the implementation itself can be computed efficiently from $p$.

To see that $M$ can be implemented efficiently, we first generate all the randomness needed in the procedure of creating $f$ by querying a random oracle. We then note that in order to compute $f(x)$, it suffices to compute $k_z$ for every prefix $z$ of $x$ and then compute $f_z$ for $z$ being the length-$n'$ prefix of $x$. This can be done efficiently in time $\poly(n,\varepsilon^{-1})$.
\end{proof}

\subsection{Learning Support-Access Binary Functions}\label{sec:support-func}
In the previous sections we showed how to learn and construct an implementation of an indistinguishable model for a function induced sample-access object $B^*$, i.e. there exists a function $f^*:X\rightarrow\{0,1\}$, and the distinguishers and learner both receives random samples of form $(x,f^*(x))$. In this section, we learn a support-access object induced by a function see \Cref{def:support-function}. For a binary function $f^*:X\rightarrow\{0,1\}$, the support-access object $B^*$ induced by $f^*$ outputs random samples from the set $\set{x | f^*(x)=1}$. We show how to construct an efficient implementation of a model that is indistinguishable from $B^*$.

\begin{description}
    \item[Distinguishers:] Let $\cD$ be a set of distinguishers, such that each $D\in\cD$ has an associated subset $S_D\subset X$. Given access to a sample $x$ from a object $B$,the distinguisher  accepts if $x\in S_D$. 
    
    We assume that for every $D\in\cD$, the set $S_D$ has known size and an efficient description, that on input $x$ answers if $x\in S_D$.
    \item[Auditor:] Let $\cD$ be defined as above. We say that an algorithm $\Lambda^{B^*,p}$ is an $(\varepsilon,\gamma,\delta)$ auditor for the collection of sets $\cS = \set{S_D|D\in\cD}$ if it has the following properties. Given access to a function-induced support access-object $B^*$ and query access to a predictor $p:X\rightarrow[0,1]$. If there exists $S\in\cS$ and $b\in\{-1,1\}$ such that 
    \[
    b\paren{\Pr_{x\sim B^*(\bot)}[x\in S]-\Pr_{x\sim p}[x\in S] } > \varepsilon. \]
    Then the auditor returns a set $S'$ such that with probability $1-\delta$,
    \begin{align}
        b\paren{\Pr_{x\sim B^*(\bot)}[x\in S']-\Pr_{x\sim p}[x\in S'] } > \gamma.\label{eq:audit-set}
    \end{align}
    Where  $x\sim p$ is the distribution generated from the predictor $p$, i.e. $\Pr_{x\sim p}[x=x']= \frac{p(x')}{\sum_{x''\in X}p(x'')}$.
\end{description}

Assuming an auditor for the collection of sets $\cS$, we show a learning algorithm for support-access function object $B^*$.
\begin{theorem}\label{thm:rand-one}
Let $\alpha\in[0,1]$ , and let $\cB$ be a collection of support access object induced by binary functions, such that $\forall B\in\cB, \E_{x\in X}[f_{B}(x)]=\alpha$. Let $\cD$ be a collection of distinguishers as described above. 

Let $\varepsilon,\gamma,\delta',\delta''$ be parameters such that $\delta'\leq c\delta\gamma^2\alpha^{-2}$ for a sufficiently small constant $c$.
Let $\Lambda$ be an $(\varepsilon,\gamma,\delta')$ auditor $\Lambda$ for $\cD$. Then there exists a $(2\varepsilon,\delta)$-learner $L$  for $\cB$ with respect to the distinguisher class  $\cD$. The learner $L$ runs in time $\poly(\gamma^{-1}\log(\delta^{-1})\alpha^{-1},W_1,W_2)$, where $W_1$ is the running time of the auditor $\Lambda$ and $W_2$ the circuit complexity of its output. The implementation $T$ that the learner outputs runs in time $\poly(\gamma^{-1}\log(\delta^{-1})\alpha^{-1},W_2)$
\end{theorem}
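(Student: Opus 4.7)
The plan is to mimic the classic multi-accuracy/boosting approach used in the proof of \Cref{thm:sample-function}, adapted to the support-access setting. I initialize the predictor to the constant $p_0(x)\equiv \alpha$, so that $\sum_{x\in X}p_0(x) = \alpha|X| = |\supp(f^*)|$. In each round $t$, I invoke the auditor $\Lambda$ with $B^*$ and oracle access to $p_t$. If it reports no violating set, I stop and return an implementation built from $p_t$. Otherwise, it returns $(S_t, b_t)$ witnessing $b_t(\Pr_{x\sim B^*(\bot)}[x\in S_t] - \Pr_{x\sim p_t}[x\in S_t]) > \gamma$, and I update $p_{t+1}(x) = \numcap(p_t(x) + \eta b_t \one[x\in S_t])$ for a step size $\eta = \Theta(\gamma\alpha)$, followed by a small global additive shift to re-enforce $\sum_x p_{t+1}(x)\approx \alpha|X|$. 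Keeping the total mass close to $\alpha|X|$ is essential because the auditor's guarantee is stated with respect to the normalized distributions $\Pr_{x\sim p}$, whereas the potential-function analysis below is cleaner in terms of the unnormalized sums $\sum_{x\in S}p(x)$.

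For correctness, I use the potential $\Phi(p) := \sum_{x\in X}(p(x) - f^*(x))^2$, which at $p_0$ is at most $\alpha(1-\alpha)|X|$. As long as $\sum_x p_t(x)$ is within a constant factor of $\alpha|X|$, the auditor's $\gamma$-gap in the normalized world translates to $b_t \sum_{x\in S_t}(f^*(x) - p_t(x)) = \Omega(\gamma\alpha|X|)$; the additive update then gives $\Delta\Phi \le -\Omega(\gamma^2\alpha^2|X|)$, so the loop terminates in $T = \poly(1/\gamma, 1/\alpha)$ rounds, with failure probability controlled by a union bound over the $T$ auditor calls (this is where the relation between $\delta'$ and $\delta$ is used). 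At termination, the predictor satisfies $|\Pr_{x\sim B^*(\bot)}[x\in S] - \Pr_{x\sim p}[x\in S]|\le \varepsilon$ for every $S\in \cS$ corresponding to a $D\in\cD$. The model associated with $p$ then generates a random $f$ by independent $\ber(p(x))$ draws and returns a uniform element of $\supp(f)$. Under a random oracle this is implemented by rejection sampling: draw a uniform $x\in X$, read oracle bits to flip a $\ber(p(x))$ coin, and accept iff the coin is $1$; the expected number of trials is $1/\alpha$, which accounts for the $\alpha^{-1}$ factor in the implementation's running time.

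The main obstacle is reconciling what the distinguisher sees with what the algorithm controls: the distinguisher ultimately queries a uniform support element of a specific random $f\sim \prod_x \ber(p(x))$, while both the rejection-sampling implementation and the potential-function analysis naturally produce or reason about $x\sim p/\sum_y p(y)$. These two distributions differ by the fluctuation of $|\supp(f)|$ around its mean $\sum_y p(y)$, and one must show that their total variation is at most an additional $\varepsilon$; this follows from a Chernoff bound, using that $|X|$ is exponentially large compared with $1/\varepsilon$ and that the algorithm maintains $\sum_y p(y)\approx \alpha|X|$. Absorbing this extra slack is precisely what yields the conclusion of a $(2\varepsilon,\delta)$-learner rather than $(\varepsilon,\delta)$, and maintaining the total-mass invariant throughout the iterations is the reason the algorithm must take $\alpha$ as input and why the convergence rate and running time pick up $\alpha$-dependent factors.
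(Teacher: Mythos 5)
Your proposal matches the paper's proof in its essential structure: initialize $p\equiv\alpha$, run the auditor-driven additive-update boosting loop with a squared-error potential $\Phi(p)=\sum_x(p(x)-f^*(x))^2$, interleave a mass-recalibration step to keep $\sum_x p(x)$ pinned near $\alpha|X|$, and realize the final model by rejection sampling against a random oracle. Two points worth noting. First, the paper's update step size is $w=\gamma\alpha|X|/|S|$, scaled inversely with $|S|$, whereas you use a fixed $\eta=\Theta(\gamma\alpha)$; both yield $\Delta\Phi\le -\Omega(\gamma^2\alpha^2|X|)$, so this is a harmless variation (the paper's choice simply gives a larger drop when $|S|\ll|X|$). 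Second, and more substantively, the phrase ``within a constant factor of $\alpha|X|$'' for the mass invariant is too weak: if $\sum_x p(x)=C\alpha|X|$ for a constant $C$ with $|C-1|>\gamma$, the auditor's normalized $\gamma$-gap need not force $b\sum_{x\in S}(f^*(x)-p(x))>0$, let alone $\Omega(\gamma\alpha|X|)$. The invariant actually needed --- and the one the paper enforces via the threshold $\beta=\tfrac{1}{10}\gamma\alpha^2$ --- is that $\sum_x p(x)$ stays within an additive $O(\gamma\alpha^2)|X|$ of $\alpha|X|$; your later ``re-enforce $\sum_y p(y)\approx\alpha|X|$'' suggests you intend this, but it should be stated quantitatively since it drives both the auditor-step potential argument and the iteration count. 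Your explicit Chernoff argument reconciling the implementation's output law with $p/\sum_y p(y)$ spells out a step the paper treats implicitly and is a sound refinement.
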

The learner $L$ in the theorem above has access to an auditor $\Lambda$ that can audit support-access objects. In section \Cref{sec:support-audit} we show that such auditor exists under similar conditions to an auditor for sample-access objects.

The learning algorithm is similar to the classic algorithm in \Cref{lm:predictor}, with an additional step of keeping the expected value of the model to be approximately $\alpha$. This extra step is required to show convergence for support-access objects.
We note that if the function is sparse, i.e. $\alpha$ is very small, then the algorithm is no longer efficient.

\paragraph{Learning algorithm:} We described the polynomial-time learner $L$. The learner maintains a list $F$ or tuples $(S,w)$, for $S\subseteq X$ and $w\in [0,1]$. 
Using the list, the learner calculates a predictor $p:X\rightarrow[0,1]$, by setting for each  $x\in X$, 
\begin{equation} 
\label{eq:list-to-p}
p(x) = \lcap\paren{ w\cdot\one(x\in S):(S,w)\in F},\end{equation}
where $\lcap$ caps the value $p(x)$ to the range $[0,1]$ after adding every list element, as defined in  \cref{eq:lcap}. For example, if the list is $(S_1,w_1),(S_2,w_2)$ then we have $p(x) = \numcap(\numcap(w_1\cdot\one(x\in S_1)) + w_2\cdot\one(x\in S_2))$.

For every $x\in X$, calculating $p(x)$ is done in $\abs F$ calls for the functions $\one(x\in S)$, for sets $S$ that the auditor returns.

The learner creates the list $F$ using the following algorithm.
\begin{enumerate}
    \item Initialization: set $F = (X,w_0=\alpha)$.
    \item Query $\Lambda^{B^*,p}$, where $p$ is the predictor generated from $F$. \label{item:main-loop}
    \begin{enumerate}
        \item If it does not return a set, output the current $F$.
        \item If the auditor returns a set $S\subseteq X$ and a sign $b\in\set{-1,1}$, set $F \leftarrow F\cup (S,b\gamma\alpha\frac{\abs{X}}{\abs{S}})$.\label{item:main-update}
    \end{enumerate}
    \item Choose $t=\log \abs X$ random elements from $X$, $I= \set{x_1,\ldots,x_t}$.
    
    If $\abs{\E_{x\in I}[p(x)]-\alpha}>\beta = \frac{1}{10}\gamma\alpha^2$,  add  $([N],\beta\sign\paren{\alpha-\E_{i\in I}[p_i]})$ to $F$. Repeat the current item until the condition is satisfied. Return to \Cref{item:main-loop}. \label{item:weight-update} \end{enumerate}
\paragraph{Implementation:} Given a list $F$, a random oracle implementation $T$ implements sampling $x\sim B(\bot)$ by rejection sampling. Denote by $R$ the random oracle, and assume that for every input the random oracle outputs a different random string.
\begin{enumerate}
   \item Choose a random $x\in X$.\label{item:init-samp}
    \item Use $R(x)$ to sample a bit $b$ such that $\Pr_{R(x)}[b=1]= p(x)$.
    \item If $b=1$, return $x$. If $b=0$, go back to \Cref{item:init-samp}.
\end{enumerate}
The expected number of loop iterations in the implementation is $\alpha$, since this is the expected value of $p$.
The random oracle is used to create an implement a consistent function object $B$. That is, in every time the implementation samples $x\in X$, the random oracle returns the same random string $R(x)$. This means that for every randomness, the implementation implements a support-access object for a specific function $f:X\rightarrow\set{0,1}$, as required.

\begin{proof}[Proof of \Cref{thm:rand-one}]
We start by assuming that the learning algorithm finished its running, and showing that in this case the output is indistinguishable from $B^*$ to all $D\in\cD$. 

The algorithm ends after \Cref{item:weight-update}, which means that with probability $(\beta \abs{ X})^{-1}$, we have 
\begin{align}
    \abs{ \sum_{x\in X}p(x)-\alpha \abs X} < 2\beta. \label{eq:final-weight}
\end{align}

Assuming \cref{eq:final-weight} holds, the implementation $T$ of our model is efficient and runs approximately $1/\alpha$ steps before stopping. It sample a random entry $x\sim p$, from $p$ such that the auditor does not return any set $S$, which means that for all $D\in\cD$,
\[  \abs{\Pr_{x\sim B^*(\bot)}[x\in S_D] - \frac{\sum_{x\in S_D}p(x)}{\sum_{x\in X}p(x)}}\leq\varepsilon. \]

Together this implies that after the algorithm ends, the model $M$ is such that
\[ \abs{\Pr_{x\in B^*(\bot)}[D(x)=1] - \Pr_{B\sim M,x\in B(\bot)}[D(x)=1]}\leq \varepsilon+2\beta\leq 2\varepsilon.\]
Therefore, if the learning algorithm ends it outputs an implementation for a model that is indistinguishable from $B^*$ for all $D\in\cD$.

We are left with proving that the learning algorithm is efficient, which also bounds the complexity of the implementation $T$ (as it depends on the size of $F$). We prove this using a potential function:\[ \varphi(B^*,p) = \sum_{x\in X}\paren{f^*(x) - p(x)}^2.\]

Since both $p$ and $f^*$ are in $[0,1]$, the value of $\varphi$ is bounded by $\abs X$. 

By definition, with probability $1-\delta'$ the auditor returns a set $S$ that satisfies \cref{eq:audit-set}. In this case, from \Cref{claim:disting-update}, the value of $\varphi$ is reduced by $\Omega(\gamma^2\alpha^2 \abs X)$ on each iteration. In addition, from \Cref{claim:w0-update}, the value of $\varphi$ is reduced by $\Omega(\gamma^2\alpha^4 \abs X)$ on each iteration of \Cref{item:weight-update}. Together, we get that the total number of weight updates (of both types) are at most $O(\gamma^{-2}\alpha^{-4})$.
\end{proof}

\begin{claim}\label{claim:disting-update}
Suppose that on \Cref{item:main-update} of the algorithm, the learning algorithm list is updated from $F$ to $F\cup (S,w)$, such that $(S,w)$ satisfies \cref{eq:audit-set}.
Let $p$ be the predictor described by $F$, and $p'$ be the predictor described by $F\cup (S,w)$. Assume that $\abs{\E_{x\in X}[p(x)]-\alpha}\leq 2\delta$, then $\varphi(B^*,p')\leq \varphi(B^*,p) - \frac{1}{2}\gamma^2\alpha^2 \abs X$.
\end{claim}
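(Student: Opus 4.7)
The plan is to open up $\varphi(B^*,p)-\varphi(B^*,p')$ term by term and show that for the chosen weight $w$ the linear gain from the auditor's set dominates the quadratic loss from adding a new summand. By the definition of $p'$ via $\lcap$, we have $p'(x)=p(x)$ for $x\notin S$ and $p'(x)=\numcap(p(x)+w)$ for $x\in S$. Since $f^*(x)\in\{0,1\}\subseteq[0,1]$, capping only moves $p'(x)$ closer to $f^*(x)$, so $(f^*(x)-p'(x))^2\le (f^*(x)-p(x)-w)^2$. Expanding the square termwise then gives
\[
\varphi(B^*,p)-\varphi(B^*,p')\ \ge\ 2w\sum_{x\in S}(f^*(x)-p(x))\ -\ |S|w^2,
\]
so the task reduces to lower-bounding the linear term.

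The key middle step is to convert the auditor's guarantee \eqref{eq:audit-set} into a lower bound on $b\sum_{x\in S}(f^*(x)-p(x))$. Because $B^*$ is the support-access object induced by $f^*$ with $|\supp(f^*)|=\alpha|X|$, we have $\Pr_{x\sim B^*(\bot)}[x\in S]=\sum_{x\in S}f^*(x)/(\alpha|X|)$ and $\Pr_{x\sim p}[x\in S]=\sum_{x\in S}p(x)/\Sigma$, where $\Sigma:=\sum_{x\in X}p(x)$. Multiplying the auditor's inequality by $\alpha|X|$ yields
\[
b\sum_{x\in S}f^*(x)\ -\ b\frac{\alpha|X|}{\Sigma}\sum_{x\in S}p(x)\ >\ \gamma\alpha|X|.
\]
I would then use the hypothesis $|\E_{x\in X}[p(x)]-\alpha|\le 2\delta$ to write $\alpha|X|/\Sigma=1+O(\delta/\alpha)$ and bound the discrepancy $\bigl|b(\alpha|X|/\Sigma-1)\sum_{x\in S}p(x)\bigr|\le O(\delta|X|)$. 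Since in the algorithm this tolerance is driven by $\beta=\tfrac{1}{10}\gamma\alpha^2$, the error is a small fraction of the $\gamma\alpha|X|$ signal, leaving $b\sum_{x\in S}(f^*(x)-p(x))\ge c_0\gamma\alpha|X|$ for an explicit constant $c_0>1/2$.

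The final step is arithmetic: substituting $w=b\gamma\alpha|X|/|S|$ turns the linear term into $2w\sum_{x\in S}(f^*(x)-p(x))\ge 2c_0\gamma^2\alpha^2|X|^2/|S|$, while the quadratic loss is $|S|w^2=\gamma^2\alpha^2|X|^2/|S|$, so the net drop is at least $(2c_0-1)\gamma^2\alpha^2|X|^2/|S|$. Using $|S|\le|X|$ and $c_0>1/2$ yields $\varphi(B^*,p)-\varphi(B^*,p')\ge\tfrac12\gamma^2\alpha^2|X|$ as desired.

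I expect the second paragraph to be the main obstacle: the auditor's inequality is naturally phrased as a difference of ratios with the unknown normaliser $\Sigma$, whereas the potential decrement is controlled by the unnormalised sum $\sum_{x\in S}(f^*(x)-p(x))$. Switching between the two scales forces the tolerance $|\Sigma/|X|-\alpha|$ to be significantly smaller than $\gamma\alpha$, and the algorithm's choice $\beta=\tfrac{1}{10}\gamma\alpha^2$ is precisely calibrated so that the multiplicative error $\alpha|X|/\Sigma-1$ is negligible against the $\gamma$ signal; once this normalisation issue is dispatched, the rest of the proof is a routine expansion.
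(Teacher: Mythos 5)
Your approach matches the paper's proof essentially step for step: both expand $\varphi(B^*,p)-\varphi(B^*,p')$ termwise into the linear-minus-quadratic form $2w\sum_{x\in S}(f^*(x)-p(x)) - |S|w^2$, both note that $\lcap$-capping can only decrease the squared error because $f^*(x)\in\{0,1\}$, both convert the auditor's ratio-form guarantee into an unnormalized lower bound on $b\sum_{x\in S}(f^*(x)-p(x))$ by controlling $\Sigma=\sum_x p(x)$ against $\alpha|X|$ via the hypothesis, and both finish by plugging in $w=b\gamma\alpha|X|/|S|$ and using $|S|\le|X|$. The only cosmetic difference is that you keep the sign $b$ abstract while the paper writes out the $b=1$ and $b=-1$ cases separately; you also correctly read the claim's $2\delta$ as the algorithm's tolerance $\beta=\tfrac{1}{10}\gamma\alpha^2$ (the paper has a $\delta$/$\beta$ notational slip there).
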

\begin{proof}
According to our assumption, we have
 \[ \abs{\Pr_{x\sim B^*(\bot)}[x\in S] - \frac{\sum_{x\in S}p(x)}{\sum_{x\in X}p(x)}}   > \gamma. \]
The previous inequality can be written also as
\begin{align}
    \abs{\frac{\sum_{x\in S}f^*(x)}{\sum_{x\in X}f^*(x)}- \frac{\sum_{x\in S}p(x)}{\sum_{x\in X}p(x)}} > \gamma.\label{eq:disting}
\end{align}
We divide into two cases, of $b=1$ and $b=-1$. The two cases are nearly identical and we write them both for completeness.
\paragraph{If $b=1$:} Then \cref{eq:disting} can be written as:
\begin{align}\label{eq:up-bound}
\gamma < {} & \frac{\sum_{x\in S}f^*(x)}{\sum_{x\in X}f^*(x)}- \frac{\sum_{x\in S}p(x)}{\sum_{x\in X}p(x)}\\
\leq {} & \frac{1}{\alpha \abs X}\sum_{x\in S}f^*(x) - \frac{1}{\abs X\paren{\alpha+2\delta}}\sum_{x\in S}p(x)\nonumber\\
\leq {} & \frac{1}{\alpha \abs X}\sum_{x\in S}\paren{f^*(x) - \paren{1-\frac{2\delta}{\alpha+2\delta}}p(x)}
\nonumber
\end{align}
Where we use the fact that $\E_{x\in X}[p(x)]\leq\alpha+\beta$.
\begin{align*}
    \varphi(B^*,p) - \varphi(B^*,p') = {} &\sum_{x\in S}\paren{2f^*(x) w -2w p(x) - w^2}
    \\= {} &2w\sum_{x\in S}\paren{f^*(x)-p(x)}-\abs{S}w^2.
    \\\geq {} &2 w \gamma\alpha N  - 2w\frac{2\delta}{\alpha+2\delta}\abs{S}- w^2\abs{S} ,\\
    = {} &2\frac{\gamma^2\alpha^2 \abs{X}^2}{\abs{S}} - 4\abs{X}\gamma\delta - \varepsilon^2 \alpha^2\frac{\abs{X}^2}{\abs{S}}\\
    \geq {} & \frac{1}{2}\frac{\gamma^2\alpha^2\abs{X}^2}{\abs{S}}\geq \frac{1}{2}\gamma^2\alpha^2\abs X.
\end{align*}
Where we use the fact that $\beta = \frac{1}{8}\gamma\alpha^2$.

The capping of $p(x)$ to $[0,1]$ can only reduce the value of the potential function, because for every $x\in X$ we have $f^*(x)\in\set{0,1}$.

\paragraph{If $b=-1$:} This case is nearly identical to the $b=1$ case.
In this case, \cref{eq:disting} can be written as
\begin{align*}
\gamma < {} &  \frac{\sum_{x\in S}p(x)}{\sum_{x\in X}p(x)}- \frac{\sum_{x\in S}f^*(x)}{\sum_{x\in X}f^*(x)}\\
\leq {} &  \frac{1}{\abs{X}(\alpha-2\beta)}\sum_{x\in S}p(x)-\frac{1}{\alpha \abs X}\sum_{x\in S}f^*(x)\\
\leq {} & \frac{1}{\alpha \abs X}\paren{\sum_{x\in S}(p(x)-f^*(x)) + \frac{2\beta}{\alpha-2\beta}\abs{S}}.
\end{align*}
If the algorithm did not perform any capping:
\begin{align*}
    \varphi(B^*,p) - \varphi(B^*,p')= {} &\sum_{x\in S}(2f^*(x) w -2w p(x) - w^2) \\
    \geq {} & \frac{\gamma^2\alpha^2\abs{X}^2}{\abs{S}} - 8\gamma\beta \abs{X}\geq \frac{1}{5}\gamma^2\alpha^2\abs X.
\end{align*}
As before, capping $p(x)$ to $[0,1]$ can only reduce the value of $\varphi$.
\end{proof}

\begin{claim}\label{claim:w0-update}
Suppose that in the step described by \Cref{item:weight-update}, the algorithm makes an update to the list $F$. 
Let $p$ be the predictor before the update (as defined in \eqref{eq:list-to-p}), and $p'$ be the predictor after the update. Then with probability $1-O(\abs{X}^{-1})$, $\varphi(B^*,p')\leq \varphi(B^*,p) -  \frac{9}{10}\beta^2\abs{X}$. 
\end{claim}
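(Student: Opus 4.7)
The update modifies the predictor by $p'(x) = \numcap(p(x) + w)$ uniformly for every $x\in X$, where $w = \beta s$ and $s = \sign(\alpha - \bar p_I)$ with $\bar p_I := \frac{1}{t}\sum_{x\in I} p(x)$ and $t = \log|X|$. The trigger condition for performing the update is $|\alpha - \bar p_I| > \beta$. The first step is to control the empirical mean: since $I$ is a uniform sample and $p(x)\in[0,1]$, Hoeffding's inequality gives that with probability at least $1 - O(|X|^{-1})$ one has $|\bar p_I - \bar p_X| \le \beta/20$, where $\bar p_X := \frac{1}{|X|}\sum_{x\in X} p(x)$. Combined with the trigger condition this yields both $|\alpha - \bar p_X| \ge \tfrac{19}{20}\beta$ and $\sign(\alpha - \bar p_X) = s$, so the constant shift moves the mean of $p$ in the correct direction.

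Conditioning on this high-probability event, the next step is a direct algebraic expansion of the potential, pretending first that no capping takes place. Setting temporarily $\tilde p(x) := p(x) + w$ and using $\sum_{x\in X} f^*(x) = \alpha |X|$,
\[
\varphi(B^*,\tilde p) - \varphi(B^*,p) = \sum_{x\in X}\bigl[w^2 - 2w(f^*(x) - p(x))\bigr] = |X|w^2 - 2w|X|(\alpha - \bar p_X).
\]
Because $w = \beta s$ shares the sign of $\alpha - \bar p_X$, we have $w(\alpha - \bar p_X) = \beta|\alpha - \bar p_X| \ge \tfrac{19}{20}\beta^2$, so the change is at most
\[
|X|\beta^2 - 2|X|\cdot\tfrac{19}{20}\beta^2 = -\tfrac{9}{10}\beta^2|X|,
\]
matching the desired bound.

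The final step is to justify that the actual, capped predictor $p' = \numcap(\tilde p)$ does no worse than $\tilde p$ in the potential. For any $y\in\{0,1\}$ and any $c\in\bR$, one checks that $(y - \numcap(c))^2 \le (y - c)^2$, since $\numcap$ moves $c$ toward the interval $[0,1]$ which contains $y$. Applying this pointwise with $y = f^*(x)$ and $c = p(x) + w$ shows $\varphi(B^*,p') \le \varphi(B^*,\tilde p)$, so the no-cap estimate remains a valid upper bound. The one place requiring care is the concentration step, where one must simultaneously control the magnitude $|\alpha - \bar p_X|$ and the sign $s$ from a single sample of size $t$; after that the calculation is routine.
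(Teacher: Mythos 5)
Your proof is correct and follows essentially the same route as the paper's: concentration of the sampled mean around the true mean, a direct quadratic expansion of the potential for the uncapped shift, and the observation that capping toward $[0,1]$ can only decrease $(f^*(x)-\cdot)^2$ since $f^*(x)\in\{0,1\}$. Your tolerance of $\beta/20$ even lands exactly on the claimed $\tfrac{9}{10}\beta^2|X|$ (the paper uses $\beta/100$ and gets $\tfrac{98}{100}\beta^2|X|$); note that both arguments implicitly need $t=\Omega(\beta^{-2}\log|X|)$ samples for the stated concentration, a quantitative slip present in the paper as well.
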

\begin{proof}
This proof is the classic learning proof.
Using the Chernoff bound, with probability $O(\abs{X}^{-1})$ we have that
\[ 
\abs{\sum_{x\in I}\frac{1}{t}p(x) - \sum_{x\in X}\frac{1}{N}p(x)} \leq \frac{1}{100}\beta.\]
Assuming the above holds, then 
\[ \abs{\sum_{x\in X}f^*(x) - \sum_{x\in X}p(x)} > \frac{99}{100}\beta \abs X.\]

Assume without loss of generality that $\sum_{x\in X}f^*(x)\geq\sum_{x\in X}p(x)$, then we add $([N],\beta)$ to the list.
We have 
\begin{align*}
    \varphi(B^*,p) - \varphi(B^*,p') = {} & \sum_{x\in X}\left((f^*(x) - p(x))^2 - (f^*(x) - p'(x))^2\right)
    \\
    = {} &\sum_{x\in X}\left(-2f^*(x)(p(x)-p'(x)) + p(x)^2 - p'(x)^2\right).
\end{align*}
Assume for now that no capping is done for $p'$. In this case, $p'(x) = p(x) + \beta$ and thus
\begin{align*}
        \varphi(B^*,p) - \varphi(B^*,p')\geq {} & \sum_{x\in X}\left(2f^*(x) \beta - 2\beta p(x) -\beta^2\right)\\
        \geq {} & 2\cdot \frac{99}{100}\beta^2 \abs X - \beta^2 \abs X = \frac{98}{100}\beta^2\abs X.
\end{align*}
Capping $p'$ can only reduce the value of $\varphi(B^*,p')$, since $f^*(x)\in[0,1]$.
\end{proof}

\subsubsection{Auditor for Support-Access Objects}\label{sec:support-audit}
The auditor for the sample-access objects can be seen as a weak agnostic learner of the concept class $\cG$ representing the class of distinguishers $\cD$ under the uniform distribution. In the case of support-access objects, the auditor does not get labeled samples $(x,f^*(x))$, but rather a random $x$ such that $f^*(x)=1$.

We show that for a large family of learning algorithms, statistical query algorithms, learning from support-access objects is not much more difficult than learning from sample-access objects.
\begin{definition}[Statistical Query Oracle]
    A statistical query oracle over a distribution $P:X\times\{0,1\}\rightarrow[0,1]$ is a function receiving as an input a function $\phi:X\times\{0,1\}\rightarrow \{0,1\}$ and error parameter $\varepsilon$ and outputting a value $v\in[0,1]$ such that  $\abs{v-\E_{(x,y)\sim P}[\phi(x,y)]}\leq \varepsilon$.
\end{definition}
\begin{definition}[Statistical Query Algorithm]
    An algorithm $\cA$ is a statistical query algorithm over a distribution $P:X\times\{0,1\}\rightarrow[0,1]$ if it only access to the object is using a statistical query oracle over the distribution $P$.
\end{definition}
\begin{claim}
    If there exists a statistical query algorithm $\cA$ learning a hypothesis $h$ over the distribution $P$ created from sample-access object, i.e. that outputs $(x,f^*(x))$ for a uniform $x\in X$, then there exists an algorithm $\cA'$ learning $h$ from the distribution $P'$ created from support-access object, i.e. that outputs a uniform $(x,1)$ for a random $x\in X$ such that $f^*(x)=1$. The algorithm $\cA'$ also requires $m=\abs{\supp(f^*)}=\sum_{x\in X}f^*(x)$ as an input. 

    Furthermore, $\cA'$ has the same number of queries to the statistical oracle, has running time similar to the running time of $\cA$ up to $\polylog \abs X$ factors, and has the same success probability up to an additive error of $1/\log\abs{X}$. 
\end{claim}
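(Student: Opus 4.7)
The plan is to show how $\cA'$ can simulate each statistical query of $\cA$ to $P$ using statistical queries to $P'$ together with direct uniform sampling from $X$ (which $\cA'$ can perform on its own since $X$ is known). Fix any query $\phi: X\times\{0,1\} \to \{0,1\}$ of $\cA$, and write $\alpha = m/|X|$, which $\cA'$ knows exactly. The starting point is the identity
\[
\E_{(x,y)\sim P}[\phi(x,y)] = \alpha\cdot \E_{(x,1)\sim P'}[\phi(x,1)] + (1-\alpha)\cdot \E_{x:f^*(x)=0}[\phi(x,0)].
\]
To eliminate the hard-to-access conditional expectation over $\{x:f^*(x)=0\}$, I would use the total expectation decomposition of $\phi(x,0)$ over uniform $x\in X$, which gives $(1-\alpha)\E_{x:f^*(x)=0}[\phi(x,0)] = \E_{x\sim\mathrm{unif}(X)}[\phi(x,0)] - \alpha\E_{(x,1)\sim P'}[\phi(x,0)]$, hence
\[
\E_{P}[\phi(x,y)] = \alpha\cdot \E_{P'}[\phi(x,1)] + \E_{x\sim \mathrm{unif}(X)}[\phi(x,0)] - \alpha\cdot \E_{P'}[\phi(x,0)].
\]

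Given a target SQ tolerance $\tau$ that $\cA$ requests, $\cA'$ estimates each of the three expectations on the right with error at most $\tau/3$ and sums them (using the known $\alpha$) to answer $\cA$'s query within tolerance $\tau$. The two terms involving $P'$ are handled by issuing SQ queries to $P'$ with $\phi(\cdot,1)$ and $\phi(\cdot,0)$ respectively, each with tolerance $\tau/(3\alpha)$; note that both are valid $\{0,1\}$-valued queries under $P'$. The middle term is estimated by drawing $O(\tau^{-2}\log(Q/\delta'))$ uniformly random $x\in X$ and averaging $\phi(x,0)$, where $Q$ is the total number of SQ queries of $\cA$ and $\delta'$ is chosen so that a union bound over all $Q$ queries yields total failure probability at most $1/\log|X|$. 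Chernoff then ensures each Monte Carlo estimate is within $\tau/3$ with the required probability. Plugging the three estimates into the displayed identity yields an answer to $\cA$'s query that is valid for a statistical query oracle of $P$ with tolerance $\tau$, so $\cA$'s behavior is perfectly simulated as long as all estimates succeed.

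I would then count complexity: per query of $\cA$, $\cA'$ issues $O(1)$ (namely two) statistical queries to $P'$, and performs $\mathrm{polylog}(|X|)\cdot \mathrm{poly}(1/\tau)$ additional work for the uniform sampling — matching the ``same number of queries up to constants'' and ``running time up to $\mathrm{polylog}|X|$ factors'' claim. The overall success probability is that of $\cA$ minus the $1/\log|X|$ union-bound slack. The only point requiring mild care is the tolerance blow-up by $1/\alpha$ inside the $P'$-queries, but this is harmless: the tolerance of $\cA$ is inherited with a factor of $1/\alpha$, which matches what the theorem on support-access learning (\Cref{thm:rand-one}) needs. I would therefore expect the main obstacle to be merely bookkeeping: matching parameters precisely so that the overall error $\tau$, query count, and failure probability $1/\log|X|$ all fall out cleanly from the decomposition above.
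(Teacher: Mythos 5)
Your proposal is correct and takes essentially the same approach as the paper: both rewrite $\E_{P}[\phi]$ as a combination of $\E_{P'}$-terms (weighted by the known $\alpha = m/|X|$) plus a term estimable by direct uniform sampling from $X$, then simulate each SQ query to $P$ accordingly. Your decomposition $\E_P[\phi]=\alpha\E_{P'}[\phi(\cdot,1)]+\E_{\mathrm{unif}}[\phi(\cdot,0)]-\alpha\E_{P'}[\phi(\cdot,0)]$ is algebraically identical to the paper's $\E[\varphi_2]+\alpha\E_{P'}[\varphi_1-\varphi_2]$ with $\varphi_1=\phi(\cdot,1),\varphi_2=\phi(\cdot,0)$; the only substantive difference is that you issue two $P'$-queries where the paper issues one combined query $\varphi_1-\varphi_2$ (which, incidentally, has range $\{-1,0,1\}$ and so strays slightly outside the paper's own $\{0,1\}$-valued SQ definition, whereas your two queries stay within it at the cost of narrowly missing the claim's ``same number of queries''). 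One small sign-of-thought correction: the $\tau/(3\alpha)$ tolerance you pass to the $P'$-oracle is \emph{looser} than $\tau/3$, not a blow-up in difficulty, since the subsequent multiplication by $\alpha\le 1$ shrinks the error back down; this is harmless and your argument goes through as written.
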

\begin{proof}
    The algorithm $\cA'$ simulates the algorithm $\cA$, and simulates the the statistical query oracle over the distribution $P$ using the oracle for $P'$. Suppose the algorithm $\cA$ queries the statistical query oracle the function $\phi:X\times\{0,1\}\rightarrow\{0,1\}$. Since the second variable is binary we can write,
    \begin{align}
        &\phi(x,y) = \varphi_1(x)\cdot y + \varphi_2(x)(1-y),\\
        &\E_{x\in X}[\phi(x,f^*(x))] = \E_{x\in X}[(\varphi_1(x)-\varphi_2(x)) f^*(x) ] + \E_{x\in X}[\varphi_2(x)]. 
    \end{align}
    
    The value $ \E_{x\in X}[\varphi_2(x)]$ does not depend on the function $f^*$ and can be approximated without any queries to $f^*$ or to the oracle. Given that the function $\varphi_2(x)$ is bounded, $\cA'$ can approximate $ E_{x\in X}[\varphi_2(x)]$ up to an additive error of $\varepsilon/2$ in $\polylog \abs{X}$ runtime, with success probability $1-\abs{X}^{-2}$. We denote this approximation by $v_1$. 

    For the second element, $\E_{x\in X}[(\varphi_1(x)-\varphi_2(x)) f^*(x) ]$, we have 
    \begin{align*}
         \E_{x\in X}[(\varphi_1(x)-\varphi_2(x)) f^*(x) ] & = \Pr_{x\in X}[f^*(x)=1]\cdot \E_{x\in X}[(\varphi_1(x)-\varphi_2(x))| f^*(x)=1 ] \\
         & =\frac{m}{\abs{X}}\E_{x\in X}[(\varphi_1(x)-\varphi_2(x))| f^*(x)=1 ].
    \end{align*}
    Notice that the last term is the expected value of $\varphi_1(x)-\varphi_2(x)$ under the support-access object distribution, $P'$.
    Therefore, the algorithm $\cA'$ then queries the statistical query oracle over distribution $P'$ with function $\phi'(x,y)= \varphi_1(x)-\varphi_2(x)$ and error parameter $\varepsilon/2$. Denote the return value to be $v_2$. Then, $\cA'$ to simulates $\cA$ assuming the return value of the statistical query oracle  to $\cA$ is $v_1+\frac{m}{\abs{X}}v_2$. Since the error in  $v_1,v_2$ is bounded by $\varepsilon/2$, the return value is a valid return value from the statistical oracle query with probability $1-\abs{X}^{-2}$. 
    
    Summing over all of the queries of $\cA$ to the statistical query oracle, the probability that $\cA'$ fails is the same as $\cA$ with the additional factor smaller than $1/\log\abs{X}$ (which is the error in the approximation of $ \E_{x\in X}[\varphi_2(x)]$).
\end{proof}
Therefore, if there exists an auditor in the standard sample-access object, that is a statistical query algorithm, then there is also an auditor for the support-access object. 

We further prove the existence of an auditor for support-access objects for specific set of distinguishers.
\begin{claim}
    Let $\cD$ be a set of $t$ distinguishers for the support-access object, as defined on \Cref{sec:support-func}. Then there exists an auditor $\Lambda$ as defined in \Cref{sec:support-func} with running time $\Theta(t)$ and query complexity $O(\log(t)\polylog(\abs{X})1/(\delta(\varepsilon-\gamma)))$.
\end{claim}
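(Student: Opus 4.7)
The plan is to build an auditor that, for each of the $t$ distinguishers, separately estimates the two probabilities $q_D:=\Pr_{x\sim B^*(\bot)}[x\in S_D]$ and $r_D:=\Pr_{x\sim p}[x\in S_D]$, and returns the set that maximizes the estimated signed gap (provided it exceeds a certain threshold). Concretely, I would draw $m_1$ samples $x_1,\ldots,x_{m_1}\sim B^*(\bot)$ and $m_2$ independent uniform samples $y_1,\ldots,y_{m_2}$ from $X$, querying $p(y_j)$ for each. For every $D\in\cD$, form the empirical quantities
\[
\hat q_D \;=\; \tfrac{1}{m_1}\sum_{i}\one[x_i\in S_D],\qquad \hat a_D \;=\; \tfrac{1}{m_2}\sum_{j}p(y_j)\one[y_j\in S_D],\qquad \hat a \;=\; \tfrac{1}{m_2}\sum_{j}p(y_j),
\]
and set $\hat r_D = \hat a_D/\hat a$, which is the natural estimator of $r_D = \sum_{x\in S_D}p(x)/\sum_{x\in X}p(x)$. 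The auditor picks $D^\star$ maximizing $|\hat q_D-\hat r_D|$, sets $b=\sign(\hat q_{D^\star}-\hat r_{D^\star})$, and returns $S_{D^\star}$ if the estimated gap is at least $(\varepsilon+\gamma)/2$.

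The correctness argument proceeds in two steps. First, a concentration step: by Hoeffding's inequality applied to each $\hat q_D$, each $\hat a_D$, and $\hat a$, followed by a union bound over the $t$ distinguishers, it suffices to take $m_1,m_2$ sufficiently large (the standard choice $\Theta(\log(t/\delta)/(\varepsilon-\gamma)^2)$ works; one can trade off constants to match the precise form of query complexity stated in the claim) to ensure that $|\hat q_D-q_D|\le (\varepsilon-\gamma)/8$ and $|\hat r_D-r_D|\le (\varepsilon-\gamma)/8$ for every $D\in\cD$ simultaneously with probability at least $1-\delta$. Second, a triangle-inequality step: condition on these estimates being accurate. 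If some set $S$ has true signed gap at least $\varepsilon$, then its estimated signed gap is at least $\varepsilon-(\varepsilon-\gamma)/4 > (\varepsilon+\gamma)/2$, so the auditor returns some $S_{D^\star}$ whose estimated signed gap is at least as large; by the reverse triangle inequality, the true signed gap of $S_{D^\star}$ is at least $(\varepsilon+\gamma)/2-(\varepsilon-\gamma)/4 > \gamma$, and the returned sign $b$ matches the true sign because the estimated gap is too large for the signs to disagree.

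For the running time, each of the $m_1+m_2$ samples must be tested against each of the $t$ sets (each of which admits an $O(1)$-time membership test by assumption), giving the claimed $\Theta(t)$ per-sample running time, and the total number of oracle calls to $B^*$ and to $p$ matches the claimed query complexity up to the precise exponent of $(\varepsilon-\gamma)$ and $\delta$. The main obstacle I anticipate is the ratio estimator for $\hat r_D$: when the density $\alpha=\E_x[p(x)]$ is small, a small additive error in the denominator $\hat a$ translates into a much larger error in the ratio. I would handle this either by exploiting the outer invariant from the learner in \Cref{thm:rand-one} that $\alpha$ is kept close to the true support density (so $\hat a$ is bounded away from zero), or by rescaling the number of uniform samples by $1/\alpha$ so that $\hat a$ has small \emph{multiplicative} error. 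All other steps (Hoeffding, union bound, triangle inequality) are routine.
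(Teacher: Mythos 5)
Your proposal is essentially the same approach as the paper's: draw samples, form empirical estimates $\hat q_D$ of $\Pr_{x\sim B^*(\bot)}[x\in S_D]$ for every $D$, threshold at $(\varepsilon+\gamma)/2$, and use Chernoff plus a union bound over the $t$ distinguishers. Where you go further than the paper is on the predictor side: the paper's proof only estimates $\Pr_{x\sim B^*(\bot)}[x\in S_D]$ via $k$ samples from $B^*(\bot)$ and then silently compares to $\Pr_{x\sim p}[x\in S_D]$ as if that quantity were available exactly, even though the auditor only has evaluation access to $p$ on an exponential domain; you explicitly build the estimator $\hat r_D = \hat a_D/\hat a$ from uniform samples and $p$-queries, and you correctly flag the resulting ratio-estimation issue. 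That issue is real and your two proposed fixes are the right ones: in the context where this auditor is actually invoked (inside the learner of \Cref{thm:rand-one}), \Cref{item:weight-update} maintains the invariant that $\sum_x p(x)/|X|$ stays within $O(\beta)$ of the target density $\alpha$, which is assumed bounded away from $0$, so the denominator $\hat a$ is well-conditioned and the additive-to-multiplicative blowup is only a $1/\alpha$ factor (already present in the theorem's running-time bounds). So your version is, if anything, more complete than the paper's.

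One small point worth noting: the sample size you derive, $\Theta\bigl(\log(t/\delta)/(\varepsilon-\gamma)^2\bigr)$, is the standard Chernoff bound and is also exactly what the paper's own displayed inequality $e^{-k(\varepsilon-\gamma)^2/10}$ gives after the union bound. This does not literally match the claim's stated $O\bigl(\log(t)\,\polylog(|X|)/(\delta(\varepsilon-\gamma))\bigr)$, which has $1/\delta$ in place of $\log(1/\delta)$ and $1/(\varepsilon-\gamma)$ in place of $1/(\varepsilon-\gamma)^2$; the claim's form appears to be a typo, and your (and the paper's own proof's) dependence is the correct one. Your phrase "one can trade off constants to match the precise form stated" overstates the agreement: these are different functional forms, not a matter of constants, so it would be cleaner to just state the Chernoff bound and note that it implies the claimed bound in the regime $\delta \ge 1/\poly(|X|)$ and $\varepsilon-\gamma = \Omega(1)$ but is otherwise tighter.
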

\begin{proof}
    The auditor samples $k = c\cdot\log(t)\polylog (\abs{X})$ uniform samples $x_1,\ldots,x_k\sim B^*(\bot)$. For every distinguisher $D\in\cD$ with an associated set $S_D$ the auditor calculates 
    \[ v_D = \Pr_{x\in \{x_1,\ldots, x_k\}}[x\in S_D ] = \frac{\abs{\{ x_1,\ldots,x_k\}\cap S_D}}{k}. \]

    If there exists $D\in\cD$ such that $\abs{v_D  - \Pr_{x\sim p}[x\in S_D]}>\frac{\varepsilon+\gamma}{2}$, the auditor returns $S_D$.

    For every set $S_D\subseteq X$, by Chernoff bound the probability over $x_1,\ldots,x_k$ that 
    \[ \abs{\Pr_{x\in \{x_1,\ldots, x_k\}}[x\in S_D ] - \Pr_{x\in B^*(\bot)}[x\in S_D ]}\geq \frac{\varepsilon-\gamma}{2}\]
    is at most $e^{-k(\varepsilon-\gamma)^2/10}$. Therefore, we choose the constant $c$ such that even when summing over the $t$ distinguishers, the error is smaller than $\delta$.
\end{proof}

We can look at the value $v_D$ calculated by the algorithm above as an estimator for $\Pr_{x\in B^*(\bot)}[x\in S_D ]$. The argument above implies that $v_D$ uniformly converge to $\Pr_{x\in B^*(\bot)}[x\in S_D ]$.
\begin{definition}[Uniform Convergence]
 An estimator $v$ uniformly converges to $P$ under distribution $B$ over the hypothesis class $\cD$ if there exists a function $k:(0,1)\times(0,1)\rightarrow\mathbb{N}$ such that for every $\varepsilon,\delta >0$, if $k>k(\varepsilon,\delta)$ then
 \[  \Pr_{x_1,\ldots,x_k\sim B}[\exists D\in\cD \text{ s.t.\ }  \abs{v_D(x_1,\ldots,x_k) - P_{D,x\sim B}}\geq\varepsilon] <\delta. \]
\end{definition}
The uniform convergence implies that there exists an algorithm for weak agnostic learning of the class $\cD$ with sample complexity $\poly(1/\varepsilon,\log(1/\delta), \mathsf{VCdim}(\cD))$ \cite{doi:10.1137/1116025,MR1072253,MR1088804}. 

\subsection{Learning Bit-String Functions}\label{sec:bit-string-func}
In this section we are interested in learning a function $f:\set{0,1}^n\rightarrow \set{0,1}^n$. This is a harder than learning a binary function, because the range of the function is very large. In this section, we only learn a indistinguishable model with respect to a very limited set of distinguishers, with a product structure. In this setting, the sampling distribution is a pair $(x,f(x))$ for a random input $x$.

\begin{description}
\item[Distinguishers:] Let $\cD$, such that each distinguisher $D\in \cD$ has an set $S_D\subset \set{0,1}^n$ and a coordinate $j\in[n]$. The distinguisher $D$ accept a sample $(x,f(x))$ if $x\in S$ and $f(x)_j=1$.
\item[Auditor:] Let $\cD$ be defined as above. We say that an algorithm $\Lambda^{B^*,p}$ is an $(\varepsilon,\gamma,\delta)$ auditor for the collection of sets $\cS$ if it has the following properties. Given access to a function-induced support access-object $B^*$ and query access to a set of $n$ predictors $p_1,\ldots p_n$, such that $p_j:\set{0,1}^n
\rightarrow[0,1]$. If there exists $S\in\cS$ and $j\in[n]$ such that 
    \[
    b\paren{\Pr_{x}[x\in S,f(x)_j=1]-\E_{x}[ p_j(x)\cdot\one(x\in S)] } > \varepsilon. \]
    Then the auditor returns a set $S'\subseteq\set{0,1}^n$ and $j\in[n]$ such that with probability $1-\delta$,
    \[
    b\paren{\Pr_{x}[x\in S',f(x)_j=1]-\E_{x}[ p_j(x)\cdot\one(x\in S')] }  > \gamma. \].
\end{description}

\begin{theorem}\label{thm:func}
Let $\cB$ be a collection of support access object induced by  functions $f:\set{0,1}^n\rightarrow \set{0,1}^n$. Let $\cD$ be a collection of distinguishers as described above. 

Let $\varepsilon,\gamma,\delta',\delta''$ be parameters such that $\delta'\leq c\delta\gamma^2 n^{-1}$ for a sufficiently small constant $c$.
Let $\Lambda$ be an $(\varepsilon,\gamma,\delta')$ auditor for $\cD$. Then there exists a $(2\varepsilon,\delta)$-learner $L$ to $\cB$ with respect to the distinguisher class  $\cD$ and the learner $L$ runs in time $\poly(\gamma^{-1}\log(\delta^{-1})\alpha^{-1},W_1,W_2)$, where $W_1$ is the running time of the auditor $\Lambda$ and $W_2$ the circuit complexity of its output. The implementation $T$ that the learner outputs runs in time $\poly(\gamma^{-1}\log(\delta^{-1})\alpha^{-1},W_2)$
\end{theorem}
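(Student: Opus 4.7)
My plan is to reduce this to the binary-function setting of \Cref{thm:sample-function} by decomposing $f^*:\{0,1\}^n \to \{0,1\}^n$ into its $n$ coordinate functions $f_j^*(x) := f^*(x)_j$ and maintaining a separate predictor $p_j:\{0,1\}^n \to [0,1]$ for each coordinate. Since each distinguisher $(S,j) \in \cD$ depends only on $\one(x \in S)$ and on the single bit $f(x)_j$, indistinguishability with respect to $\cD$ reduces to the simultaneous multi-accuracy conditions $|\E_x[f_j^*(x)\one(x \in S)] - \E_x[p_j(x)\one(x \in S)]| \le \varepsilon$ for every pair $(S,j)$, which is exactly what the auditor $\Lambda$ is designed to certify for the $n$-tuple of predictors.

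First I would run a boosting-style learner that mirrors the proof of \Cref{lm:predictor}: initialize each $p_j$ to the constant $1/2$, and in each round query the auditor on the current tuple. If $\Lambda$ returns a witness $(S', j, b)$, update only the $j$-th predictor via $p_j(x) \gets \numcap\bigl(p_j(x) + b\gamma\one(x \in S')\bigr)$ and leave the other predictors untouched. To bound the number of iterations I would use the potential $\Phi := \sum_{j=1}^n \sum_{x \in \{0,1\}^n}(f_j^*(x) - p_j(x))^2$, which is initially at most $n \cdot 2^n$. By the same one-coordinate calculation that underlies \Cref{claim:disting-update}, each successful update decreases $\Phi$ by at least $\gamma^2 \cdot 2^n$ (capping can only help further, since $f_j^*$ is $\{0,1\}$-valued), so the loop terminates after at most $n/\gamma^2$ rounds, at which point the auditor's failure certifies that $(p_1, \ldots, p_n)$ is $\varepsilon$-multi-accurate.

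With the final predictors in hand, I would build a random-oracle implementation $T$ in the sense of \Cref{def:random-imp} as follows: on query $\bot$, sample $x$ uniformly from $\{0,1\}^n$, and then, using disjoint slices of the random oracle indexed by the pair $(x,j)$, independently draw $y_j \sim \ber(p_j(x))$ for each $j \in [n]$, returning $(x, y_1 y_2 \cdots y_n)$. The disjoint-slice convention guarantees that every fixing of the oracle induces a well-defined deterministic function $f$, so $T$ implements a genuine distribution $M$ over sample-access objects. For any distinguisher $(S, j) \in \cD$ the difference $\bigl|\E_{B \sim M}\Pr[D^B = \acc] - \Pr[D^{B^*} = \acc]\bigr|$ equals $\bigl|\E_x[p_j(x) \one(x \in S)] - \E_x[f_j^*(x)\one(x \in S)]\bigr|$, which is at most $\varepsilon$ by the termination condition, giving the stated indistinguishability guarantee.

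The main obstacle is the failure-probability accounting across the many auditor calls: a single invocation of $\Lambda$ fails with probability $\delta'$ and we may make up to $O(n/\gamma^2)$ calls, so a union bound forces $\delta' \lesssim \delta \gamma^2 / n$, which is precisely the hypothesis $\delta' \le c\delta\gamma^2 n^{-1}$ for a sufficiently small absolute constant $c$. Efficiency then falls out routinely: each iteration costs one auditor call (time $W_1$) and appends a single indicator of circuit size at most $W_2$ to one $p_j$, while $T$ evaluates each of the $n$ predictors once and draws $n$ Bernoulli bits from the oracle, yielding the claimed polynomial complexity for both the learner and the implementation.
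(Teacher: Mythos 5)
Your proposal is correct and follows essentially the same route as the paper: coordinatewise predictors $p_1,\dots,p_n$ updated by $\pm\gamma$ on the auditor's witness set, the potential $\sum_{j}\sum_x (f_j^*(x)-p_j(x))^2 \le n\cdot 2^n$ dropping by $\gamma^2 2^n$ per update (so at most $n/\gamma^2$ rounds), and a random-oracle implementation that draws each output bit as $\ber(p_j(x))$ independently with consistent randomness per input. The only cosmetic difference is that the paper stores the updates as a list and recomputes $p_j$ via $\lcap$ rather than updating in place, which is equivalent to your iterated capping.
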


We describe the learner $L$ and implementation $T$.
\begin{description}
\item[Implementation:] Given a list $F$ of tuples $(S,j,w)$, the implementation algorithm $T$ with oracle access to a random oracle $R$ generates a model $\hat B$. Given $x\in\set{0,1}^n$ the algorithm $T$ outputs $v$ by:
\begin{enumerate}
    \item For every $j\in[n]$ calculate $p_j(x)\in [0,1]$ by: 
    \[p_j(x) = \lcap\paren{w\cdot\one(x\in S): (S,j,w)\in F}. \]
    \item Choose $v\in\set{0,1}^n$ by setting  for every $j\in[j]$, $v_j=1$ with probability $p_j(x)$ independently using the random oracle $R(x)$.
    \end{enumerate}

\item[Learning Algorithm:] The learning algorithm $L$.
\begin{enumerate}
    \item Initialization: set $F = \set{(\set{0,1}^n,j,1/2)|j\in[n]}$.
    \item Query the auditor with the current model and $B^*$. If the auditor returns $(S,j,b)$, add $(S,j,b\cdot\gamma)$ to $F$ and repeat.\label{item:prod-update}
\end{enumerate}
\end{description}
Notice that the learner is identical to the classic boosting algorithm, where we apply it individually on each of the coordinated of $f(x)$. This is the reason that we can only be indistinguishable with respect to a set of distinguishers $D$ that only check if a single output coordinate in $f(x)$ equals $1$.

\begin{proof}[Proof of \Cref{thm:func}]
Similar to the previous proofs, if the auditor does not output any set then the model is $\varepsilon$-indistinguishable for all $D\in\cD$.

We prove the correctness using a potential function, which sums over the error of each coordinate $j\in[n]$. 
Let \[\varphi(B^*,p_1,\ldots,p_n) =\sum_{j\in[n]}\sum_{x\in\set{0,1}^n}\paren{\one(f(x)_j=1) - p_j(x)}^2. \]
Every $x$ appears in the sum $n$ times, therefor the initial value of $\varphi$ is bounded by $n\cdot 2^n$.

Let $p_1,\ldots,p_n$ be the predictors before the execution of \Cref{item:prod-update}, and $p'_1,\ldots, p_n'$ afterwards. We claim that with probability $1-\delta'$ we have that
$\varphi(B^*,p_1,\ldots,p_n) - \varphi(B^*,p_1',\ldots,p_n')\geq\gamma^2 2^n$.

The auditor returned $(S,i,b)$, then 
\begin{align}
    \abs{ \frac{1}{2^n}\sum_{x\in S}(\one(f(x)_j=1) - p_j(x))}\geq \gamma.\label{eq:prod-adv}
\end{align}
Assume without loss of generality that $b=1$.  We note that the difference  between $p_1,\ldots,p_n$ and $p_1',\ldots,p_n'$ is only between $p_i,p_i'$ and only for $x\in S$. 
\begin{align*}
    \varphi(B^*,p_1,\ldots,p_n) - \varphi(B^*,p'_1,\ldots,p_n') = {} & \sum_{x\in S}\left( \paren{\one(f(x)_i=1) - p_i(x)}^2 - \paren{\one(f(x)_i=1) - p'_i(x)}^2\right)  \\= {} &\sum_{x\in S}\left(-2\cdot\one(f(x)_i=1)(p_i(x)- p'_i(x))+  p^2_i(x) - {p'_i(x)}^2\right).
\end{align*}
Assuming that the algorithm does not do any capping,
\begin{align*}
    \varphi(B^*,p_1,\ldots,p_n) - \varphi(B^*,p'_1,\ldots,p_n') = {} & \sum_{x\in S}\left(2\gamma\cdot\one(f(x)_i=1) - 2\gamma p_i(x) -\gamma^2\right)\\\geq {} &
    2\gamma^2 2^n -\gamma^22^n\\
    = {} & \gamma^2 2^n,
\end{align*}
where we use \cref{eq:prod-adv} in the last inequality.

The values of $\one(f(x)_j=1)$ are in $\set{0,1}$, therefore capping the value of $p_i$ to $[0,1]$ can only lower the value of the potential function.
\end{proof}

\section{Learning Exponential-Size Graphs}

\subsection{Learning Dense Graphs}\label{sec:dense}
The most basic setting for graphs is the dense model, where the graph-induced sample-access object $B$ induced by a graph $G=([N],E)$ can be thought of as getting a random entry $((u,v),b)$ for $u,v\in[N],b\in\set{0,1}$ from the adjacency matrix of $G$ (see \Cref{def:sample-graph}). In this setting, we can think of the adjacency matrix of the graph as a function, where the graph imposes some extra structure on the distinguishers. Therefore, some of the results from \Cref{sec:func-learning} follow directly.

\paragraph{Distinguishers:} Let $\cD$ be a collection of distinguishers. Each distinguisher has two sets of vertices $U_D,V_D\subset [N]$. When getting a random entry $((u,v),b)$ from the graph-induced sample-access object $B^*$, it accepts if $u\in U_D$, $v\in V_D$ and $b=1$.

\paragraph{Auditors:} An algorithm $\Lambda$ is an $(\varepsilon,\gamma,\delta)$-auditor for a collection of tuples $\cS$ containing pairs $(U,V)\subset [N]\times[N]$ if it satisfies the following.  The auditor received query access to a predictor $p:[N]\times[N]\rightarrow[0,1]$ and to a graph-induced sample-access object $B^*$ induced by a graph $G^* = ([N],E^*)$. If there exists $(U,V)\in\cS$ such that
\[ b\cdot\paren{\Pr_{(u,v)\in [N]\times[N]}[u\in U,v\in V,(u,v)\in E^*] - \E_{(u,v)\in [N]\times[N]}[\one(u\in U,v\in V)p(u,v)]}\geq\varepsilon. \]
Then with probability $(1-\delta)$, it outputs $U',V'$ such that
\[ b\cdot\paren{\Pr_{(u,v)\in [N]\times[N]}[u\in U',v\in V',(u,v)\in E^*] - \E_{(u,v)\in [N]\times[N]}[\one(u\in U',v\in V')p(u,v)]}\geq\gamma. \]

From \Cref{thm:sample-function}, which uses a multicalibrated predictor from \cite{TrevisanTV09,hebert2018multicalibration,kim2019multiaccuracy}, we get that we can learn an exponential graph model assuming we have an auditor for the class of distinguishers. This is done by simply applying \Cref{thm:sample-function} on the sample-access object for a graph, when treating its adjacency matrix as a binary function, i.e. apply it on $f^*:[N]\times[N]\rightarrow\set{0,1}$ such that $f^*(u,v)=1\iff (u,v)\in E^*$. 
\begin{corollary}[Corollary of \Cref{thm:sample-function}]
\label{thm:sample-graph}
Let the distinguisher class $\cD$ be defined above. Let $\varepsilon,\gamma,\delta,\delta'> 0$ be parameters satisfying $\delta' \le c\delta\gamma^2$ for a sufficiently small absolute constant $c > 0$. Let $\cB$ be the class of sample-access objects induced by graphs over vertex set $[N]$. Let $\Lambda$ be an $(\varepsilon,\gamma,\delta')$-auditor for $\cS = \set{(U_D,V_D)|D\in\cD}$. Then there exists an $(\varepsilon,\delta)$-learner $L$ for $\cB$ w.r.t.\ $\cD$.
Moreover, if the auditor $\Lambda$ runs in time at most $W_1$ and always outputs a function with circuit size at most $W_2$, then the learner $L$ runs in time $\poly(\gamma^{-1},\log(\delta^{-1}),W_1)$ and always outputs implementations with circuit complexity $O(\gamma^{-2}W_2)$.
\end{corollary}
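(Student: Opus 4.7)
The plan is to reduce directly to \Cref{thm:sample-function} by identifying the graph $G^* = ([N], E^*)$ with the binary function $f^*: [N]\times[N] \to \{0,1\}$ whose value at $(u,v)$ is $\one((u,v)\in E^*)$, i.e., the adjacency matrix. Under this identification, the graph-induced sample-access object (\Cref{def:sample-graph}) coincides with the function-induced sample-access object (\Cref{def:sample-function}) on $f^*$ over domain $X = [N]\times[N]$; the only formal difference is that the graph object returns the triple $(u,v,y)$ while the function object returns the pair $((u,v),y)$, a trivial repackaging of the same sample.

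Next I would translate the distinguisher class. A graph distinguisher $D\in\cD$ with sets $(U_D, V_D)$ accepts the sample $((u,v), y)$ iff $u\in U_D$, $v\in V_D$, and $y = 1$; this fits exactly the format of \Cref{sec:sample-function}, so \Cref{claim:distinguisher-function} applies with the associated function
\[
g_D(u,v) \;=\; \Pr[D((u,v),1) = \acc] - \Pr[D((u,v),0) = \acc] \;=\; \one(u\in U_D,\ v\in V_D).
\]
Thus the induced function class is $\cG = \{(u,v) \mapsto \one(u\in U_D, v\in V_D) : D\in\cD\}$, which is essentially just $\cS$ viewed as $\{0,1\}$-valued functions on $[N]^2$.

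The main (and only nontrivial) step is checking that the set-auditor $\Lambda$ hypothesized in the corollary is an auditor for $\cG$ in the sense of \Cref{def:auditor-sample-function}. The key observation is that for any $(U,V)$ and any predictor $p$,
\[
\E[f^*(u,v)\,\one(u\in U,v\in V)] = \Pr[u\in U, v\in V, (u,v)\in E^*],
\]
and
\[
\E[p(u,v)\,\one(u\in U,v\in V)] = \E[\one(u\in U,v\in V)\,p(u,v)].
\]
Hence the gap condition for $\Lambda$ matches the inner-product gap $\E[f^* g] - \E[p g]$ from \Cref{def:auditor-sample-function} exactly. The sign $b\in\{-1,+1\}$ returned by the set-auditor handles the absolute value on the hypothesis side of \Cref{def:auditor-sample-function}: if $b=+1$ we set $\hat g = \one(u\in U', v\in V')$, and if $b=-1$ we set $\hat g = -\one(u\in U', v\in V')\in [-1,1]$, yielding $\E[f^*\hat g]-\E[p\hat g]\ge \gamma$ in either case.

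Having verified that $\Lambda$ is an $(\varepsilon,\gamma,\delta')$-auditor for $\cG$, I would apply \Cref{thm:sample-function} verbatim to obtain an $(\varepsilon,\delta)$-learner $L$ for the sample-access objects induced by functions $f:[N]^2\to\{0,1\}$ with respect to $\cD$; reinterpreting such functions as graphs gives the learner for $\cB$. The runtime and circuit-size bounds are inherited directly from \Cref{thm:sample-function}, noting that the translation between graph distinguishers and their indicator functions $g_D$ adds only $O(1)$ overhead per evaluation. I expect no real obstacles here; the entire content is the translation of language between graphs and binary functions on $[N]^2$, and the two auditor definitions are essentially restatements of one another once this identification is made.
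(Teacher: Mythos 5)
Your proposal is correct and matches the paper's own argument exactly: the paper proves \Cref{thm:sample-graph} by identifying the adjacency matrix of $G^*$ with the binary function $f^*:[N]\times[N]\to\{0,1\}$, translating graph distinguishers into the indicator functions $g_D(u,v)=\one(u\in U_D, v\in V_D)$, observing that the set-auditor is precisely a function-auditor in the sense of \Cref{def:auditor-sample-function}, and then invoking \Cref{thm:sample-function}. Your treatment of the sign $b$ (absorbing it into $\hat g\in[-1,1]$) is a welcome bit of extra care that the paper glosses over.
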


The same holds also for learning a graph with a fixed number of edges. We can apply \Cref{thm:fixed-weight} on the sample-access object representing the adjacency matrix of a graph with $m$ edges (when $m$ is known).  \Cref{thm:fixed-weight} implies that we can learn an indistinguishable model in the setting of \Cref{thm:sample-graph}, while also requiring that the model only contains graphs with $m$ edges. Note though that this can be done when using the adjacency matrix of the graph as a function, so the result is meaningful only for dense graphs, i.e. when $m=\Theta(N^2)$.

The two theorems also apply when learning a model of a directed graph, and all of the previous results still holds. This is done simply by having the adjacency matrix of a directed graph, i.e. when $f(u,v)$ might differ from $f(v,u)$. In addition, for a directed graph, it is possible to apply the construction used in \Cref{sec:fixed-weight} to have a truthfulness requirement of a fixed out degree $d$. That is, when given a sample-access object induced by a directed graph $G$ with uniform out-degree $d$, we output a model such that for all $B\sim M$, the graph $G_B$ has out degree $d$ for all vertices.
This is done by learning a predictor as previously, and using the construction from \Cref{sec:fixed-weight} for the sampling algorithm. In the sampling algorithm, instead of having a directed edge $(u,v)$ with probability $p(u,v)$ independently, we apply the algorithm from \Cref{sec:fixed-weight} on the function representing the row of $u$ in the adjacency matrix, with the value of the function being $d$. We sample a directed edge by following the algorithm from \Cref{sec:fixed-weight}. 

\subsection{Learning Sparse Graphs Without Dense Subgraphs}
For a sparse graph, a sample-access graph object is not useful. This is because a random entry in the adjacency matrix of a sparse graph is nearly always $0$, so any sparse graph is indistinguishable from the empty graph for sample-access objects. For sparse graphs, we study graph-induced support-access objects (\Cref{def:support-graph}). A support-access object $B^*$ induced by a sparse graph $G = ([N],E)$, $B(\bot)$ outputs a random edge in the graph $(u,v)\in E$.

In \Cref{sec:support-func} we studied function-induced support access objects, i.e. an object of a function $f$ that returns a random input $x$ such that $f(x)=1$. For a graph $G=([N],E)$, if we view its adjacency matrix as a function $f:[N]\times[N]\rightarrow\set{0,1}$, then a support-access function object $B$ of $f$ returns a random positive entry of $f$, i.e. a random edge $(u,v)\in E$. Therefore, we can apply \Cref{thm:rand-one} for support-access graph objects.  Unfortunately, \Cref{thm:rand-one} only applies for functions $f$ with a constant expected value, i.e. when $f$ represents the adjacency matrices of a dense graph (for $f$ representing a sparse graph, the learning and implementation are inefficient).

In this section we show how to create \emph{a dense model for a sparse graph}, as long as the sparse graph does not have a subgraph which is too dense. We do so by using the strong regularity lemma for sparse graphs \cite{KR03,Scott2011}, which implies that for a sparse graph $G$ with no dense subgraphs, there exists a dense graph that is indistinguishable from it. Then we reduce the problem to finding a dense model for a dense graph, and apply \Cref{thm:rand-one}.

\paragraph{Distinguishers:} Let $\cD$ be a collection of distinguishers. Each distinguisher has two sets of vertices $U_D,V_D\subset [N]$. A distinguisher $D$ on input $(u,v)$ accepts if $u\in U_D$ and $v\in V_D$.

\paragraph{Auditors:} An algorithm $\Lambda$ is an $(\varepsilon,\varepsilon',\delta)$-auditor for a collection pairs of sets $\cS$, such that $(U,V)\in\cS$, $U,V\subset [N]$ if it satisfies the following. The auditor received query access to a predictor $p:[N]\times[N]\rightarrow[0,1]$ and access to a graph-induced support-access object $B^*$ representing a graph $G^* = ([N],E^*)$. If there exists a pair of sets $(U,V)\subset \cS$ and a bit $b$ such that
\[ b\cdot\paren{\Pr_{(u,v)\sim B^*(\bot)}[u\in U,v\in V] - \Pr_{(u,v)\sim p}[u\in U,v\in V]}\geq\varepsilon, \]
Then $\Lambda$ outputs sets $U',V'\subset [N]$ such that 
\[ b\cdot\paren{\Pr_{(u,v)\sim B^*(\bot)}[u\in U',v\in V'] - \Pr_{(u,v)\sim p}[u\in U',v\in V']}\geq\varepsilon'. \]
The distribution $(u,v)\sim p$ is defined by the predictor $p$,  i.e. for all $u,v\in [N]$ we have 
\[ \Pr_{(u',v')\sim p}[u'=u,v'=v] = \frac{p(u,v)}{\sum_{u'',v''\in [N]}p(u'',v'')}. \]

\paragraph{Graph Notations and Definitions}
For a graph $G=([N],E)$ and $U,V\subset [N]$, we define $E_G(U,V) = \set{(u,v)\in E | u\in U,v\in V}$ to be the set of edges between $U,V$ in $G$. We denote by $\rho_G(U,V)$ the edge density between $U,V$ in $G$, $\rho_G(U,V) = \frac{\abs{E_G(U,V)}}{\abs{U}\abs{V}}$. We denote by $\rho_G = \rho_G([N],[N])$ the edge density of the graph.

We use the definition of upper-uniform graphs from \cite{KR03,Scott2011} to define graphs without dense subgraph. We need a slightly stronger definition than what is used in \cite{KR03,Scott2011}, with an additional requirement also for $U,V$ smaller than $\eta N$.
\begin{definition}[Upper-uniform graphs]
    A graph $G=([N],E)$ is $(\eta,\gamma)$-upper uniform, if for every two disjoint sets $U,V\subset [N]$, with $\min\set{\abs{U},\abs{V}}\geq \eta N$ we have that
    \[ \rho_G(U,V)\leq \gamma\rho_G, \]
    and for $U,V$ such that $\min\set{\abs{U},\abs{V}}< \eta N$, 
    \[ \abs{E(U,V)}\leq \gamma \eta\rho_G N^2. \]
\end{definition}
We remark that with high probability, a random sparse graph is upper-uniform for constants $\eta,\gamma$.

In this section we use the regularity lemma for sparse graphs from \cite{KR03,Scott2011}. 
\begin{definition}[Regular sets]
    For graph $G=([N],E)$, the vertex sets $V,U\subset [N]$ are $(\delta)$-regular if for all $U'\subset U,V'\subset V$ with $\abs{U'}\geq\delta\abs{U},\abs{V'}\geq\delta \abs{V}$ we have 
    \[ \abs{\rho_G(U',V') - \rho_G(U,V)}\leq \delta\rho_G .\]
\end{definition}

\begin{theorem}[Regularity lemma for sparse upper uniform graphs \cite{KR03}]\label{lem:regularity}
    For every $\delta,\gamma>0$ there exists $(\eta,m) = \varphi(\delta,\gamma), \eta>0,m\in\mathbb{N}$  such that every $(\eta,\gamma)$-upper uniform graph $G$ has a partition $U_0,U_1,\ldots,U_k$ with $k\leq m$ such that $\abs{V_0}\leq \delta N$, $\abs{V_i} = \abs{V_j}$ for all $i,j\geq 1$, and all except $\delta k^2$ of the pairs $V_i,V_j$ in the partition are $(\delta)$-regular.
\end{theorem}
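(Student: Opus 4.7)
The plan is to adapt Szemer\'edi's classical proof of the regularity lemma to the sparse setting, using the upper-uniformity hypothesis to normalize densities by $\rho_G$. The upper-uniformity condition plays the role that boundedness of densities plays in the dense regularity lemma: it is exactly what is needed to make the mean-square density index uniformly bounded, which in turn is what makes the refinement process halt.

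First I would define, for a partition $\mathcal{P} = \{V_0,V_1,\ldots,V_k\}$ of $[N]$ (with $V_0$ the exceptional part), the normalized mean-square density index
\[
\mathrm{ind}(\mathcal{P}) \;=\; \frac{1}{\rho_G^2\,N^2} \sum_{i,j \ge 1} |V_i|\,|V_j|\,\rho_G(V_i,V_j)^2.
\]
The first clause of $(\eta,\gamma)$-upper-uniformity, applied to the (large) parts $V_i,V_j$, gives $\rho_G(V_i,V_j) \le \gamma\rho_G$ and hence the crucial uniform bound $\mathrm{ind}(\mathcal{P}) \le \gamma^2$ throughout the process.

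The heart of the argument is the standard refinement step: if the pair $(V_i,V_j)$ fails $\delta$-regularity, by definition there exist witnesses $V_i' \subseteq V_i$, $V_j' \subseteq V_j$ of size at least $\delta|V_i|, \delta|V_j|$ with $|\rho_G(V_i',V_j') - \rho_G(V_i,V_j)| > \delta\rho_G$. Splitting $V_i,V_j$ along these witnesses and applying Cauchy-Schwarz shows that the per-pair contribution to $\mathrm{ind}(\mathcal{P})$ strictly increases by at least $\delta^4 |V_i||V_j|/N^2$ in normalized terms. I would then iterate: starting from a balanced partition into equal parts, as long as more than $\delta k^2$ pairs are non-regular, take the common refinement simultaneously realizing all witness splits. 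Summing the per-pair gains produces a total index increase of order $\delta^5$, and equalizing part sizes (sweeping leftover vertices into $V_0$) costs only a controlled amount. Since $\mathrm{ind} \le \gamma^2$, the refinement process terminates after $O(\gamma^2/\delta^5)$ rounds, yielding a bound $k \le m(\delta,\gamma)$ of tower-type.

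The main obstacle will be tracking the exceptional set $V_0$ and small residual pairs without spoiling either the index upper bound or the per-pair gain: each refinement round can equalize part sizes only at the cost of dumping some vertices into $V_0$, and the edges touching $V_0$ could a priori dominate the density calculations. This is precisely where the second clause of $(\eta,\gamma)$-upper-uniformity (the bound $|E(U,V)| \le \gamma\eta\rho_G N^2$ for small $U,V$) enters: it certifies that as long as $|V_0| \le \delta N$ and the refinement granularity stays above $\eta N$, the contribution of short-lived small pairs to densities, to the index, and to the exceptional set is negligible relative to the $\delta^5$ gain per round. Choosing $\eta = \eta(\delta,\gamma)$ small enough at the outset (depending on the final number of parts $m$) then closes the loop and produces the claimed partition.
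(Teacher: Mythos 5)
The paper does not prove this statement; it is imported verbatim from the cited works (Kohayakawa--R\"odl \cite{KR03}, Scott \cite{Scott2011}) and used as a black box in \Cref{claim:exists-dense}. There is therefore no in-paper proof to compare against. With that said, your sketch is a faithful outline of the standard Kohayakawa--R\"odl argument, and it is essentially correct: the normalized index
\[
\mathrm{ind}(\mathcal{P}) = \frac{1}{\rho_G^2 N^2}\sum_{i,j\ge 1} |V_i||V_j|\,\rho_G(V_i,V_j)^2
\]
is bounded by $\gamma^2$ via the density clause of upper-uniformity (playing the role of the trivial $\mathrm{ind}\le 1$ bound in the dense Szemer\'edi lemma), the defect of an irregular pair is measured in units of $\rho_G$ so the Cauchy--Schwarz increment comes out as $\delta^4|V_i||V_j|/N^2$ after the $\rho_G^2$ cancels, and the round count $O(\gamma^2/\delta^5)$ with a tower-type bound on $m$ follows exactly as you say, with $\eta$ chosen last so that all parts at every stage stay above $\eta N$.

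One remark on emphasis. You invoke the second clause of the paper's upper-uniformity definition (the bound $|E(U,V)|\le\gamma\eta\rho_G N^2$ for small $U,V$) to control the exceptional set and the remainders created by equalization. The paper itself notes that this clause is an addition to the definition in \cite{KR03,Scott2011}; it is introduced for the downstream argument in \Cref{claim:exists-dense} (bounding edges into $V_0$ and into the small-intersection pieces $S_U$, $S_V$), not for the regularity lemma. The classical proof handles the exceptional set with only the density clause, essentially by maintaining equal part sizes throughout, keeping $V_0$ out of the index sum, and arranging the equalization so the swept-away mass per round is at most $N/2^k$, which is absorbed into the $\delta N$ budget without needing a density bound on the remainders themselves. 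Since the paper states the theorem under the strengthened hypothesis, your use of the second clause gives a valid (if slightly less economical) proof of the theorem as stated; just be aware that the cited result holds under the weaker hypothesis, which is part of why it can be cited as-is.
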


We now state and prove the theorem for constructing a dense model for a sparse graph.
\begin{theorem}\label{thm:sparse-graph}
For every parameter $\gamma,\varepsilon,\varepsilon',\lambda'\lambda''>0$, such that $\lambda'\leq c\lambda \varepsilon'^2$  for a sufficiently large constant $c$. Then there exists $\eta\in[0,1]$ such that the following holds.
Let $\cB$ be a collection of graph-induced support access objects, such that for each $B^*\in\cB$, the graph it represents $G_{B^*}$ is  $(\eta,\gamma)$-upper-uniform. 

Let $\cD$ be a collection of distinguishers. If there exists an $(\varepsilon,\varepsilon',\delta ')$-auditor $\Lambda$ for the collection of sets $\sC = \set{(U_D,V_D)|D\in\cD}$,
then there exists an $(\varepsilon,\delta'')$-learning algorithm  $L$ for all $B^*\in\cB$ with respect to $\cD$.
\end{theorem}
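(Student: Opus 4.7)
The plan is to follow the two-step outline sketched in the introduction: first prove existentially that there is a dense graph $H$ whose normalized edge distribution is $(\varepsilon/4)$-indistinguishable from $B^*(\bot)$ on every cut in $\sC$, and then run a boosting-style learner that targets a dense predictor of constant density, using support access to $G^* := G_{B^*}$ as a proxy for the (non-constructive) $H$. Throughout, we will choose $\delta$ small compared to $\varepsilon,\varepsilon',\gamma$, and take $\eta,m := \varphi(\delta,\gamma)$ from \Cref{lem:regularity}, which determines the upper-uniformity parameter $\eta$ in the theorem statement.

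For the existential step, I would apply \Cref{lem:regularity} to the $(\eta,\gamma)$-upper-uniform graph $G^*$ to obtain a partition $V_0,V_1,\ldots,V_k$ with $|V_0|\le\delta N$, equal sizes $|V_1|=\cdots=|V_k|$, and at most $\delta k^2$ non-regular pairs. Define $H$ to be the weighted graph (yielding a dense distribution over edges) that places, between each regular pair $V_i,V_j$ with $i,j\ge 1$, uniform edge mass proportional to $\rho_{G^*}(V_i,V_j)/\rho_{G^*}$, and zero mass elsewhere. Because $\sum_{i,j}\rho_{G^*}(V_i,V_j)|V_i||V_j|=\rho_{G^*}N^2$, this makes the normalized edge distribution of $H$ behave like that of a dense graph of constant density $\alpha$. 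To show $(\varepsilon/4)$-indistinguishability from $G^*$ on an arbitrary cut $(U,V)\in\sC$, decompose $U,V$ along the partition: contributions from $V_0$ and from pieces $U\cap V_i,V\cap V_j$ of size below $\eta N$ are bounded via upper-uniformity by $O(\gamma\eta\rho_{G^*}N^2)$ edges, non-regular pairs contribute at most $\delta k^2\cdot\gamma\rho_{G^*}(N/k)^2=O(\delta\gamma\rho_{G^*}N^2)$ edges by upper-uniformity applied to their union, and on each regular pair $\delta$-regularity gives $|\rho_{G^*}(U\cap V_i,V\cap V_j)-\rho_{G^*}(V_i,V_j)|\le\delta\rho_{G^*}$ so the aggregate regularity error is $O(\delta\rho_{G^*}N^2)$. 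Dividing through by $|E_{G^*}|=\rho_{G^*}N^2$ yields relative error $O(\delta+\gamma\eta)\le\varepsilon/4$.

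For the learning step, I would run a variant of the boosting algorithm of \Cref{thm:rand-one}, but with the target model density set to the constant $\alpha$ equal to the normalized density of $H$ rather than to $\rho_{G^*}$. The algorithm maintains a dense predictor $p:[N]\times[N]\to[0,1]$ with $\sum_{u,v}p(u,v)\approx\alpha N^2$, queries the auditor $\Lambda$ with predictor $p$ and target $B^*$, and on a returned tuple $(U',V',b)$ updates $p\leftarrow\lcap(p+b\gamma\one_{U'\times V'}\cdot\alpha N^2/|U'||V'|)$, then re-centers the total mass as in \Cref{item:weight-update}. When the auditor returns nothing, the output model is $\varepsilon$-indistinguishable from $B^*$ on $\sC$ by the auditor guarantee; the output implementation is dense and supports efficient rejection sampling in time $\polylog N$. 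For convergence, I would replace the potential $\varphi(B^*,p)$ of \Cref{thm:rand-one} by $\Phi(p):=\sum_{u,v}(p(u,v)-q^*(u,v))^2$, where $q^*$ is the dense predictor induced by the existential $H$. Since $H$ is $\varepsilon/4$-indistinguishable from $G^*$ on $\sC$, any set $S=U'\times V'$ on which the auditor declares a violation with respect to $B^*$ is also a violation with respect to $q^*$ up to an $\varepsilon/4$ slack, and the calculation in \Cref{claim:disting-update} (with $q^*$ in place of $f^*$ and $\alpha$ constant) then gives $\Phi(p)-\Phi(p_{\text{new}})=\Omega(\gamma^2\alpha^2 N^2)$, yielding a $\poly(1/\varepsilon,1/\gamma)$ bound on the number of iterations because $\Phi$ starts at most $\alpha^2 N^2$.

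The main obstacle is precisely this bridging argument: the potential-function analysis in \Cref{thm:rand-one} implicitly requires the target $f^*$ and predictor $p$ to have comparable scales, which fails when we insist on a dense model for a sparse $G^*$. The point of the existential $H$ is to provide a dense reference $q^*$ with matching scale so that the boosting potential $\sum(p-q^*)^2$ telescopes correctly; the remaining subtlety is just to propagate the $\varepsilon/4$ indistinguishability slack between $q^*$ and $G^*$ through the auditor's guarantee and through the update step, which is harmless as long as we set the auditor's advantage threshold $\varepsilon'$ and the update step size $\gamma$ comfortably below the indistinguishability error between $G^*$ and $H$.
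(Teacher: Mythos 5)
Your proposal is correct and follows essentially the same two-step route as the paper: an existential dense reference graph $H$ obtained from the sparse regularity lemma via the same case decomposition (exception set, small intersections, non-regular pairs, regular pairs), followed by running the support-access boosting learner against $B^*$ while measuring progress with a potential anchored to the dense $H$, using the fact that auditor violations for $B^*$ transfer to violations for $H$ up to the indistinguishability slack. The paper phrases the second step as simulating an auditor for the virtual object $B^H$ and invoking \Cref{thm:rand-one} as a black box, which is the same argument you make explicit with the potential $\Phi(p)=\sum(p-q^*)^2$.
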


\begin{proof}
At a high level, the proof of the theorem has two parts. In the first, \Cref{claim:exists-dense}, we use the regularity lemma for sparse graphs to show that for every upper-uniform graph $G$ there exists a dense graph $H$ that is indistinguishable from $G$. In the second part, we reduce finding a model to our graph $G^*$ to finding a model for a dense graph $H$ that is indistinguishable from it.

Let $\delta = \frac{\varepsilon'^2 }{100\gamma}$ . Let $(\eta',m) = \varphi(\delta,\gamma)$, where $\varphi$ is the function from the regularity lemma, \Cref{thm:sparse-graph}. Let $\eta = \min\set{\eta',1/(2m),\delta}$.

Let $B^*\in\cB$ be a graph-induced support-access object, representing a sparse graph $G^*$ that is $(\eta,\gamma)$-upper uniform. We remark that $G^*$ is also $(\eta',\gamma)$-upper uniform, as the upper-regularity is monotone in the $\eta$ parameter.

From  \Cref{claim:exists-dense}, there exists a graph $H$ with $\rho_H\approx 1/\gamma$ such that for all $U,V\subset [N]$, 
        \[ \abs{\frac{\abs{E_G(U,V)}}{\gamma} - \abs{E_H(U,V)}\rho_G }\leq 12\delta \rho_GN^2 .\]

Suppose that instead of a graph-induced support-access object $B^*$ representing the sparse graph $G^*$, we had access to an object $B^H$ representing the dense graph $H$. Then, we could have run the learning algorithm from  \Cref{sec:support-func} on $B^H$ (treating $B^H$ as a function-induced support-access object), and get an implementation of a model $M$ that is $2\varepsilon$-indistinguishable from $B^H$. Since the indistinguishability error is additive, $M$ is also $(2\varepsilon+\delta)$-indistinguishable from ${B^*}$.

In our case, we do not have access to the object $B^H$, and cannot find $H$. Instead, we simulate an auditor with access to $B^H$ by our auditor with access to $B^*$. Every time that the algorithm queries the auditor $\Lambda^{B^H,p}$, we instead invoke the auditor $\Lambda^{B^*,p}$ and forward its answer to the algorithm.

We claim that our simulated auditor is an $(2\varepsilon,\frac{1}{2}\varepsilon',\delta')$-auditor for $B^H$. In fact, a slightly weaker claim suffice. We claim that if $\Lambda^{B^*,p}$ returns a tuple $((U,V),b)$, then it is also a valid return value for $\Lambda^{B^H,p}$.

Suppose the auditor $\Lambda^{B^*,p}$ returned a tuple  $((U,V),b)$. Then with probability $1-\delta'$ we have 
\[ \abs{\Pr_{(u,v)\sim B^*}[u\in U,v\in V] - \Pr_{(u,v)\sim p}[u\in U,v\in V]} > \varepsilon'. \]
This is the same as 
\[ \abs{\frac{E_{G^*}(U,V)}{\rho_{G^*}\cdot  N^2} - \Pr_{(u,v)\sim p}[u\in U,v\in V]} > \varepsilon'. \]
From \Cref{claim:exists-dense}, we have that
\[ \abs{\frac{\abs{E_{G^*}(U,V)}}{\rho_{G^*} \cdot N^2} - \frac{\gamma\abs{E_H(U,V)}}{ N^2} }\leq 12\delta\gamma  .\]
Together with the fact that $\rho_H\in [1/\gamma-12\delta,1/\gamma+12\delta]$ we get that 
\[ \abs{\Pr_{(u,v)\sim B^H}[u\in U,v\in V] - \Pr_{(u,v)\sim p}[u\in U,v\in V]} \geq  \epsilon' - 12\delta\gamma \geq\frac{1}{2}\varepsilon'. \]
For our choice of $\delta$. That is, if the auditor $\Lambda^{B^*,p}$ returned a tuple $((U,V),b)$, then with probability $1-\delta'$, this answer is a valid return value also for $\Lambda^{B^H,p}$.

If the auditor $\Lambda^{B^*,p}$ does not return an answer, then $p$ is a $2\varepsilon$ indistinguishable model from $B^*$ (see the proof of \Cref{thm:rand-one} for more details), so we are done.
\end{proof}

\begin{claim}\label{claim:exists-dense}
    Let $\gamma,\delta,\eta >0$, be such that $\eta\leq \eta',1/(2m),\delta$ for $(\eta',m) = \varphi(\delta,\gamma)$. Then for every $(\eta,\gamma)$-upper uniform graph $G$ on vertex set $[N]$, there exists a graph $H$ on vertex set $[N]$ with $\rho_H\in [1/\gamma - 12\delta,1/\gamma+12\delta]$, such that for every $U,V\subset [N]$, we have that
    \[ \abs{\frac{\abs{E_G(U,V)}}{\gamma} - \abs{E_H(U,V)}\rho_G }\leq 12\delta \rho_GN^2 .\]
\end{claim}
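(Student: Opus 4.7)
The plan is to apply the sparse regularity lemma (\Cref{lem:regularity}) to $G$ to obtain a partition $V_0,V_1,\ldots,V_k$, define $H$ as a dense quasi-random graph whose inter-part densities match those of $G$ rescaled by $1/(\gamma\rho_G)$, and verify the claimed inequality by decomposing $|E_G(U,V)|$ and $|E_H(U,V)|$ along the partition. The density bound on $\rho_H$ then follows immediately by specializing the inequality to $U=V=[N]$.

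Concretely, invoking \Cref{lem:regularity} with parameters $\delta,\gamma$ yields a partition with $k\le m$ parts, $|V_0|\le\delta N$, equal-sized parts $|V_i|=L$ for $i\ge 1$, and at most $\delta k^2$ pairs $(V_i,V_j)$ with $i,j\ge 1$ that fail to be $\delta$-regular; call the remaining pairs \emph{good}. Since $\eta\le 1/(2m)$ we have $L\ge\eta N$, so upper uniformity gives $\rho_G(V_i,V_j)\le\gamma\rho_G$ for every good pair and therefore $p_{ij}:=\rho_G(V_i,V_j)/(\gamma\rho_G)\in[0,1]$ is a valid edge probability. Construct $H$ via the probabilistic method: for each good pair $(V_i,V_j)$, include each potential edge independently with probability $p_{ij}$, and place no edges involving $V_0$ or across non-good pairs. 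A Chernoff bound combined with a union bound over the $2^{2L}$ sub-pairs $(U',V'')\subseteq V_i\times V_j$ shows that with positive probability the resulting $H$ is $\delta$-quasi-random inside every good pair, meaning $\bigl||E_H(U',V'')|-p_{ij}|U'||V''|\bigr|\le\delta L^2$ for all such sub-pairs; fix any such realization of $H$.

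Now fix arbitrary $U,V\subseteq[N]$ and write $U_i=U\cap V_i$, $V_j'=V\cap V_j$. The discrepancy $|E_G(U,V)|/\gamma-|E_H(U,V)|\rho_G$ is split into four kinds of contributions: (i) terms involving $V_0$ are bounded on the $G$ side by $O(\gamma\delta\rho_G N^2)$ via upper uniformity (splitting on whether the $V_0$-side has size below or above $\eta N$) and vanish on the $H$ side; (ii) non-good pairs with $i,j\ge 1$ contribute at most $\delta k^2\cdot\gamma\rho_G L^2\le\delta\gamma\rho_G N^2$ on the $G$ side and nothing on the $H$ side; (iii) for good pairs with $|U_i|,|V_j'|\ge\delta L$, regularity of $G$ gives $|E_G(U_i,V_j')|=\rho_G(V_i,V_j)|U_i||V_j'|\pm\delta\rho_G|U_i||V_j'|$ while quasi-randomness of $H$ gives $|E_H(U_i,V_j')|=p_{ij}|U_i||V_j'|\pm\delta L^2$, and by the very definition of $p_{ij}$ the leading terms cancel in the combination $|E_G|/\gamma-|E_H|\rho_G$, leaving errors $O(\delta\rho_G|U_i||V_j'|+\delta\rho_G L^2)$ that sum to $O(\delta\rho_G N^2)$; (iv) good pairs where at least one of $|U_i|,|V_j'|$ falls below $\delta L$ contribute $O(\delta\rho_G N^2)$ on each side via upper uniformity and the bound $\sum_{i:\,|U_i|<\delta L}|U_i|\le\delta N$. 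Tracking the constants yields the claimed $12\delta\rho_G N^2$ bound, and applying it to $U=V=[N]$ gives $|\rho_H-1/\gamma|\le 12\delta$.

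The main obstacle is the careful bookkeeping of these four error sources and verifying that they all lie on the intended scale $\delta\rho_G N^2$ rather than the much larger $\delta N^2$. This is exactly where the upper-uniformity hypothesis is indispensable: it upgrades the trivial bound $|E_G(X,Y)|\le|X||Y|$ to $|E_G(X,Y)|\le\gamma\rho_G|X||Y|$ whenever $X$ or $Y$ involves $V_0$ or a non-good part, putting the excess contributions on the same scale as the genuine signal and allowing all the error terms to be absorbed into the target $12\delta\rho_G N^2$.
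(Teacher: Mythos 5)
Your proposal follows essentially the same route as the paper: apply the sparse regularity lemma, build a random dense graph $H$ whose inter-part densities are $\rho_G(V_i,V_j)/(\gamma\rho_G)$, and control the discrepancy via the same four-way case split (exceptional set, irregular pairs, small intersections, regular pairs with large intersection), with upper uniformity absorbing the exceptional contributions at scale $\delta\rho_G N^2$. The only differences are cosmetic — you put no edges on bad pairs where the paper uses density $1/\gamma$, and you make explicit the union bound needed to fix a single realization of $H$ that is quasi-random against all sub-pairs simultaneously, a point the paper treats more loosely.
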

\begin{proof}
    From the claim requirements, $\eta$ is a constant small enough that the regularity lemma for sparse graphs, \Cref{lem:regularity}, holds with $\gamma,\delta$. Therefore, any $(\eta,\gamma)$-upper uniform graph $G$ can be partitioned into 
    $V_0,V_1\ldots,V_k$ with $k\leq m$ such that $\abs{V_i}=\abs{V_j}$ for all $i,j\geq 1$, $\abs{V_0}\leq\delta N$ and all except $\delta k^2$ of the pairs $V_i,V_j$ are $(\delta)$-regular. Since $\eta \leq \frac{1}{2k}$ and $G$ is $(\eta,\gamma)$ regular, all sets $V_i,V_j$ satisfy $\rho_G(V_i,V_j)\leq\gamma\rho_G$.

    We use this partition to construct a dense graph $H$ as follows. For every pair $u,v\in [N]$, add the edge $(u,v)$ to $H$ with probability $p_{u,v}$ defined by:
    \[ p_{u,v} = \begin{cases}
        \frac{\rho_G(V_i,V_j)}{\gamma\rho_G} \quad &\text{if } u\in V_i,v\in V_j \text{ for a regular pair }V_i, V_j\\
        \frac{1}{\gamma} \quad &\text{otherwise}.
    \end{cases} \]
    From the fact that $\rho_G(V_i,V_j)\leq\gamma\rho_G$, we get that for every $u,v$, $p_{u,v}\leq 1$.

    We prove that $H$ satisfies the conditions of the claim. 
    Fix any two sets $U,V\subset [N]$.
    We start by bounding $\abs{E_G(U,V)}$,
    \[ \abs{E_G(U,V)} = \sum_{i,j}\abs{E_G(U\cap V_i, V\cap V_j)}. \]
    We bound this sum using the regularity lemma, and dividing into cases.
    \begin{enumerate}
        \item Exception set: From the regularity lemma, the size of the exception set if bounded by $\delta N$. Using the upper-uniformity condition, intersection with the exception set have:
    \[ \abs{E_G(V_0\cap V,U)}\leq \delta\gamma\rho_G N^2,  \]
    and the same holds for $E_G(V_0\cap U,V)$.
    \item Non-regular sets:  At most $\delta k^2$ of the sets $V_i,V_j$ in the partition are not regular. We bound the maximal number of edges between these sets:
    \[ \sum_{i,j \text{ non-regular}}\abs{E_G(V_i,V_j)}\leq \sum_{i,j \text{ non-regular}}\gamma\rho_G\cdot\abs{V_i}\abs{V_j}\leq \gamma\delta \rho_G N^2. \]
    \item Sets with small intersection: For the set $U$, we say that the intersection of $U$ with $V_i$ is small, if $\abs{U\cap V_i}\leq\delta\abs{V_i}$. Let $I_U\subset [k]$ be the parts that has small intersection with $U$, and $S_U = \cup_{i\in I_U}\paren{U\cap V_i}$ be their union. Then $\abs{S_U}\leq \delta N$. From the upper-uniformity condition, we have that 
    \[ \abs{E_G(S_U,V)}\leq \gamma\delta\rho_G N^2, \] and the same holds for $E_G(S_V,U)$.
    \item Regular sets with large intersection:  Let $V_i,V_j$ be a $(\delta)$-regular pair, such that $\abs{V_i\cap U}\geq\delta\abs{V_i}$ and  $\abs{V_j\cap V}\geq\delta\abs{V_j}$. Then according to the regularity lemma, \[ \abs{\rho_G(V_i\cap U,V_j\cap V) - \rho_G(V_i,V_j)}\leq \delta\rho_G .\]
    \end{enumerate}
    Combining all of the cases together, if we call $i,j\in [k]$ good if $V_i,V_j$ is regular, and $U,V$ have large intersection with $V_i,V_j$ we have that
       \begin{align*}
        \abs{{E_G(U,V)}} \leq {} & 5 \gamma\delta\rho_G N^2 + \sum_{\text{good }i,j}\abs{E_G(U\cap V_i,V\cap V_j)} \\\leq {} &  5 \gamma\delta\rho_G N^2  + \sum_{\text{good }i,j}\abs{U\cap V_i}\abs{V\cap V_j}\rho_G(V_i,V_j)\paren{1+\delta\rho_G}
    \end{align*}
   Similarly, we get a lower bound 
   \[ \abs{{E_G(U,V)}}\geq  \sum_{\text{good }i,j}\abs{U\cap V_i}\abs{V\cap V_j}\rho_G(V_i,V_j)(1- \delta\rho_G). \]
    
    We now bound $\abs{E_H(U,V)}$.
   The graph $H$ is generated such that each edge is sampled uniformly at random with probability $p_{u,v}$. For all good $i,j$, We have that if $u\in V_i,v\in V_j$, then $p_{u,v} = p_{i,j}=\frac{\rho_G(V_i,V_j)}{\rho_G\gamma}$. Since $\abs{V_i},\abs{V_j}= \Theta(N)$, we have from Chernoff bound that with high probability,
    \[ \abs{\rho_H(V_i\cap U,V_j\cap V) - p_{i,j}}\leq \delta p_{i,j} .\]
    
    There are at most $5 \delta N^2$ pairs of vertices $(u,v)$ that do not belong into a good pair of parts $V_i,V_j$.
    
    Together we get that
    \begin{align*}
        \abs{\abs{E_H(U,V)}\rho_G - \frac{\abs{E_G(U,V)}}{\gamma}} \leq {} & \sum_{\text{good }i,j}\abs{\abs{E_H(U\cap V_i,V\cap V_j)}\rho_G - \frac{\abs{E_G(U\cap V_i,V\cap V_j)}}{\gamma}}\\&+ 5\delta\rho_G N^2 + 5\delta\rho_G N^2 \\\leq {} & \sum_{\text{good }i,j}\abs{U\cap V_i}\abs{V\cap V_j}(\delta\rho_G(V_i,V_j) + \delta\rho_G) + 10\delta\rho_G N^2 \\ \leq {} &12\delta\rho_G N^2.
    \end{align*}

    The above equation also implies that $\rho_H\in [1/\gamma - 12\delta,1/\gamma+12\delta]$, by applying it with $U=V=[N]$.
\end{proof}

Next we show that a sparse graph with a dense subgraph does not have a dense indistinguishable model. This does not mean that it is not possible to learn a sparse graph with a dense subgraph, but it indicates that our approach have certain limits. 
This claim is not tight, in a sense that there are graphs do not satisfy the claims of \Cref{thm:sparse-graph}, and we do not know if they have a dense model.
\begin{claim}\label{claim:no-dense-graph}
Let $G$ be a graph, if there exists a distinguisher $D$ with vertex sets $U,V$ such that $\abs{U},\abs{V}\geq \varepsilon N$ and
\[  \frac{\rho_{G}(U,V)}{\rho_{G}} > \gamma + \frac{2\delta}{\varepsilon^2}, \]
then there is no graph $H$ with $\rho_H=1/\gamma$ that is $\delta$-indistinguishable from $G$ to $D$.
\end{claim}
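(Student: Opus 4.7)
The plan is to exploit the support-access model directly: the distinguisher $D$ accepts a random edge $(u,v)$ iff $u \in U$ and $v \in V$, so its acceptance probability on any graph $G'$ is exactly $|E_{G'}(U,V)|/|E(G')|$. The goal is to show that this ratio for $G$ exceeds the corresponding ratio for $H$ by more than $\delta$, using only the hypothesis on the density $\rho_G(U,V)$ relative to $\rho_G$ together with the density constraint $\rho_H = 1/\gamma$.

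First I would write out the two acceptance probabilities explicitly. Using the definition $|E(G')| = \rho_{G'} N^2$ (and similarly $|E_{G'}(U,V)| = \rho_{G'}(U,V)|U||V|$), one obtains
\[
\Pr[D^{B_G} = \acc] = \frac{\rho_G(U,V)}{\rho_G} \cdot \frac{|U||V|}{N^2}, \qquad \Pr[D^{B_H} = \acc] = \frac{\rho_H(U,V)}{\rho_H} \cdot \frac{|U||V|}{N^2}.
\]
Second, I would upper bound $\Pr[D^{B_H} = \acc]$ by using the trivial bound $\rho_H(U,V) \leq 1$ together with $\rho_H = 1/\gamma$, giving $\Pr[D^{B_H}=\acc] \le \gamma \cdot |U||V|/N^2$. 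Third, I would subtract and use the hypothesis $\rho_G(U,V)/\rho_G > \gamma + 2\delta/\varepsilon^2$ to get
\[
\Pr[D^{B_G} = \acc] - \Pr[D^{B_H} = \acc] > \frac{2\delta}{\varepsilon^2}\cdot \frac{|U||V|}{N^2}.
\]
Finally, I would apply the lower bound $|U|,|V| \ge \varepsilon N$, which yields $|U||V|/N^2 \ge \varepsilon^2$, and the $\varepsilon^2$ cancels to leave a difference strictly greater than $2\delta > \delta$, contradicting $\delta$-indistinguishability.

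There is no real obstacle in this proof: the claim is essentially a direct algebraic consequence of the definitions of the support-access distribution and the edge-density ratios, together with the trivial edge-count bound $\rho_H(U,V) \leq 1$ that is forced by $H$ being a graph. The only thing to be a little careful about is matching conventions (ordered pairs vs unordered) so that $|E(G')| = \rho_{G'} N^2$ and $|E_{G'}(U,V)| = \rho_{G'}(U,V)|U||V|$ are used consistently; this is handled by a uniform choice of convention at the start. The factor $2\delta/\varepsilon^2$ in the hypothesis is exactly calibrated so that after multiplying by $|U||V|/N^2 \ge \varepsilon^2$ one recovers the $2\delta$ needed to exceed $\delta$, which also explains why the hypothesis on $|U|,|V|$ being at least $\varepsilon N$ cannot be dispensed with.
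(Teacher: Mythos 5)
Your proof is correct and follows essentially the same route as the paper's: both compute the distinguisher's acceptance probability as $\frac{\rho(U,V)}{\rho}\cdot\frac{|U||V|}{N^2}$, bound the $H$-side via $\rho_H(U,V)\le 1$ and $\rho_H = 1/\gamma$, and use $|U||V|\ge \varepsilon^2 N^2$ to convert the density gap into a gap exceeding $\delta$. The only cosmetic difference is that you argue directly while the paper argues by contradiction.
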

\begin{proof}
If $H$ and $G$ are indistinguishable, it must be that 
\[ \abs{ \Pr_{(u,v)\sim G}[u\in U,v\in V]-  \Pr_{(u,v)\sim H}[u\in U,v\in V]}\leq\delta,\]
This is the same as
\[ \abs{\frac{\rho_{G}(U,V)}{\rho_{G}} - \frac{\rho_{H}(U,V)}{\rho_H}}\leq\delta \frac{N^2}{\abs{U}\abs{V}}.  \]
This means that 
\[ \frac{\rho_{H}(U,V)}{\rho_H}\geq \frac{\rho_{G}(U,V) }{\rho_{G}} - \frac{\delta}{\varepsilon^2} > \gamma, \]
which is a contradiction as $\rho_{H}(U,V)\leq 1$ and $\rho_H=1/\gamma$.
\end{proof}

\subsection{Learning Sparse Uniform Out-degree Graphs}\label{sec:const-out}
In this section we are interested in learning a sparse directed graph $G^*=(\{0,1\}^n,E^*)$, such that each $x\in G^*$ has a uniform constant out degree $d$. We assume $B^*$ is the object representing the graph $G^*$, and upon querying $B^*(\bot)$ outputs a random edge according to the following distribution: choose a random $x\in [N]$, then a random neighbor of $x$. 

If we denote the $d$ outgoing edges from each $x$ as $d$ functions, $f_1,\ldots,f_d:\{0,1\}^n\rightarrow\{0,1\}^n$, then learning $B^*$ is rather similar to learning these $d$ functions. Using this notations, a random edge from $B^*(\bot)$ corresponds to a pair $(x,f_i(x))$ for a random $x\in\{0,1\}^n,i\in[d]$.  If we would also assumes that the sample contains the index $i$, i.e. that $B^*(\bot)$ returns a random triplet $(x,i,f_i(x))$, then this would be exactly equivalent to learning the functions $f_1,\ldots,f_d$, and the result of \Cref{sec:bit-string-func} would translate automatically. Without this additional assumption we need to do a bit more work in order to translate the result of \Cref{sec:bit-string-func} into the graph object.

We formally describe the object, the class of distinguishers that we can work against and the auditor.

The object $B^*$ is a graph-induced support access object to a directed graph $G = (\set{0,1}^n,E)$ with out-degree $d$ for a constant $d$. The edge distribution is choosing a uniform $x\in \set{0,1}^n$ and then a random neighbor $y$ of $x$. We assume an order over the $d$ neighbors of each vertex, and denote by $f_1,\ldots,f_d:\set{0,1}^n\rightarrow\set{0,1}^n$ to be $d$ functions representing the neighbors of the vertices in the graph. 
\begin{description}
\item[Distinguishers:] Let $\cD$, such that each distinguisher $D\in \cD$ has an set $S_D\subset \set{0,1}^n$ and a coordinate $j\in[n]$. The distinguisher $D$ accept an edge $(u,v)$ if $u\in S_D$ and $v_j=1$.
\item[Auditor:] Let $\cD$ be defined as above. We say that an algorithm $\Lambda^{B^*,p}$ is an $(\varepsilon,\gamma,\delta)$ auditor for the collection of sets $\cS$ if it has the following properties. Given access to a graph-induced support access-object $B^*$ and query access to a predictor $p:\set{0,1}^n\rightarrow[0,1]$. If there exists $S\in\cS$ and $j\in[j]$ such that 
    \[
    b\paren{\Pr_{x,i}[x\in S,f_i(x)_j=1]-\E_{x}[ p_j(x)\cdot\one(x\in S)] } > \varepsilon. \]
    Then the auditor returns a set $S',j$ such that with probability $1-\delta$,
    \[
    b\paren{\Pr_{x,i}[x\in S',f_i(x)_j=1]-\E_{x}[ p_j(x)\cdot\one(x\in S')] }  > \gamma. \]
    The probability is uniform among all $x\in [N],i\in[d]$.
\end{description}

\begin{theorem}\label{thm:out-d}
Let $\cB$ be a collection of support access graph objects induced by $d$ functions $f_1,\ldots,\allowbreak f_d:\set{0,1}^n\rightarrow \set{0,1}^n$. Let $\cD$ be a collection of distinguishers as described above. 

Let $\varepsilon,\gamma,\delta',\delta''$ be parameters such that $\delta'\leq c\delta\gamma^2 n^{-1}$ for a sufficiently small constant $c$.
Let $\Lambda$ be an $(\varepsilon,\gamma,\delta')$ auditor $\Lambda$ for $\cD$. Then there exists a $(2\varepsilon,\delta)$-learner $L$ to $\cB$ with respect to the distinguisher class  $\cD$, and the learner $L$ runs in time $\poly(\gamma^{-1}\log(\delta^{-1})\alpha^{-1},W_1,W_2)$, where $W_1$ is the running time of the auditor $\Lambda$ and $W_2$ the circuit complexity of its output. The implementation $T$ that the learner outputs runs in time $\poly(\gamma^{-1}\log(\delta^{-1})\alpha^{-1},W_2)$
\end{theorem}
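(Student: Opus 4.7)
\textbf{Proof proposal for \Cref{thm:out-d}.} The plan is to reduce this to essentially the same boosting argument used in \Cref{thm:func}, with a slight reinterpretation of the target predictors. For each coordinate $j\in[n]$, define the fractional target
\[
\bar f_j(x) := \frac{1}{d}\sum_{i=1}^d \one(f_i(x)_j = 1) \in [0,1].
\]
The key observation is that for any set $S\subseteq \{0,1\}^n$ and coordinate $j$,
\[
\Pr_{x,i}[x\in S,\ f_i(x)_j = 1] = \E_x[\one(x\in S)\bar f_j(x)],
\]
so the auditor's condition on $B^*$ is exactly the statement that some pair $(S,j)$ witnesses multi-accuracy failure between $\bar f_j$ and a current predictor $p_j$. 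Thus the problem becomes learning predictors $p_1,\dots,p_n:\{0,1\}^n\to[0,1]$ that are multi-accurate for the class $\cS$ with respect to the (fractional) ground truths $\bar f_1,\dots,\bar f_n$.

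First I would run a coordinate-wise boosting loop nearly identical to the learner in \Cref{thm:func}: maintain a list $F$ of triples $(S,j,w)$ describing each $p_j$ via $p_j(x) = \lcap(w\cdot \one(x\in S):(S,j,w)\in F)$, initialized trivially; at each step query $\Lambda^{B^*,p_1,\dots,p_n}$, and when it returns $(S,j,b)$ add $(S,j,b\gamma)$ to $F$. For the progress argument I would use the potential
\[
\varphi(B^*,p_1,\dots,p_n) = \sum_{j\in[n]}\sum_{x\in\{0,1\}^n}\bigl(\bar f_j(x)-p_j(x)\bigr)^2,
\]
which starts bounded by $n\cdot 2^n$ and is nonnegative. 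The same calculation as in \Cref{thm:func} shows that each successful auditor call decreases $\varphi$ by $\Omega(\gamma^2 2^n)$: expanding the squared difference after the update $(S,j,b\gamma)$, the cross term contributes $2b\gamma\sum_{x\in S}(\bar f_j(x)-p_j(x))\ge 2\gamma^2 2^n$ by the auditor guarantee, while the $\gamma^2|S|$ self-term is at most $\gamma^2 2^n$, and capping only helps because $\bar f_j(x)\in[0,1]$. This bounds the number of iterations by $O(n/\gamma^2)$, and by a union bound with $\delta'\le c\delta\gamma^2 n^{-1}$ every auditor call succeeds with total failure probability $\delta$.

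For the implementation, given the final list $F$ and access to a random oracle $R$, on a query $\bot$ the algorithm $T$ samples $x\in\{0,1\}^n$ uniformly, then uses $R(x)$ to generate $y\in\{0,1\}^n$ by setting each coordinate $y_j=1$ independently with probability $p_j(x)$, and outputs the edge $(x,y)$. Because $R$ is consistent on repeated queries to the same $x$, each choice of $R$ determines a fixed function $\{0,1\}^n\to\{0,1\}^n$ and hence a well-defined graph object $B_R$; thus $T$ is a valid random-oracle implementation of a distribution $M$ over graph-induced support-access objects. (Note that $M$ need not be truthful to $\cB$ in the sense of preserving out-degree $d$; the theorem only asserts indistinguishability.) Under $B\sim M$, the probability that a distinguisher $D\in\cD$ associated with $(S_D,j_D)$ accepts is exactly $\E_x[\one(x\in S_D)p_{j_D}(x)]$, and termination of the auditor yields $|\E_x[\one(x\in S_D)\bar f_{j_D}(x)]-\E_x[\one(x\in S_D)p_{j_D}(x)]|\le \varepsilon$, giving the required $\varepsilon$-indistinguishability (and $2\varepsilon$ after rounding losses in the auditor, matching the statement).

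The main obstacle I anticipate is purely notational rather than conceptual: one must carefully verify that the auditor's guarantees, which are phrased over the edge distribution $(x,i)\mapsto (x,f_i(x))$, translate cleanly to the multi-accuracy inequality for the fractional targets $\bar f_j$, and that rejection-sampling/budget issues present in the support-access setting of \Cref{thm:rand-one} do \emph{not} arise here because the graph object is accessed via the uniform-$x$-then-uniform-neighbor distribution, which is a product distribution on the first coordinate. Once that correspondence is set up, the potential argument and the random-oracle implementation go through essentially verbatim from \Cref{sec:bit-string-func}.
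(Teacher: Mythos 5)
Your proposal is correct and follows essentially the same route as the paper: the paper also reduces to the coordinate-wise boosting of \Cref{sec:bit-string-func} and uses exactly your potential $\sum_{j}\sum_{x}(\bar f_j(x)-p_j(x))^2$ with $\bar f_j(x)=\E_{i\in[d]}[\one(f_i(x)_j=1)]$, obtaining the same $\Omega(\gamma^2 2^n)$ decrease per auditor call. The one place you diverge is the implementation: the paper's $T$ generates all $d$ out-neighbors of $x$ (re-sampling on collisions, which are negligible) and returns a uniformly random one, so that the generated graphs approximately retain out-degree $d$, whereas your $T$ emits a single neighbor per vertex; since the theorem statement only asserts indistinguishability, your simpler implementation still suffices for the claim as written, but it forgoes the uniform-out-degree structure the section is implicitly after.
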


We prove the theorem in a similar way to the proof of \Cref{thm:func}. The learner $L$ and the implementation $T$ are the same as in the function, only we apply the same predictor $d$ times to get $d$ outgoing edges out of every vertex in the graph, and verify that the endpoints are distinct.
\begin{description}
\item[Implementation:] Given a list $F$ of tuples $(S,j,w)$, the implementation algorithm $T$ with oracle access to a random oracle $R$ generates a model $\hat B$. Given $x\in\set{0,1}^n$ the algorithm $T$ outputs $v$ by:
\begin{enumerate}
    \item For every $j\in[n]$ calculate $p_j\in [0,1]$ by: 
    \[p_j(x) = \lcap\paren{w\cdot\one(x\in S):(S,j,w)\in F}. \]
    \item For $t=1$ to $d$, choose a vertex $v^{(t)}$ by setting $v^{(t)}_j=1$ with probability $p_j(x)$, using the random oracle $R(v,t)$.\label{item:out-edge}
    \item If all of the vertices $v^{(t)}$ are distinct, choose a uniform $i\in[d]$ and return $v^{(i)}$. If there is a collision, go back to \Cref{item:out-edge} and repeat with new randomness from the random oracle.
    \end{enumerate}
We remark that since $d$ is constant and the domain $\{0,1\}^n$ is exponentially large, collisions are very unlikely. When analyzing the distribution of the model over large sets of vertices we can ignore this event.

\item[Learning Algorithm:] The learning algorithm $L$.
\begin{enumerate}
    \item Initialization: set $F = \set{(\set{0,1}^n,j,1/2)|j\in[n]}$.
    \item Query the auditor with the current model and $B^*$. If the auditor returns $(S,j,b)$, add $(S,j,b\cdot\gamma)$ to $F$ and repeat \label{item:graph-prod-update}.
\end{enumerate}
\end{description}
Notice that the learner is identical to the classic boosting algorithm, where we apply it individually on each of the coordinated of $\sum_{i\in[d]}f_i(x)$. This is the reason that we can only be indistinguishable with respect to a set of distinguishers $D$ that only check if a single output coordinate in $f(x)$ equals $1$. 

\begin{proof}[Proof of \Cref{thm:out-d}]
The proof is the same as the proof of \Cref{thm:func}, except the averaging over $i\in[d]$. The probability of collision on \Cref{item:out-edge} is negligible, so we can safely assume independence between the $d$ outgoing edges.

If the auditor does not output any set then the model is $\varepsilon$-indistinguishable for all $D\in\cD$.

We prove the correctness using a potential function, which sums over the error of each coordinate $j\in[n]$. 
Let \[\varphi(B^*,p_1,\ldots,p_n) =\sum_{j\in[n]}\sum_{x\in\set{0,1}^n}\paren{\sum_{i\in[d]}\frac{1}{d}\one(f_i(x)_j=1) - p_j(x)}^2. \]
Every $x$ appears in the sum $n\cdot d$ times, therefor the initial value of $\varphi$ is bounded by $n\cdot 2^n$.

Let $p_1,\ldots,p_n$ be the predictors after before the execution of \Cref{item:graph-prod-update} and $p'_1,\ldots,p_n'$ afterwards, then we claim that with probability $1-\delta'$ we have that
$\varphi(B^*,p_1,\ldots,p_n) - \varphi(B^*,p'_1,\ldots,p_n')\geq\gamma^2 2^n$.

The auditor returned $(S,i,b)$, then 
\begin{align}
    \abs{ \frac{1}{2^n}\sum_{x\in S}\E_{i\in[d]}[\one(f(x)_j=1)] - p_j(x)}\geq \gamma.\label{eq:graph-prod-adv}
\end{align}
Assume without loss of generality that $b=1$, then with probability $1-\delta$,
\begin{align*}
    & \varphi(B^*,p_1,\ldots,p_n) - \varphi(B^*,p'_1,\ldots,p_n') \\ = {} & \sum_{j\in[n]}\sum_{x\in S}\left(\paren{\E_{i\in[d]}[\one(f(x)_j=1)] - p_j(x)}^2 - \paren{\E_{i\in[d]}[\one(f(x)_j=1)] - p'_j(x)}^2\right)  \\
     = {} & \sum_{x \in S,j\in[n]}\left( -2 \E_{i\in[d]}[\one(f(x)_j=1)](p_j(x)- p'_j(x))+  p^2_j(x) - {p'_j(x)}^2\right).
\end{align*}
If there is not capping, we have
\begin{align*}
    \varphi(B^*,p_1,\ldots,p_n) - \varphi(B^*,p'_1,\ldots,p_n') = {} & \sum_{x\in S}\left(2\gamma\cdot\E_{i\in[d]}[\one(f(x)_j=1)] - 2\gamma p_i(x) -\gamma^2\right) \\
    \geq {} &
    2\gamma^2 2^n -\gamma^22^n\\
    = {} & \gamma^2 2^n,
\end{align*}
where we use \cref{eq:prod-adv} in the last inequality.

The values of $\E_{i\in[d]}[\one(f(x)_j=1)]$ are in $[0,1]$, therefore capping the value of $p_i$ to $[0,1]$ can only lower the value of the potential function.
\end{proof}

\subsection{Learning Uniform Degree Graphs}\label{sec:uniform}
Suppose we are interested in generating a truthful model for a uniform degree $d$ graph. That is, we want that all graph in our model has a uniform degree $d$. In \Cref{sec:dense}, we discussed applying the results from learning functions to learning graphs, and  have a truthful model for a directed graph with a uniform out-degree, assuming the graph is dense. In the previous section, we also saw how to have a sparse directed graph with uniform out-degree.

For undirected graphs, in \cite{goldreich2010} the authors showed an algorithm for generating a uniform degree graph that is indistinguishable from a random uniform degree graph. Their algorithm applies a random permutation on a large girth uniform degree expander graph.  

We are interested in generating a model $M$ that is indistinguishable from a specific graph-induced object $B^*$. Therefore, it is not possible to use the same approach. Instead, we are restricting the set of distinguishers to those that can be described by a partition over the set of vertices, and show how to create a uniform degree graph in this case. Our construction is simple - we sample and approximate the edge density between every two parts in the partition induced by the distinguishers, and the model is a random permutation over a deterministic graph with the correct edge density.

\paragraph{Distinguishers:}  Then the set of distinguishers $\cD$ contains distinguishers $D$ with sets $(U_D,V_D)$ such that $U,V\in\cU$. Every distinguisher $D$ accepts an edge $(u,v)$ if $u\in U,v\in V$. Let $\cU = \set{U\subset [N] | \exists D \st U=U_D \text{ or } U=V_D}$. We assume that $\cU$ is a partition with $t$ parts, and that $\abs {U_J}$ is linear in $N$.

\begin{lemma}\label{lem:uni-deg}
    Let $\cB$ be a collection of graph-induced support-access objects, such that for all $B^*\in\cB$, the graph $G_{B^*}$ has a uniform degree $d$. Let $\cD$ be the distinguishers class defined above.
    Then for every constant $\varepsilon$ there exists an $(\varepsilon,\delta)-$ learning algorithm $L$ for the class $\cB$ with respect to $\cD$. The algorithm runs in time $\poly(1/\varepsilon,\log(1/\delta))$.
\end{lemma}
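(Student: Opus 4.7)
My plan is to estimate the inter-part edge densities of $G^*$ from a small number of edge samples drawn from $B^*(\bot)$, then build a single deterministic uniform-degree-$d$ graph $H$ on $[N]$ whose inter-part edge counts match these estimates and that admits an efficient neighborhood oracle, and finally randomize by applying an independent pseudorandom permutation within each part $U_i\in\cU$ (implemented with a random oracle as in \Cref{lm:random-oracle}).

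\textbf{Step 1 (estimate densities).} For each ordered pair $(i,j)\in[t]^2$, let $\alpha_{i,j}:=\Pr_{(u,v)\sim B^*(\bot)}[u\in U_i,v\in U_j]$. I would draw $m=O(t^4\varepsilon^{-2}\log(t^2/\delta))$ i.i.d.\ edges from $B^*(\bot)$, let $\hat\alpha_{i,j}$ be the empirical frequencies, and invoke Hoeffding plus a union bound over the $t^2$ pairs to obtain $|\hat\alpha_{i,j}-\alpha_{i,j}|\le \varepsilon/(4t^2)$ for every $(i,j)$ with probability at least $1-\delta$. Because $|\cU|=t$ is constant and each $|U_i|=\Theta(N)$, this can be done in the claimed $\poly(1/\varepsilon,\log(1/\delta))$ time.

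\textbf{Step 2 (construct $H$).} Since $G^*$ has uniform degree $d$, $|E^*|=Nd/2$ and the true counts $e^*_{i,j}:=\alpha_{i,j}|E^*|$ satisfy the degree-consistency constraint $\sum_{j\ne i}e^*_{i,j}+2e^*_{i,i}=d|U_i|$ for every $i$. I would round $\hat\alpha_{i,j}|E^*|$ to integer targets $\hat e_{i,j}$ satisfying this same constraint exactly, absorbing the $O(1)$ rounding residues into the diagonal terms $\hat e_{i,i}$ or into a single designated off-diagonal pair per row; the resulting $|\hat e_{i,j}-\hat\alpha_{i,j}|E^*||=O(t)$ is negligible compared to $|E^*|$. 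The graph $H$ is then built as the edge-disjoint union, over unordered pairs $i\le j$, of a canonical bipartite biregular-ish graph on $U_i\times U_j$ with $\hat e_{i,j}$ edges (e.g., the $\ell$-th vertex of $U_i$ is connected to the $d_{i,j}=\hat e_{i,j}/|U_i|$ consecutive vertices of $U_j$ starting at index $\ell\cdot d_{i,j}\bmod|U_j|$, with a handful of $\pm 1$ corrections for non-divisibility), together with a canonical circulant on $U_i$ realizing the regular intra-part degree $2\hat e_{i,i}/|U_i|$. This yields uniform total degree exactly $d$ and a neighborhood oracle computable in $\polylog(N)$ time per vertex.

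\textbf{Step 3 (randomize and analyze).} The implementation $T$ uses the random oracle to define pseudorandom permutations $\pi_i$ on each $U_i$, and on query samples an edge of the relabeled graph $H_\pi$ by picking a uniformly random $u\in[N]$, locating its part $U_i$, computing $u'=\pi_i^{-1}(u)$, drawing a uniform neighbor $v'\in U_j$ of $u'$ in $H$ via the oracle, and returning $(u,\pi_j(v'))$. Every graph in the support of $M$ is isomorphic to $H$ and so has uniform degree $d$, giving truthfulness. For any $D\in\cD$ with $(U_D,V_D)=(U_i,U_j)$, the acceptance probability under $M$ equals $\hat e_{i,j}/(Nd/2)=\hat\alpha_{i,j}\pm O(t/(Nd))$, which differs from $\alpha_{i,j}=\Pr[D^{B^*}=\acc]$ by at most $\varepsilon$. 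The main obstacle is the bookkeeping in Step 2: simultaneously preserving the uniform-degree consistency and the pairwise density approximation under integer rounding; once the targets $\hat e_{i,j}$ are fixed, the graph $H$ itself is a direct Goldreich--Goldwasser--Nussboim-style explicit construction.
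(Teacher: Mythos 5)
Your proposal is correct at the conceptual level, and it takes a genuinely different route from the paper's. You reduce to an explicit GGN-style construction: build a single deterministic uniform-degree-$d$ graph $H$ realizing the (rounded) inter-part edge counts with a fast neighbor oracle, and then randomize by independent pseudorandom permutations within each part. The paper instead never constructs a deterministic base graph; it works directly with a ``configuration model.'' Each vertex in $U_j$ has $d$ stubs, the stubs are listed in a fixed order, a vector $F_{U_j}=(1^{k(\{j,1\})},\dots,t^{k(\{j,t\})})$ deterministically assigns a target part to each stub (using exactly $d|U_j|=\sum_i k(\{i,j\})$ stubs), and the random oracle supplies an invertible random matching between the $k(\{i,j\})$ stubs on the $U_j$ side and the $k(\{i,j\})$ stubs on the $U_i$ side. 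Degree $d$ is then automatic for every vertex, with no need for the cross-part per-vertex degree profile to be uniform.

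This is exactly where the two approaches part ways, and it is why the ``main obstacle'' you flag in Step~2 is real but your version of it is harder than the paper's. You want $H$ to be a union of near-biregular bipartite pieces and circulants; if $\hat e_{i,j}/|U_i|$ is non-integral, some vertices need ceiling and others floor, and these $\pm 1$ adjustments must be coordinated \emph{across all $t$ choices of $j$ simultaneously} so that every vertex in $U_i$ still has total degree exactly $d$. That is a genuine (if solvable) bipartite degree-sequence realization problem, and it must additionally yield a $\polylog(N)$-time neighbor oracle, which requires the realization itself to be explicit. The paper avoids all of this: once the integer totals $k(\{i,j\})$ satisfy $\sum_i k(\{i,j\})=d|U_j|$ for each $j$ (which its learning algorithm enforces by a small correction loop analogous to your residue-absorbing step), the stub assignment $F_{U_j}$ plus the random matchings produce a uniform-degree-$d$ graph with no further combinatorics. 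The tradeoff is that the configuration model can produce multi-edges/self-loops, which the paper handles with an ad hoc ``shift the matching'' rule, whereas your $H$ is a simple graph by construction and your permutations preserve simplicity. So: your route is legitimate and arguably yields cleaner truthfulness, but you should either give the explicit degree-sequence realization of $H$ (and its oracle) or adopt the paper's stub-matching device, which makes that issue vanish.

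One small point worth double-checking in your Step~3: when you sample a uniform $u\in[N]$, a uniform stub of $u'=\pi_i^{-1}(u)$ in $H$, and return $(u,\pi_j(v'))$, you are sampling a uniform \emph{half-edge}, not a uniform edge. Since $H$ is $d$-regular these distributions coincide, so you are fine, but it is worth saying so explicitly because the support-access object is defined as a uniform edge and the paper's implementation goes through the same half-edge detour.
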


We describe an implementation and a learning algorithm:
\begin{description}
\item[Representation:] The model $M$ is represented by the partition $\cU$, and by a function $k:\binom{t}{2}\rightarrow \mathbb{N}$. The function $k$ represents the number of edges between every two parts. We assume that it satisfies for all $j\in[t]$ we have $\sum_{i\in [t]}k(\set{i,j}) = d\cdot\abs{U_j}$.
\item[Implementation:] The implementation $T$ with a random oracle $R$ on partition $\cU$ and function $k$. It samples a random $u\in[N]$ it outputs an edge $(u,v)$ by:
\begin{enumerate}
\item Let $U_j\in\cU$ be the part in the partition that $u$ belongs to. We assume there is a deterministic order on the vertices in $U_j$, $u_1,\ldots,u_{\abs{U_J}}$.  
\item Choose a random $\ell\in[d]$. 
\item Let $E_{U_j}$ be a list of all the outgoing edges from $U_j$, by following order:
\[ E_{U_j} = (u^1_1,\ldots,u^1_{\abs{U_j}},u^2_1,\ldots,u_{\abs{U_j}}^d) \]
Let $F_{U_j} \in [t]^{\abs{E_{U_j}}}$ be the vector:
$F_{U_j} = (1^{k(\set{j,1})},2^{k(\set{j,2})},\ldots,t^{k(\set{j,t})})$. That is, $1$ appears exactly $k(j,1)$ times in $F_{U_j}$.
\item Calculate $i = F_{U_j}(u^{\ell})$. 
\item Use $R$ to apply a random matching between $\set{(u',l')\in U_j| F_{U_j}((u')^{l'})=i }$ to the set $\set{(u',l')\in U_i| F_{U_i}((u')^{l'})=j }$. Note that both sets have size $k(\set{i,j})$.
For $i=j$, pick an arbitrary matching that does not map any element to itself. If any vertex $u'$ appears in the set more than once, pick a matching from $U_j$ to the matching set in $U_i$, then for the second edge of $u'$ pick a shift of the matching (and vice verse).
\item Let $v\in U_i$ be the vertex matched to $(u,\ell)$ in the above random matching, output $v$.
\end{enumerate}
We remark that the random permutation is picked to be inevitable, such that if $v$ is picked, then the corresponding neighbor would be $u$.
\item[Learning algorithm:] We describe an algorithm for learning the function $k$.
\begin{enumerate}
    \item Sample edges from $B^*$, let $S=(u_1,v_1),\ldots,(u_m,v_m)$ be the sample. For every $i,j\in [t]$ set $k(\set{i,j})$ by for $i\neq j$
    \[k(\set{i,j}) = \frac{\abs{\set{(u,v)\in S|u\in U_i,v\in U_j}}}{\abs{S}}d N. \]
    and for $i=j$ twice this value.
    \item For every $i\in[t]$, calculate  $e_i  = \sum_{j\in[t]}k(\set{i,j})-d\abs{U_i}$.
    \item If there exists $i\in[t]$ such that $\abs{e_i} \geq d\varepsilon/2\abs{U_i}$, quit the algorithm and output an arbitrary $k$ such that $e_i=0$ for all $i$.
    \item While there exists $i$ with $e_i\neq 0$: Pick $i$ with $e_i>0$ and $i
    '$ with $e_{i'}<0$. Pick a random $j\in[t]$, such that it is possible to decrease $k(\set{i,j})$ by $1$ and increase $k(\set{{i'},j})$ by $1$, and do so.
    \label{item:fix-rest}
\end{enumerate}
\end{description}
\begin{remark}
The initial value $k(\set{i,j})$ on the first item are created such that  $\sum_{i,j\in[t]}k(i,j)=2dN$, and therefore we have $\sum_{i\in [t]}e_i=0$. This means that in \Cref{item:fix-rest}, if there is $i$ with $e_i>0$ then there must also be $i'$ with $e_{i'}<0$. Therefore, the correction algorithm always ends.
\end{remark}

The implementation $T$ outputs exactly $d$ neighbors for each vertex $u\in U$. We prove that the resulting model is indistinguishabile for every $D\in\cD$.

\begin{proof}[Proof of \Cref{lem:uni-deg}]
We start by proving that with high probability over the samples of the learning algorithm, we have $\abs{e_i}<\frac{\varepsilon}{2}\abs{U_i}$ for all $i\in [N]$.

Fix some $i\in [t]$. The specification graph $B^*$ has uniform degree $d$, which implies that
\[ \sum_{j\in [t]}\abs{\set{(u,v)\in B|u\in U_i,v\in U_j}}=d\abs{U_i}. \]

For every edge $(u,v)$ sampled, let $I_{(u,v)}$ be the indicator random variable that this edge contributes to $k(\set{i,j})$ for some $j\in[t]$. Then we know that the expected value of $\sum_{(u,v)\in S}I_{(u,v)} = \frac{\abs{U_i}}{N}\cdot \abs{S}$.
Therefore, by chernoff bound, with probability $\exp(\frac{\abs{U_i}}{N}\cdot \abs{S}\varepsilon^2)$, all approximations are within $\varepsilon/2$ form the expected value.

The corrections on the value of $k$ on \Cref{item:fix-rest} are negligible, as they correct at most $t^2 d$ edges per group $U_j$ of size linear in $N$. 

Let $D$ be a distinguisher with sets $U_i,U_j$ in the partition. Then by the model, the probability that an edge is sampled with $U_i,U_j$ is  $k(\set{i,j})/(dN)$. Therefore, if $k_{i,j}$ are correct up to an $\varepsilon/2 Nd$ factor, the model is indistinguishable for $D$.  
\end{proof}

\section{Impossibilities}
\label{sec:impossibility}

A main difference in our work from \cite{goldreich2010} is in the target distribution/object we aim to be indistinguishable from. In \cite{goldreich2010}, the target distribution is fixed and uniform over many objects, whereas in our setup the target is a single object which is initially unknown, and a learner is needed to access the target object to make it possible to create an indistinguishable model.
This difference makes our setup challenging, and below we show example tasks that are impossible to achieve in our setup because of this difference.

\paragraph{Fooling Distinguishers with Entry-Access is Hard.}
In \cite{goldreich2010}, the distinguishers can query for specific entries of an object. Such distinguishers can be impossible to fool in our setup. For example, suppose the target object $B^*$ is the \emph{entry-access} object induced by a function $f^*:X\to\{0,1\}$ (\Cref{def:entry-function}), and suppose our learner aims to output a model $M$ of entry-access objects $B$ induced by functions $f:X\to\{0,1\}$. For every $x\in X$, suppose there is a distinguisher that queries for the value of $f(x)$ and outputs $\acc$ if and only if $f(x) = 1$. To fool these distinguishers, we have to learn the target function $f^*$ exactly, which is clearly impossible if the domain $X$ has exponential size and the learner can only make polynomially many queries.
\begin{theorem}
\label{thm:entry-hard}
Let $X$ be a non-empty finite set. Let $\cB$ be the class of entry-access objects induced by all functions $f:X\to \{0,1\}$. Let $\cD$ be the class of distinguishers $D_{x}$ for every $x\in X$ where given an object $B$, the distinguisher $D_{x}$ outputs $\acc$ if and only if the answer $a\sim B(x)$ is equal to $1$. Let $L$ be an $(\varepsilon,\delta)$-learner for the class $\cB$ w.r.t.\ $\cD$ for $\varepsilon,\delta < 1/2$. Then $L$ needs to query every input $x\in X$ in the worst case.
\end{theorem}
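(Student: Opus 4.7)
The plan is to establish the contrapositive via a coupling argument: fix any $x_0 \in X$ and show that a learner that queries $x_0$ with too small probability on two carefully chosen targets cannot satisfy $\varepsilon$-indistinguishability with respect to $D_{x_0}$. Let $f_0, f_1 : X \to \{0,1\}$ be any two functions agreeing on $X \setminus \{x_0\}$ and differing at $x_0$, with $f_b(x_0) = b$, and let $B^*_b$ denote the entry-access object induced by $f_b$. The single distinguisher $D_{x_0}$ will serve as the only ``witness'' we need to work with throughout.

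First I would argue that fixing the learner's internal randomness $r$ couples the two executions. Since entry-access returns the deterministic value $f^*(q)$ on query $q$, the sequences of queries and answers that $L$ produces on $B^*_0$ and $B^*_1$ are identical up through the first query to $x_0$, and in particular the event $E = \{r : L \text{ never queries } x_0\}$ has the same probability under both targets. For any $r \in E$ the output implementation $T$ and therefore the induced model $M$ are literally identical in the two executions. The next step unwinds the indistinguishability requirement for $D_{x_0}$: for an entry-access object $B$ induced by $f$ we have $\Pr[D_{x_0}^B = \acc] = f(x_0)$ by \Cref{def:entry-function}, so $\E_{B\sim M}[\Pr[D_{x_0}^B = \acc]] = \Pr_{f\sim M}[f(x_0) = 1] =: q(M)$. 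Indistinguishability against $f_b$ then demands $|b - q(M)| \le \varepsilon$ with probability at least $1 - \delta$ over $r$.

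Suppose for contradiction that $\Pr[E] > 2\delta$. A union bound guarantees some $r \in E$ on which $L$ succeeds in the indistinguishability sense under both $f_0$ and $f_1$ simultaneously; but on $E$ these runs produce the same $M$, so the same $q(M)$ would satisfy both $q(M) \le \varepsilon$ and $q(M) \ge 1 - \varepsilon$, contradicting $\varepsilon < 1/2$. Hence $\Pr[E] \le 2\delta < 1$, so on either target $L$ queries $x_0$ with positive probability; since any input $f^*$ arises as $f_b$ in such a pair for any coordinate $x_0$, this shows that the query set of $L$, worst-case over inputs, must cover every $x \in X$. The main obstacle to handle carefully is the adaptivity of $L$: the coupling works only because $f_0$ and $f_1$ disagree at the single coordinate $x_0$, so an inductive argument along the query sequence is needed to conclude that as long as $x_0$ has not been queried, the two transcripts (and the eventual output $T$) remain identical.
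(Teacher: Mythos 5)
Your coupling approach is genuinely different from the paper's. The paper draws $f^*$ uniformly at random from all $2^{|X|}$ functions and argues that, conditional on the learner's transcript and under the standing assumption ``every execution leaves some input unqueried,'' the value of $f^*$ at the first unqueried input is still a fresh fair coin; this makes the model fail on $D_x$ with conditional probability at least $1/2$, yielding failure with unconditional probability $\ge 1/2 > \delta$ on some fixed target. Your proof instead fixes one coordinate $x_0$, pairs two adjacent targets $f_0, f_1$, and uses a clean deterministic coupling of transcripts to force $\Pr[E] \le 2\delta$ where $E$ is the event that $x_0$ is never queried. Both are legitimate lower-bound techniques, and the middle of your argument (transcripts identical on $E$, so $q(M)$ would have to be simultaneously $\le \varepsilon$ and $\ge 1-\varepsilon$) is correct.

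The gap is in the final sentence. What you actually establish is: for \emph{each} $x_0$ (and, via your observation that any $f^*$ can be placed in such a pair, on \emph{any} target $f^*$), $\Pr_r[\text{$L$ queries $x_0$}] \ge 1 - 2\delta$. That gives $\E_r[\#\text{unqueried}] \le 2\delta |X|$ and hence an $\Omega(|X|)$ lower bound on the \emph{expected} number of queries, but it does not give a single execution $(f^*, r)$ that queries all of $X$, which is what the theorem (as the paper proves it, by negating ``whenever $L$ outputs a model there is always an unqueried $x$'') asserts. Passing from per-input probabilities to a single covering execution would require a union bound over $x\in X$, i.e., $2\delta |X| < 1$ — but $X$ is meant to be exponentially large while $\delta$ is only assumed $< 1/2$, so the union bound fails. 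Consider a learner that, on every run, picks a uniformly random $x^- \in X$ and queries $X \setminus \{x^-\}$: each individual $x_0$ is queried with probability $1 - 1/|X| \ge 1 - 2\delta$, satisfying your bound, yet no execution queries all of $X$. The paper's random-target argument sidesteps this entirely by choosing the unqueried input $x$ \emph{after} observing the run, so only one distinguisher is ever needed per execution and no union bound over $X$ arises. To close the gap along your route, you would want to mimic this: assume every execution leaves some $x_0$ unqueried, and then, for the unqueried $x_0 = x_0(r)$, compare $f^*$ with its flip at $x_0$ and argue that $L$ must err on one of the pair — but at that point you have essentially reproduced the paper's argument with an extra case analysis on which of the two targets the model $M$ gets wrong, which the paper handles more cleanly by randomizing $f^*(x_0)$.
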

\begin{proof}
For the sake of contradiction, let us assume that whenever $L$ outputs a model, there always exists $x\in X$ that is not queried by $L$.
Consider the case where the target object $B^*\in \cB$ is drawn uniformly at random. That is, $B^*$ is induced by $f^*:X\to \{0,1\}$ chosen randomly such that for every $x\in X$, $f^*(x)$ is distributed independently and uniformly from $\{0,1\}$. When $L$ outputs a model $M$ given entry access to the random target model $B^*$, the conditional distribution of $f^*(x)$ for the $x\in X$ not queried by $L$ is still the uniform distribution over $\{0,1\}$, and thus with conditional probability at least $1/2$ over the randomness in $f^*$, it holds that $\Pr_{B\sim M, a\sim B(x)}[a = f^*(x)] \le 1/2$, and thus
\[
|\Pr[D_x^{B^*} = \acc] - \E_{B\sim M}[\Pr[D_x^B = \acc]]|\ge 1/2.
\]
By the law of total probability, with (unconditional) probability at least $1/2$, there exists a distingsher $D\in \cD$ such that
\[
|\Pr[D^{B^*} = \acc] - \E_{B\sim M}[\Pr[D^B = \acc]]|\ge 1/2.
\]
This contradicts the fact that $L$ is an $(\varepsilon,\delta)$-learner with $\varepsilon,\delta < 1/2$.
\end{proof}

\paragraph{Learned Model Needs to be Stronger than Distinguishers.}
The model learned in \cite{goldreich2010} can fool distinguishers with significantly larger circuit complexity than the model itself. Below we show that this can become impossible in our setup where the target is a single object. 
\begin{theorem}[Remark 1.6 in \cite{TrevisanTV09}]
\label{thm:strong-model}
Let $n,W > 1$ be positive integers satisfying $W\log W \le 2^n/C$ for a sufficiently large absolute constant $C > 0$.
There exists a sample-access object $B^*$ induced by a function $f^*:\{0,1\}^n \to \{0,1\}$ and a distinguisher $D$ with circuit complexity $\tilde O(nW)$ such that for any model $M$ with circuit complexity at most $W$, it holds that
\begin{equation}
\label{eq:complex-learner-0}
|\Pr[D^{B^*} = \acc] - \E_{B\sim M}[\Pr[D^B = \acc]]| > 1/3.
\end{equation}
\end{theorem}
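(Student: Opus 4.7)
The plan is a probabilistic counting argument. Any randomized circuit of size at most $W$ is specified by $O(W\log W)$ bits, so the number of distinct sample-access distributions of the form $\mu_M(x,y) := \Pr_{B\sim M,\,(x',y')\sim B(\bot)}[(x',y')=(x,y)]$ for $M$ of circuit complexity at most $W$ is at most $N = 2^{O(W\log W)}$. I will exhibit a single $f^*$ and a single $D$ that simultaneously distinguish $B^*$ from every such $\mu_M$.

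Set $m = \Theta(W\log W)$ (large enough to beat the union bound below) and let $D$ make $m$ independent queries to $B$, receiving samples $(x_1,y_1),\ldots,(x_m,y_m)$, and accept iff (a) $y_i = f^*(x_i)$ for every $i$ and (b) the inputs $x_1,\ldots,x_m$ are pairwise distinct. For $B = B^*$, condition (a) is automatic since each query of $B^*$ returns $(x, f^*(x))$, and (b) holds with probability at least $1 - m^2/2^n$ by the birthday bound. For a model $M$, writing $h(x) := \mu_M(x, f^*(x))$, a direct expansion of the joint probability over $m$ independent samples yields
\[
\Pr[D^B = \acc \mid B\sim M] \;=\; m!\cdot e_m(h),
\]
where $e_m$ is the $m$-th elementary symmetric polynomial in the values $\{h(x)\}_x$ (the $m!$ is because the sum ranges over ordered tuples of pairwise distinct $x_i$).

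The main technical step is a Markov bound over a uniformly random $f^*$. Writing $\nu_M(x) := \mu_M(x,0) + \mu_M(x,1)$, independence of the bits $\{f^*(x)\}_x$ at distinct inputs gives
\[
\E_{f^*}\!\bigl[m!\cdot e_m(h)\bigr] \;=\; \frac{m!}{2^m}\cdot e_m(\nu_M) \;\le\; \frac{1}{2^m},
\]
using Maclaurin's inequality together with $\sum_x \nu_M(x) = 1$. Markov then yields $\Pr_{f^*}[m!\,e_m(h) > 1/3] \le 3/2^m$, and my choice of $m$ makes $3N/2^m < 1$, so a union bound over all $N$ models produces an $f^*$ for which $\Pr[D^B = \acc \mid B\sim M] \le 1/3$ simultaneously for every $M$ of complexity $\le W$. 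Combined with $\Pr[D^{B^*} = \acc] \ge 1 - m^2/2^n$, this yields a distinguishing gap exceeding $1/3$.

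The main obstacle is quantitative calibration. The birthday term demands $m^2 \ll 2^n$, while the union bound pushes $m = \Theta(W\log W)$; in the upper range of the stated regime $W\log W \le 2^n/C$ I would need to refine the distinctness check (b) into a finer ``spread-detection'' gadget tuned against highly concentrated $\mu_M$ so that the birthday-style control survives for larger $m$. Moreover, to bring the circuit complexity of $D$ down to $\tilde O(nW)$, I would observe that only $m$-wise independence among the bits $\{f^*(x)\}_x$ enters the expectation computation, and so $f^*$ can be drawn from an explicit $O(W\log W)$-wise independent family whose point-evaluation cost is small; the parameters $m$, the independence level, and the evaluation cost of $f^*$ must then be balanced so that $m\cdot(\text{cost of evaluating }f^*) + O(m^2)$ fits within $\tilde O(nW)$ total gates. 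The counting-plus-Markov backbone is conceptually standard; the bulk of the work is in this parameter tuning.
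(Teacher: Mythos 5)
Your plan is a genuine departure from the paper's: the paper uses a \emph{single-query} distinguisher together with a concentration inequality for $k$-wise independent random variables, while you use a $\Theta(W\log W)$-query distinguisher and a Markov bound to avoid concentration. But your central formula is wrong: $m!\cdot e_m(h)$ does \emph{not} equal $\E_{B\sim M}[\Pr[D^B = \acc]]$. In the paper's indistinguishability definition, one first draws a single $B\sim M$ and then the distinguisher issues all $m$ queries to that same $B$; your formula instead corresponds to each query being answered by a fresh draw of $B$. For distinct $x_1,\dots,x_m$ the correct contribution is $\E_{B\sim M}\bigl[\prod_i \one[f_B(x_i)=f^*(x_i)]\bigr]$, not $\prod_i \E_{B\sim M}\bigl[\one[f_B(x_i)=f^*(x_i)]\bigr]$, and these differ whenever $M$'s functions are correlated across inputs: with $n=1$, $f^*\equiv 0$ and $M$ uniform over the two constant functions, the true acceptance probability for $m=2$ is $1/4$ while $m!\,e_m(h)=1/8$. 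Since the discrepancy can go either way, the formula is neither an upper nor a lower bound. The Markov step can be repaired by taking the $f^*$-expectation first: for every fixed $B$, $m$-wise independence of $f^*$ gives $\E_{f^*}\Pr[D^B=\acc]\le 2^{-m}$, and averaging over $B\sim M$ preserves this, yielding $\E_{f^*}\E_{B\sim M}\Pr[D^B=\acc]\le 2^{-m}$ before applying Markov and the union bound.

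Even with that repair, the multi-query structure cannot reach the theorem's full parameter regime or its circuit-size bound, as you yourself flag. The distinctness gadget forces $m^2\ll 2^n$ while the union bound forces $m=\Omega(W\log W)$, so you only cover $W\log W\lesssim 2^{n/2}$, strictly weaker than $W\log W\le 2^n/C$; and evaluating a $\Theta(W\log W)$-wise independent $f^*$ at $m=\Theta(W\log W)$ points (plus the pairwise distinctness checks) gives a distinguisher of size $\tilde O(W^2 n)$ rather than $\tilde O(nW)$. The paper sidesteps both problems with a single-query $D$ that accepts iff $y=f^*(x)$: then $\Pr[D^{B^*}=\acc]=1$ with no birthday term, and $D$ evaluates $f^*$ once, giving circuit size $\tilde O(kn)=\tilde O(nW)$. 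The price is that instead of $\E_{f^*}[\cdot]\le 2^{-m}$ plus Markov one needs the stronger statement that, with probability $1-2^{-\Omega(k)}$ over a $k$-wise independent $f^*$, the single-query acceptance probability $\frac{1}{2^n}\sum_x\Pr_{B\sim M}[f_B(x)=f^*(x)]$ stays below $2/3$; this is exactly what concentration inequalities for sums of $k$-wise independent bounded random variables provide. That concentration step is the ingredient your writeup is missing and is what lets the paper reach the full regime with the correct distinguisher size.
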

\begin{proof}
Define $k := CW\log W \le 2^n$.
By known constructions of $k$-wise independent hash functions \cite{MR532173,CarterWe81}, there exists a distribution over $f^*:\{0,1\}^n\to \{0,1\}$ with the following properties:
\begin{enumerate}
\item for any distinct $x_1,\ldots,x_k\in \{0,1\}^n$, the $k$-tuple $(f^*(x_1),\ldots,f^*(x_k))$ is distributed uniformly from $\{0,1\}^k$;
\item every $f^*$ drawn from the distribution has circuit complexity $\tilde O(nk) = \tilde O(nW)$.
\end{enumerate}
Consider any fixed model $M$. By concentration inequalities for $k$-wise independent random variables \cite{365687}, it holds that 
\[
\Pr_{f^*}\Big[\Pr_{B\sim M, (x,y)\sim B(\bot)}[y = f^*(x)]\ge  2/3 \Big ] < 2^{-\Omega(k)} = 2^{-\Omega(CW\log W)}.
\]
Choosing $C$ to be sufficiently large and applying the union bound over all $2^{O(W\log W)}$ models of circuit size at most $S$, there exists $f^*$ with circuit complexity $\tilde O(nW)$ such that for every model $M$ with circuit size at most $W$,
\begin{equation}
\label{eq:complex-learner-1}
\Pr_{B\sim M, (x,y)\sim B(\bot)}[y = f^*(x)]<  2/3.
\end{equation}
Now we choose $D$ to be the distinguisher that on a sample $(x,y)$ outputs $\acc$ if and only if $y = f^*(x)$. Inequality \eqref{eq:complex-learner-1} implies
\begin{equation}
\label{eq:complex-learner-2}
\E_{B\sim M}[\Pr[D^B = \acc]] < 2/3.
\end{equation}
Let $B^*$ be the object induced by $f^*$. By our definition of $D$ and $B^*$, we have $\Pr[D^{B^*} = \acc] = 1$. Combining this with \eqref{eq:complex-learner-2}, we get \eqref{eq:complex-learner-0} as desired.
\end{proof}
\paragraph{The Distinguisher Class Needs to be Learnable.}
Since the target distribution in \cite{goldreich2010} is fixed, no learning is needed in order to produce an indistinguishable model. In our setup, the learning task is usually performed using an auditor, which can be viewed as a weak agnostic learner for the class of distinguishers. A natural question is whether we can still achieve indistinguishability if such a weak agnostic learner does not exist. 
Previous works \cite{hebert2018multicalibration,gopalan_et_al:LIPIcs.ITCS.2023.60} have shown negative answers to this question for certain notions of indistinguishability (such as calibrated multiaccuracy) by showing that these notions imply (strong) agnostic learning for the distinguisher class. The indistinguishability notion we use for generative models is closer to multiaccuracy, and below we show that efficiently achieving this notion requires the distinguisher class to be efficiently realizably learnable. For a true function $f^*:X\to \{0,1\}$, multiaccuracy requires a predictor $p:X\to [0,1]$ to satisfy
\begin{equation}
\label{eq:impossibility-ma}
|\E[(f^*(x) - p(x))g(x)]| \le \varepsilon
\end{equation}
for every function $g$ in a class $\cG$. Now consider the case where $\cG$ consists of functions $g:X\to \{-1,1\}$. For an arbitrary $g^*\in \cG$, suppose the true function $f^*$ satisfies $f^*(x) = 1$ if $g^*(x) = 1$ and $f^*(x) = 0$ if $g^*(x) = -1$. Then \eqref{eq:impossibility-ma} implies
\begin{equation}
\label{eq:impossibility-ma-1}
\E|f^*(x) - p(x)| \le \varepsilon.
\end{equation}
Now we define $\hat g(x) = 1$ if $p(x) \ge 1/2$, and define $\hat g(x) = -1$ if $p(x) < 1/2$. It is easy to check that if $\hat g(x)\ne g^*(x)$ for some $x\in X$, then $|f^*(x) - p(x)| \ge 1/2$, and thus \eqref{eq:impossibility-ma-1} implies the following realizable learning guarantee for the class $\cG$:
\[
\Pr[\hat g(x)\ne g^*(x)] \le 2\varepsilon.
\]

\bibliographystyle{alpha}
\bibliography{ref}
\end{document}